\newtheorem{theorem}{Theorem}[section]
\newtheorem{lemma}[theorem]{Lemma}
\newtheorem{proposition}[theorem]{Proposition}
\newtheorem{corollary}[theorem]{Corollary}
\theoremstyle{definition}
\newtheorem{definition}[theorem]{Definition}
\theoremstyle{remark}
\newtheorem{remark}[theorem]{Remark}
\numberwithin{equation}{section}
\newcommand {\beq} {\begin{equation}}
\newcommand {\eeq} {\end{equation}}
\newcommand{\rmnum}[1]{\romannumeral #1}
\newcommand{\Rmnum}[1]
\begin{document}

\begin{frontmatter}

\title{Emergence of purely nonlinear localized states with frequencies exited from spectral bands}
\author[1]{Huajie Song \corref{cor1}} \ead{songhj@hust.edu.cn}
\author[1]{Haitao Xu} \ead{hxumath@hust.edu.cn}

\address[1]{Center for Mathematical Sciences, Huazhong University of Science and Technology, Wuhan, Hubei 430074, People's Republic of China}

\cortext[cor1]{Corresponding author}

\begin{abstract}

In this work, we revisit the classic model of diatomic chain with cubic nonlinearity and investigate the formation mechanism of nonlinear localized time-periodic solutions (breathers) with frequencies exited the spectral bands. First we employ the long-chain limit to obtain estimates of linear eigenstates, especially those with frequencies near the band edges. As the strength of nonlinearity grows, some frequencies can cross the band edges to turn isolated while the corresponding states gradually change from non-localized to localized. Based on the estimates of linear eigenstates, we derive analytical approximations of these nonlinear states and prove their validity for frequencies within the bands. Moreover, the process of states growing localized are illustrated in both analytical and numerical approaches. Although here we place emphasis on nonlinear middle-localized states with the most generic boundary conditions, the results can also be extended to a wider range of localized states including edge states.
~\\~\\
\noindent{\textbf{Keywords}: Diatomic chain, Discrete breathers, Edge states, Long-chain limit, Band edge, Localization formation}

\end{abstract}

\end{frontmatter}

\section{Introduction}
\label{sec:intro}

The localization of energy in lattices has been a classical topic for years. In nonlinear lattices the localized states such as solitons and breathers have been extensively studied and relevant researches have emerged in many different fields~\cite{FPU, Yaroslav, CDN, Lederer, Kartashov1,Kartashov2,Fibich, Malomed, Ovchinnikov, Kosevich, Sievers, Flach, Flach2, Campbell2}. On the other hand, topological insulators and edge states in recent decades have provided various examples and applications for localized states in linear lattices, which have attracted tremendous attention of researchers~\cite{Bernevig. B. A, Klitzing. K. V,Kane1,Kane2,Bernevig}.
Topological insulators are usually periodically-structured lattices with "edge states" propagating or being localized at the boundary (such as edges, corners and surfaces).  The existence of these localized states at the boundary is related to bulk topological invariants, which is also known as "bulk-boundary correspondence". As a result, the edge states are robust regarding perturbations that do not change the bulk topology, which can also been characterized by the topology of the energy bands. This framework has emerged essentially for linear systems and it has been powerful in analyzing and developing structures and materials in various fields, such as electronic systems, photonics and phononics.

More recently, there have been growing interests on the topological materials with nonlinearity~\cite{Zhang2020,Kirsch,Rajesh2022,Magnus}. Among others an important problem is the effects of nonlinearity on the existence of edge states (or topologically-protected states) and it mainly has two branches: One is whether linear edge states continue to be localized as the strength of nonlinearity grows; The other is what new edge states will appear in the nonlinear regime. 
For the first question, it has been commonly observed that the edge states with nonlinearity can possibly remain localized at the boundary while the corresponding energy (or frequency) becomes amplitude-dependent. For example, there have been studies on the existence~\cite{Y. Hadad2016,Y. Hadad2018,R. S. Savelev2018,R. Chaunsali2019,F. Zangeneh2019} and stability\cite{A. Bisianov2019,K. Mochizuki2020,Y. Lumer2016,Rajesh Chaunsali2020,E. Prodan2009,T. Shi2017} of nonlinear edge states continued from the linear limit. On the other hand, the latter concerns purely nonlinear states and we will place our focus on this more exotic branch. To be specific, since the frequency of a localized state must be outside the bands, one possible approach to constructing a purely nonlinear edge state is making a frequency exit the bands and become isolated. It is worthy noticing that this is a typical scenario for the emergence of nonlinear edge states in finite-size lattices but existing explanation for its mechanism is still limited~\cite{D.K. Campbell,Aubry1}. Therefore it is our goal of this work to characterize the states with frequencies near the band edges and understand the process of these frequencies exiting the bands and the states becoming localized. 

Here we consider the model of a finite-size long diatomic chain with nonlinear interactions and study the localized states (perhaps including the edge states and the middle-localized states) emerged as the strength of nonlinearity grows. The eigenstates in the corresponding linear chain with frequencies outside the bands or near the band edges can be well approximated in the long chain limit. This enables us to perform expansion analysis and track the change of them from the linear limit to the nonlinear regime.

The structure of this article is arranged as follows. In Section~\ref{sec:2}, we introduce the linear model of one-dimensional diatomic chain and explicitly calculate the eigenfrequencies and eigenstates; Then Section~\ref{sec:edgestates} shows the existence of edge states in long linear chains and its dependence on boundary conditions; In Section~\ref{sec:estimate}, estimates on the eigenfrequencies near the band edges and the corresponding eigenstates in linear chains are discussed; In Section~\ref{add nonlinearility}, with all the estimates on the linear states from previous sections, we obtain an approximation of the state with frequency near the band edge in the weakly nonlinear regime. Then the process of this state getting localized is analytically and numerically studied.

\section{One-dimensional linear diatomic chain}
\label{sec:2}

In this section, we start with a simple linear diatomic chain with identical masses $m$ (for simplicity we assume $m=1$) and alternating nearest-neighbour interactions. This is effectively a spring-mass system, whose equations of motion are as follows
%
\begin{equation}
\begin{split}
  \ddot{q}_{1} = & k_{3,1}(0-q_{1})+k_{1}(q_{2}-q_{1}),\\
  \ddot{q}_{2m} = & k_{1}(q_{2m-1}-q_{2m})+k_{2}(q_{2m+1}-q_{2m}), \\
  \ddot{q}_{2m+1} = & k_{2}(q_{2m}-q_{2m+1})+k_{1}(q_{2m+2}-q_{2m+1}),1\leq m\leq n-1 \\
  \ddot{q}_{2n} = & k_{1}(q_{2n-1}-q_{2n})+k_{3,2}(0-q_{2n}),
\end{split}
\label{eq:motion}
\end{equation}
where $q_j$ denotes the displacement of 
$j$-th mass, double dots denote differentiation twice in time, $k_1$ and $k_2$ are the two alternating spring stiffness constants, $k_{3,1}$ and $k_{3,2}$ are the stiffness constants of two springs connected to the left and right ends, respectively. Unless later specifically explained, it will be assumed that the stiffness constants are non-negative, $k_{3,1}^2+k_{3,2}^2\neq 0$ and $k_{2}>k_{1}>0$.

In the matrix form, the equations \eqref{eq:motion} can be rewritten as 
\begin{equation}
  \ddot{{q}}=\mathcal{L}{q}
  \label{eq:motion1}
\end{equation}
where ${q}=(q_{1},q_{2},
\cdots,q_{2n})^{\top}$. 
It can be easily seen that the matrix $\mathcal{L}$ is
diagonally-dominant, symmetric and invertible. Focusing on the time-periodic solutions and adopting the ansatz ${q}(t)={u}e^{i\omega t}$, we find that the matrix $\mathcal{L}$ 
has $2n$ distinct negative eigenvalues 
$\{-(\omega^{(j)})^{2}\}_{1\leq j\leq 2n}$ and the corresponding eigenvectors are denoted by $\{{u^{(j)}}\}_{1\leq j\leq 2n}$. 
\begin{remark}
If $k_{3,1}=k_{3,2}$, then the chain is symmetric about the middle. As a result, if $\vec{u}=(u_1,u_2,\cdots,u_{2n})^T$ is an eigenvector of $\mathcal{L}$ for eigenvalue $-\omega^2$, then $\vec{u'}=(u_{2n},u_{2n-1},\cdots,u_{1})^T$ also satisfies $-\omega^2 \vec{u'}=\mathcal{L}\vec{u'}$. Since the eigenvalues of $\mathcal{L}$ are distinct, there exists some $c\neq 0$ such that $\vec{u}=c\vec{u'}$. 
\end{remark}

Due to periodic structure of the chain, the adjacent two masses form a unit cell and two adjacent unit cells satisfy the following relation
\begin{equation}
\begin{split}
  -\omega^{2}u_{2m+1}&= k_{2}(u_{2m}-u_{2m+1})+k_{1}(u_{2m+2}-u_{2m+1}), \\
  -\omega^{2}u_{2m+2}&= k_{1}(u_{2m+1}-u_{2m+2})+k_{2}(u_{2m+3}-u_{2m+2}), \quad 1\leq m < n-2
\end{split}
\end{equation}
or
\begin{equation}
\begin{split}
  \left(
  \begin{array}{c}
    u_{2m+1} \\
    u_{2m+2}
  \end{array}
  \right)
  =&
  \frac{1}{k_{1}k_{2}}
  \left(
  \begin{array}{cc}
    -k_{1}^{2} & -k_{1}(\omega^{2}-k_{1}-k_{2}) \\
    k_{1}(\omega^{2}-k_{1}-k_{2}) & (\omega^{2}-k_{1}-k_{2})^{2}-k_{2}^{2}
  \end{array}
  \right)
  \left(
  \begin{array}{c}
    u_{2m-1} \\
    u_{2m}
  \end{array}
  \right)\\
  =&
  \mathcal{T}(\omega)
  \left(
  \begin{array}{c}
    u_{2m-1} \\
    u_{2m}
  \end{array}
  \right).
  \end{split}
  \label{eq:iteration}
\end{equation}
It should be noticed that this relation \eqref{eq:iteration} also holds for infinite diatomic chains where $\omega^2$ can take negative values (meaning time-growing or time-decaying solutions). However in what follows, we will only place emphasis on the time-periodic solutions where $\omega^2\geq 0$. 

If the eigenvalues of $\mathcal{T}$ are denoted by $\lambda$, then they can be calculated from
\begin{equation}
\label{iterative eigenvalue}
  \lambda^{2}+\lambda(\frac{-\omega^{4}
  +2\omega^{2}(k_{1}+k_{2})-2k_{1}k_{2}}
  {k_{1}k_{2}})+1=0
\end{equation}
where two roots of the above equation satisfy $\lambda_{1}\lambda_{2}=1$ and $\lambda_1+\lambda_2\in\mathbb{R}$. This implies that 
\begin{itemize}
\item either (A1): $|\lambda|=1$ (when $\omega^2\in [0,2k_{1}]\cup[2k_{2},2k_{1}+2k_{2}]$);
\item or (A2): $\lambda\in\mathbb{R}$ and $\lambda\neq \pm 1$ (when $\omega^2\in (2k_{1},2k_2)\cup(2k_{1}+2k_2,\infty)$). 
\end{itemize}
Without loss of generality, we assume that $|\lambda_1|\leq|\lambda_2|$ and use $a$ to denote the eigenvalue $\lambda_1$, then the eigenfrequency $\omega$ can be expressed as
\begin{equation}
  \omega^{2}=k_{1}+k_{2}
  \pm\sqrt{(k_{1}+k_{2}a)
  (k_{1}+k_{2}/a)}.
  \label{eq:omega}
\end{equation}
As a special case, the eigenvalue of $\mathcal{T}$ is $1$ or $-1$ (with algebraic multiplicity two and geometric multiplicity one) when $\omega^2$ sits exactly at some edge of the two bands $[0,2k_{1}]\cup[2k_{2},2k_{1}+2k_{2}]$, namely $\omega^2\in\{0, 2k_1, 2k_2, 2k_1+2k_2\}$. On the other hand, in the generic situation where $a\neq\pm1$ and $(k_{1}+k_{2}a)(k_{1}+k_{2}/a)\geq 0$, the eigenvectors of $\mathcal{T}$ are explicitly
\begin{equation}
\begin{split}
  \vec{v}_{1}(a) =
  \left(
  \begin{array}{c}
    v_{11} \\
    \sigma v_{12}
  \end{array}
  \right)=
  \left(
  \begin{array}{c}
    \sqrt{k_{1}+k_{2}/a} \\
    \sigma\sqrt{k_{1}+k_{2}a}
  \end{array}
  \right);
  \vec{v}_{2}(a) =
    \left(
  \begin{array}{c}
    v_{21} \\
   \sigma v_{22}
  \end{array}
  \right)=
  \left(
  \begin{array}{c}
    \sqrt{k_{1}+k_{2}a} \\
    \sigma\sqrt{k_{1}+k_{2}/a}
  \end{array}
  \right)
\end{split}
\label{eq:eigenvector}
\end{equation}
for the eigenvalues $a$ and $1/a$,
respectively, where $\omega^2$ here adopts the form
\begin{equation}
  \omega^{2}=k_{1}+k_{2}-
  \sigma\sqrt{k_{1}+k_{2}a}\sqrt{k_{1}+k_{2}/a}, \quad \sigma=\pm 1.
\end{equation}
To obtain the corresponding eigenvector $u$ of $\mathcal{L}$, if $(u_1, u_2)^T=c_1 v_1+ c_2 v_2$, then $(u_{2m+1}, u_{2m+2})^T=c_1 a^m v_1+c_2 a^{-m} v_2$. 
It can be observed that choosing $c_2=0$ ($c_1=0$) and $|a|<1$ gives an eigenvector $u$ localized at the left (right) edge of the chain and a solution of this shape is usually known as an ``edge state'' in one-dimensional lattices.

In a relatively long chain, a typical scenario is that most of the frequencies $(\omega^{(j)})^{2}$ belong to the two bands $[0,2k_1]\cup[2 k_2, 2k_1+2k_2]$ (usually called ``acoustic band'' and ``optical band'' respectively) and no more than two frequencies fall outside the bands. For those outside frequencies, although $c_1$ and $c_2$ in their corresponding eigenvectors generically do not vanish, we will show in the next section that they commonly give `` edge states'' in diatomic chains for a quite wide range of choices for boundary springs $k_{3,1}$ and $k_{3,2}$.

\section{``Edge states'' in a long linear diatomic chain}
\label{sec:edgestates}

 It can be inferred from the previous section that if $k_{3,1}$ and $k_{3,2}$ are carefully chosen such that $\mathcal{L}$ has an eigenstate $u$ with $|a|<1$ and $c_1 c_2=0$, then this state is a genuine edge state. On the other hand, observations suggest that states looking localized at one end are actually common in long chains and the constraints on $k_{3,1}$ and $k_{3,2}$ can be possibly relaxed.
%
%
To be specific, the chains studied in this section are assumed to be long enough (namely $n\gg 1$) and we consider ``edge states'' in the sense 
$$
(u_{2m-1}, u_{2m})\approx (u_{1}, u_{2})a^{m-1},~{\rm or}~(u_{1}, u_{2})\approx a^{m-1}(u_{2m-1}, u_{2m})
$$
Apparently the edge states with $c_1 c_2=0$ are included in this category and we will show that there exist other localized states with $|c_1|\ll |c_2|$ or $|c_2|\ll |c_1|$. Focusing on edge states, we first assume that $1-|a|\sim O(1)$ (then $|a-1|\sim O(1)$ and $|a+1|\sim O(1)$) and the case $a\approx \pm 1$ will be discussed separately.

In the same spirit of section~\ref{sec:2}, if $|a|<1$ we write the left end as
$
(u_{1},u_{2})^{\top}=c_{1}\vec{v}_{1}+c_{2}\vec{v}_{2}
$,
then the right end yields $(u_{2n-1},u_{2n})^{\top}=c_{1}a^{n-1}\vec{v}_{1}+c_{2}a^{1-n}\vec{v}_{2}$. 
Substituting these into $-\omega^2 u=\mathcal{L} u$, we get the following equations
\begin{eqnarray}
\label{eq:boundary conditions_1}
  \frac{u_{1}}{u_{2}}&
  =&\frac{c_{1}v_{11}+c_{2}v_{12}}
  {\sigma c_{1}v_{12}+\sigma c_{2}v_{11}}=\frac{k_{1}}
  {k_{1}+k_{3,1}-\omega^{2}}
  \\
  \label{eq:boundary conditions_2}
  \frac{u_{2n-1}}{u_{2n}}&
  =&\frac{c_{1}v_{11}a^{n-1}+c_{2}v_{12}a^{1-n}}
  {\sigma c_{1}v_{12}a^{n-1}
  +\sigma c_{2}v_{11}a^{1-n}}
  =\frac{k_{1}+k_{3,2}-\omega^{2}}{k_{1}}
\end{eqnarray}
where $\omega$ is defined in (\ref{eq:omega}).
Without loss of generality, we assume $c_{1}=1$ (the case $c_1=0$ can be studied by assuming $c_2=1$ due to the symmetry) for simplicity. Then \eqref{eq:boundary conditions_1} becomes
\begin{equation}
  (k_{1}+k_{3,1}-\omega^{2})v_{11}
  +(k_{1}+k_{3,1}-\omega^{2})c_{2}v_{12}
  =\sigma k_{1}v_{12}+\sigma k_{1}c_{2}v_{11}.
  \label{eq:1}
\end{equation}
At first, we consider the special case with $c_{2}=0$, which corresponds to a genuine edge state localized at the left end. Substituting \eqref{eq:eigenvector} into \eqref{eq:1}, then we obtain the equation of $a$ for the solution with $c_2=0$ (this specific $a$ is denoted by $\tilde{a}$):
\begin{equation}
  (k_{3,1}-k_{2})^{2}k_{1}\tilde{a}^{2}
  +[k_{2}(k_{3,1}-k_{2})^{2}
  -k_{2}^{3}]\tilde{a}-k_{1}k_{2}^{2}=0
\label{eq:5}  
\end{equation}
Here \eqref{eq:5} becomes degenerate and belongs to a special situation when $k_{3,1}=k_{2}$, while $|k_{3,1}-k_2|=k_2$ leads to another special case with $\tilde{a}=\pm 1$. In the generic case with $(k_{3,1}-k_2)[(k_{3,1}-k_2)^2-k_2^2]\neq 0$, \eqref{eq:5} always has one root $|\tilde{a}_1|<1$ and another root $|\tilde{a}_2|>1$. In what follows, we will discuss the linear ``edge states'' in three cases based on whether $k_{3,1}\approx k_{2}$ or $|k_{3,1}-k_2|\approx k_2$.

\subsection{Special Case: $k_{3,1}\approx k_{2}$, $a\approx -\frac{k_1}{k_2}$}
\label{subsec:k31_approx_k2}
\subsubsection{Subcase: $k_{3,1}=k_2$}

First we assume $k_{3,1}=k_2$, then equation \eqref{eq:5} yields $\tilde{a}=-\frac{k_{1}}{k_{2}}$ which corresponds to the the edge state with $c_{2}=0$. However, after substituting these into \eqref{eq:boundary conditions_2}, we find that the equation will only hold if $k_{3,2}=\infty$. In other words, the edge state with $k_{3,1}=k_2$ and $c_{2}=0$ only exists in semi-infinite chains but not in realistic finite chains with $k_{3,2}<\infty$.
For this reason, we define $\tilde{k}_{3,2}
 =\frac{1}{k_{3,2}}$ so that $c_2=0$ now leads to $\tilde{k}_{3,2}=0$. Accordingly equation \eqref{eq:boundary conditions_2} becomes
  \begin{equation}
 \frac{v_{11}a^{n-1}+c_{2}v_{12}a^{1-n}}
  {\sigma v_{12}a^{n-1}
  +\sigma c_{2}v_{11}a^{1-n}}
  =\frac{\tilde{k}_{3,2}k_{1}+1-\tilde{k}_{3,2}\omega^{2}}{\tilde{k}_{3,2}k_{1}}
  \label{eq:boundary conditions_2 k32}
  \end{equation}
To study the "edge states" in long ($n\gg 1$) finite chains, we perturb $\tilde{a}$ as $a=-\frac{k_1}{k_2}+\Delta a$ where $0<-\Delta a\ll 1$. This implies that 
\begin{equation}
\label{eq:eigenvector_app}
v_{12}=\sqrt{k_2\Delta a}=i\sqrt{-k_2\Delta a}, \quad v_{11}=\sqrt{\frac{k_1^2-k_2^2}{k_1}}+\mathcal{O}(\Delta a)\approx i\sqrt{\frac{k_2^2-k_1^2}{k_1}}.
\end{equation}
Plugging these into \eqref{eq:boundary conditions_1}, we obtain 
\begin{equation}
\label{eq:3}
\begin{split}
  c_{2}(a)&=-\frac{(k_{3,1}-k_2)\sqrt{k_1+k_2/a}+\sigma k_2/a\sqrt{k_1+k_2 a}}{(k_{3,1}-k_2)\sqrt{k_1+k_2 a}+\sigma k_2 a\sqrt{k_1+k_2/a}} \\
  &=-\frac{\sqrt{k_{1}+k_{2}a}}{a^{2}
  \sqrt{k_{1}+k_{2}/a}}=
  -\frac{k_{2}^{2}}{k_{1}^{2}}
  \sqrt{\frac{k_{1}k_{2}\Delta a}
  {k_{1}^{2}-k_{2}^{2}}}+\mathcal{O}
  ((\Delta a)^{\frac{3}{2}}).
\end{split}
\end{equation}
On the other hand, ${k}_{3,2}$ can be explicitly expressed from \eqref{eq:boundary conditions_2 k32} as 
\begin{equation}
\label{eq:k32_original}
k_{3,2}=\omega^2-k_1+\sigma k_1\frac{v_{11}+v_{12}c_2 a^{2-2n}}{v_{12}+v_{11}c_2 a^{2-2n}}
\end{equation}
or 
\begin{equation}
\label{eq:k32}
   \tilde{k}_{3,2} =
   \frac{\sigma v_{12}a^{2n-2}+\sigma c_{2}v_{11}}
   {k_{1}(v_{11}a^{2n-2}+c_{2}v_{12})-(k_{1}-\omega^{2})
   (\sigma v_{12}a^{2n-2}+\sigma c_{2}v_{11})}
\end{equation}
where $\omega^2\approx k_1+k_2+\sigma \sqrt{\frac{k_2\Delta a(k_1^2-k_2^2)}{k_1}}$. Here $\tilde{k}_{3,2}$ or $k_{3,2}$ is a function of $a$ and particularly $\tilde{k}_{3,2}(-\frac{k_{1}}{k_{2}})=0$. 
Under the assumptions $n\gg 1$ and $|\Delta a|\ll 1$ (hence $a^{n}=(-\frac{k_1}{k_2}+\Delta a)^n\ll 1$), the following results can be derived:
\begin{eqnarray}
&&|v_{12}a^{2n-2}|\sim\mathcal{O}(a^{2n-2}\sqrt{\Delta a})\ll |c_2v_{11}|\sim\mathcal{O}(\sqrt{\Delta a}), \quad\quad\\
    &&\tilde{k}_{3,2}
    \approx
    \frac{\sigma c_{2}v_{11}}{k_{1}v_{11}a^{2n-2}+k_{2}
    \sigma c_{2}v_{11}}
    \approx
    \frac{-\sigma\frac{k_{2}^{2}\sqrt{-k_{2}\Delta a}}{k_{1}^{2}}}
    { k_{1}\sqrt{\frac{k_{2}^{2}-k_{1}^{2}}{k_{1}}}a^{2n-2}-\sigma k_2\frac{k_{2}^{2}\sqrt{-k_{2}\Delta a}}{k_{1}^{2}}}.\quad\quad
    \label{tilde k32}
\end{eqnarray}
Since the order of $\tilde{k}_{3,2}$ in \eqref{tilde k32} depends on the relation between $a^{2n-2}$ and $\sqrt{\Delta a}$, then we consider the following three cases:
\begin{itemize}
  \item $|a^{2n-2}|
  >>\sqrt{|\Delta a|}$
  
Now \eqref{tilde k32} yields 
\begin{equation}
  \tilde{k}_{3,2}
\approx-\frac{\sigma
\frac{k_{2}^{2}}
{k_{1}^{2}}\sqrt{-\Delta a}}
{k_{1}\sqrt{\frac{k_{2}^{2}-k_{1}^{2}}{k_{1}}}a^{2n-2}}\sim O(\frac{\sqrt{|\Delta a|}}{a^{2n-2}})\ll 1
\label{order of tilde k32 1}
\end{equation}
hence $k_{3,2}\sim\mathcal{O}(\frac{a^{2n-2}}{\sqrt{|\Delta a|}})\gg 1$. This is the case when $|\Delta a|$ is very small such that $|\Delta a|\ll (\frac{k_1}{k_2})^{4n-4}$.
Now the eigenstate $u$ reads
\begin{equation}
    (u_{2m-1},u_{2m})^{\top}
    =a^{m-1}\vec{v}_{1}
    +c_{2}a^{1-m}\vec{v}_{2},\quad
    (1\leq m\leq n)
    \label{iteration m-th}.
\end{equation}
We know
\begin{equation}
a^{m-1}\approx
(-\frac{k_{1}}{k_{2}})^{m-1};
\quad
|c_{2}a^{1-m}|\ll\mathcal{O}
((\frac{k_{1}}
{k_{2}})^{2n-m-1})
\end{equation}
hence $a^{m-1}\vec{v}_{1}$ plays a dominant role in the shape of the state. Therefore, the eigenstate is a localized mode similar to the state with $c_{2}=0$, namely an edge state localized at the left.
However the corresponding large $k_{3,2}$ may be difficult to implement in real chains.

  \item $a^{2n-2}\sim\mathcal{O}(\sqrt{|\Delta a|})$

As $|\Delta a|$ grows, we encounter this scenario where $|\Delta a|\sim (\frac{k_1}{k_2})^{4n-4}$. Then it can be inferred that $k_{3,2}\approx k_2+\sigma \frac{k_1}{c_2 a^{2-2n}}\leq O(1)$ and $k_{3,2}\not\approx k_2$.

Similar to the discussion on \eqref{iteration m-th}, now we have
\begin{equation}
a^{m-1}\approx
(-\frac{k_{1}}{k_{2}})^{m-1};
\quad
c_{2}a^{1-m}\sim\mathcal{O}
((-\frac{k_{1}}
{k_{2}})^{2n-m-1})
\end{equation}
thus $a^{m}\vec{v}_{1}$ still be the dominant part in the eigenstate for most of the time. It should be noted that at the right end of the chain ($m=n$) the order of $c_{2}a^{1-m}$ just catches up with $a^{m-1}$ thus the state $u$ still looks like the case $c_2=0$. 

  \item  $|a^{2n-2}
|<<\sqrt{|\Delta a|}$

Similarly \eqref{eq:k32_original} implies
\begin{equation}
{k}_{3,2}
    \approx k_2+\sigma[ k_2^2\sqrt{\frac{k_2\Delta a}{k_1(k_1^2-k_2^2)}}+\frac{k_1}{c_2 a^{2-2n}} ]\approx k_2
   \label{order of Delta k32 2}
\end{equation}
As $a$ deviates from $\tilde{a}=-\frac{k_1}{k_2}$, $c_2$ in \eqref{iteration m-th} also grows and the state $u$ accordingly looks more localized at both ends. Moreover, as $k_{3,2}$ approaches $k_2$, the order of $\Delta a$ becomes close to $a^{2n-2}$. In particular, when $k_{3,1}=k_2=k_{3,2}$ the eigenstate is symmetric such that $c_2^2=a^{2n-2}$(see Fig.~\ref{fig:edge state}(a) as an example). In order to focus on states localized at one end, we restrict ourselves to previous two cases (especially the case $a^{2n-2}\sim\mathcal{O}(\sqrt{|\Delta a|})$ where the value of $k_{3,2}$ is easier to achieve). For example, a left edge states with $k_{3,1}=k_2$ in a chain of length $100$ is shown in Fig.~\ref{fig:edge state}(b). 
\end{itemize}

\begin{figure}[!htp]
\begin{minipage}{0.5\linewidth}
\leftline{(a)}
\centerline{\includegraphics[height = 5cm, width = 6cm]{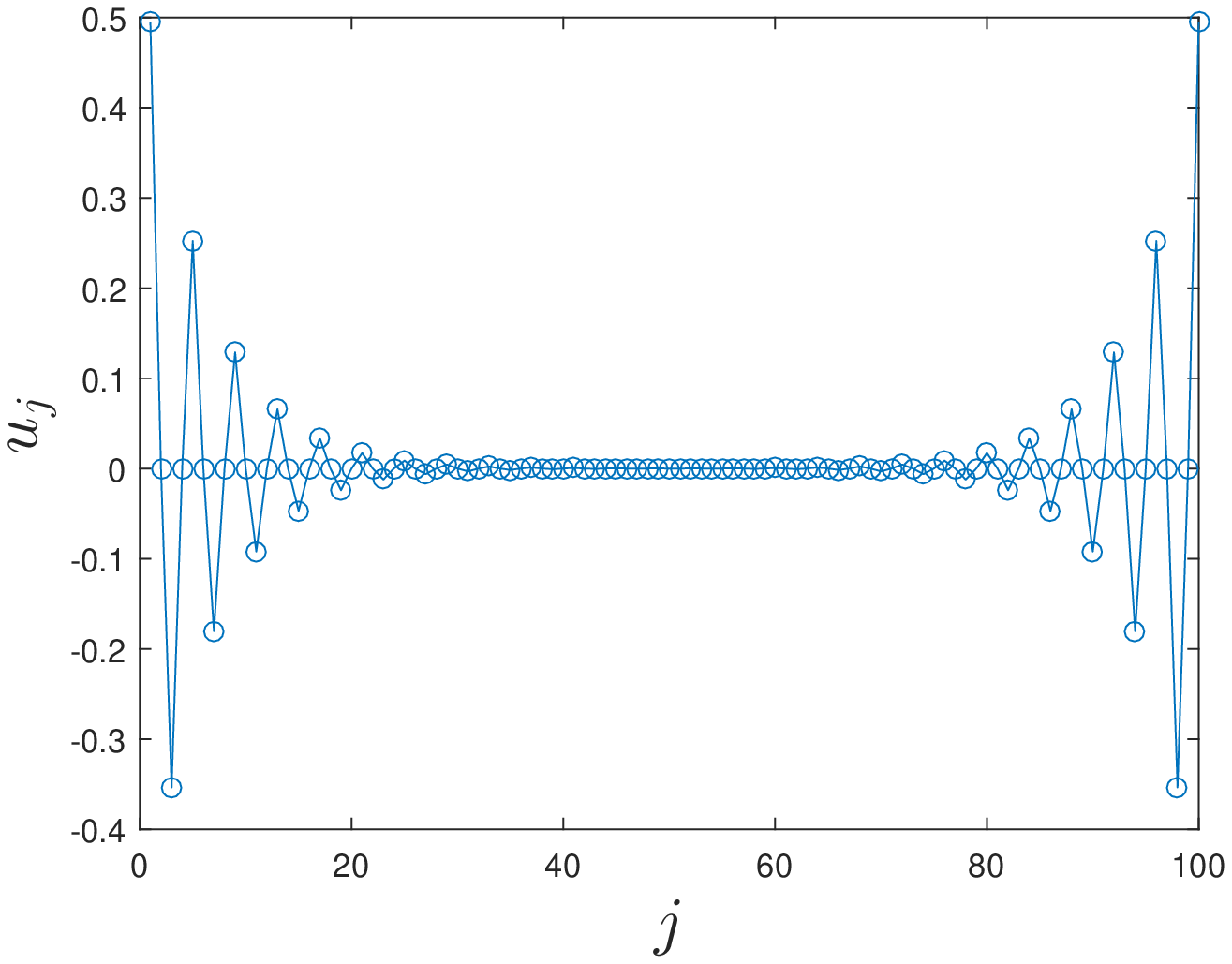}}
\end{minipage}
\hfill
\begin{minipage}{0.5\linewidth}
\leftline{(b)}
\centerline{\includegraphics[height = 5cm, width = 6cm]{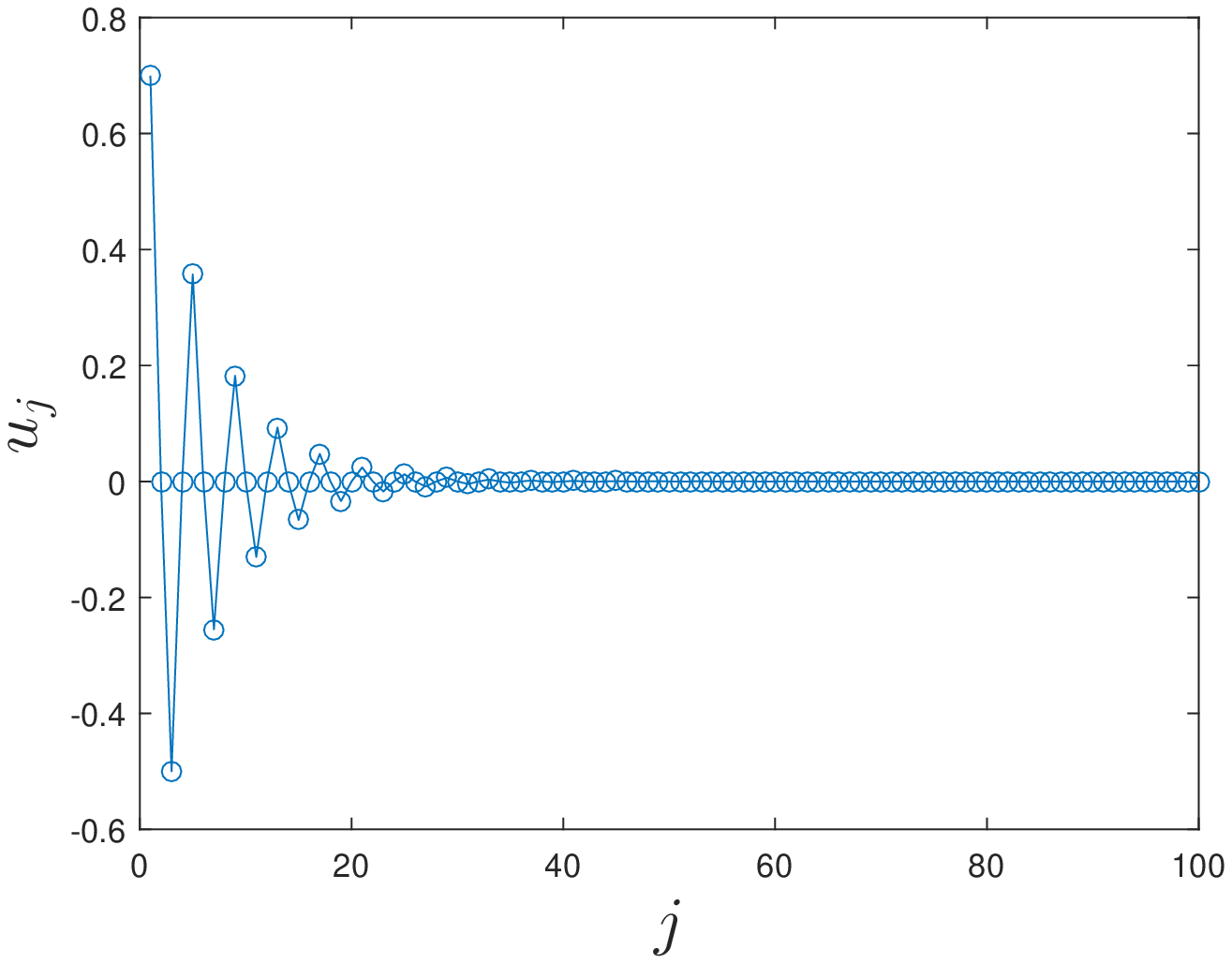}}
\end{minipage}

\begin{minipage}{0.5\linewidth}
\leftline{(c)}
\centerline{\includegraphics[height = 5cm, width = 6cm]
{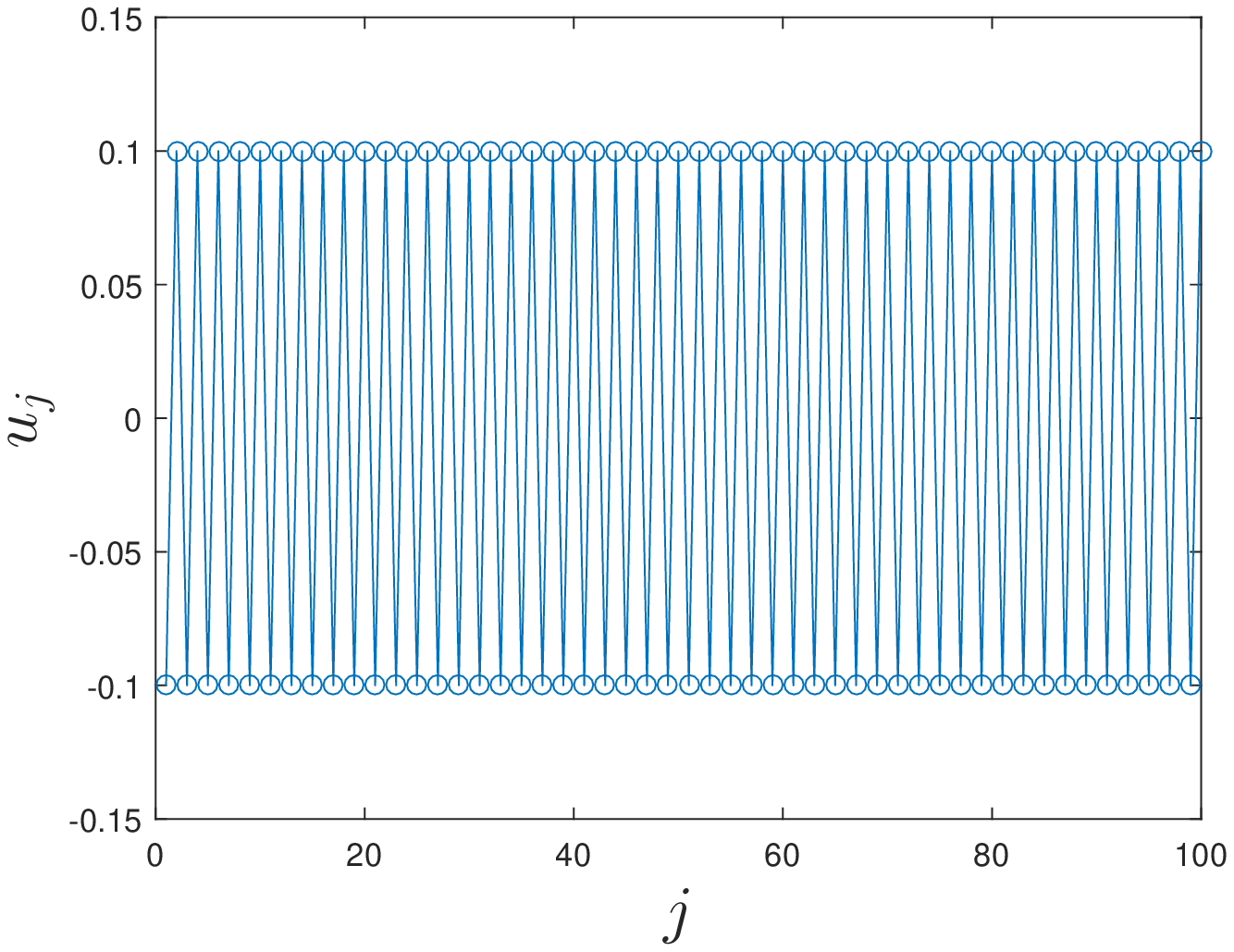}}
\end{minipage}
\hfill
\begin{minipage}{0.5\linewidth}
\leftline{(d)}
\centerline{\includegraphics[height = 5cm, width = 6cm]{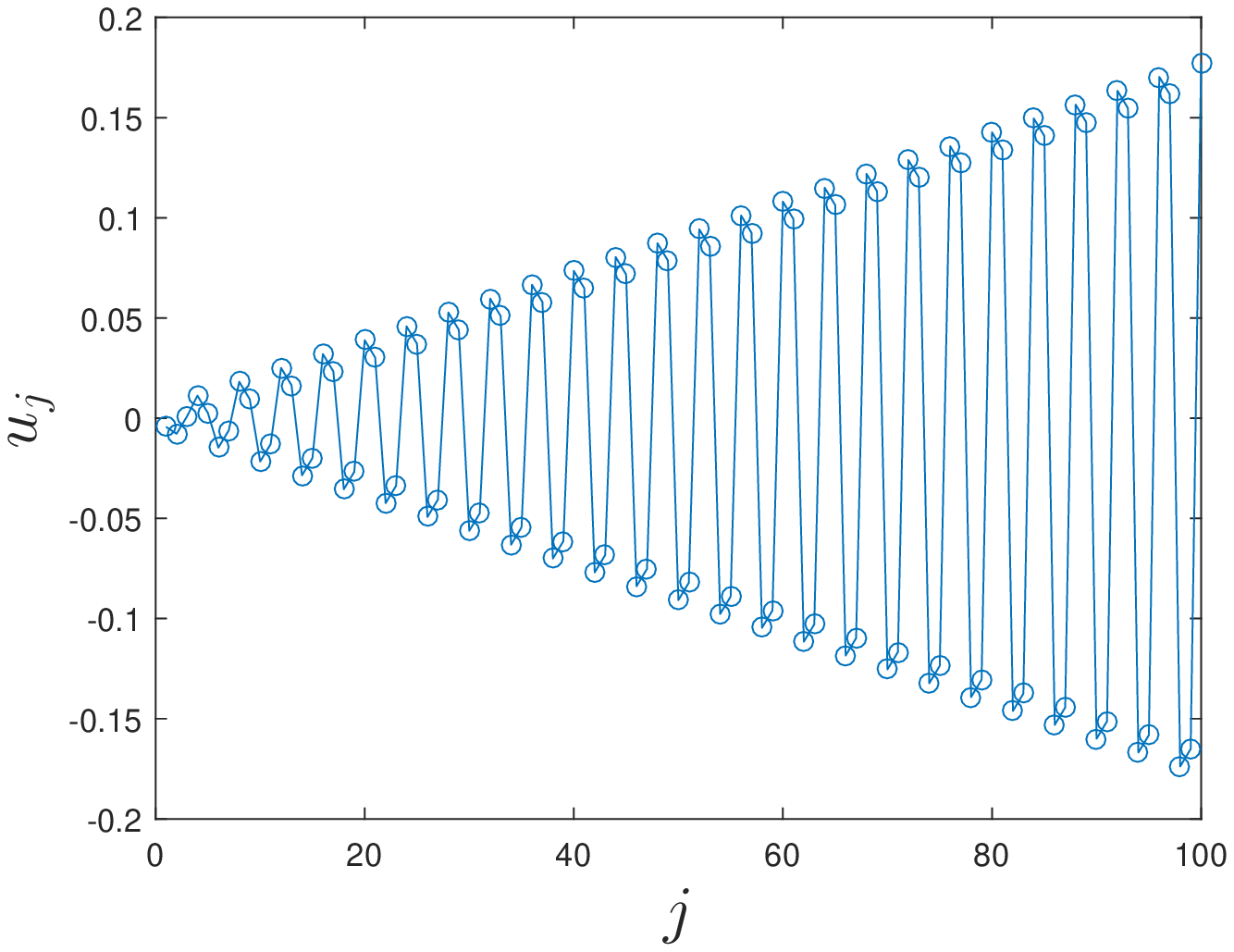}}
\end{minipage}

\begin{minipage}{0.5\linewidth}
\leftline{(e)}
\centerline{\includegraphics[height = 5cm, width = 6cm]
{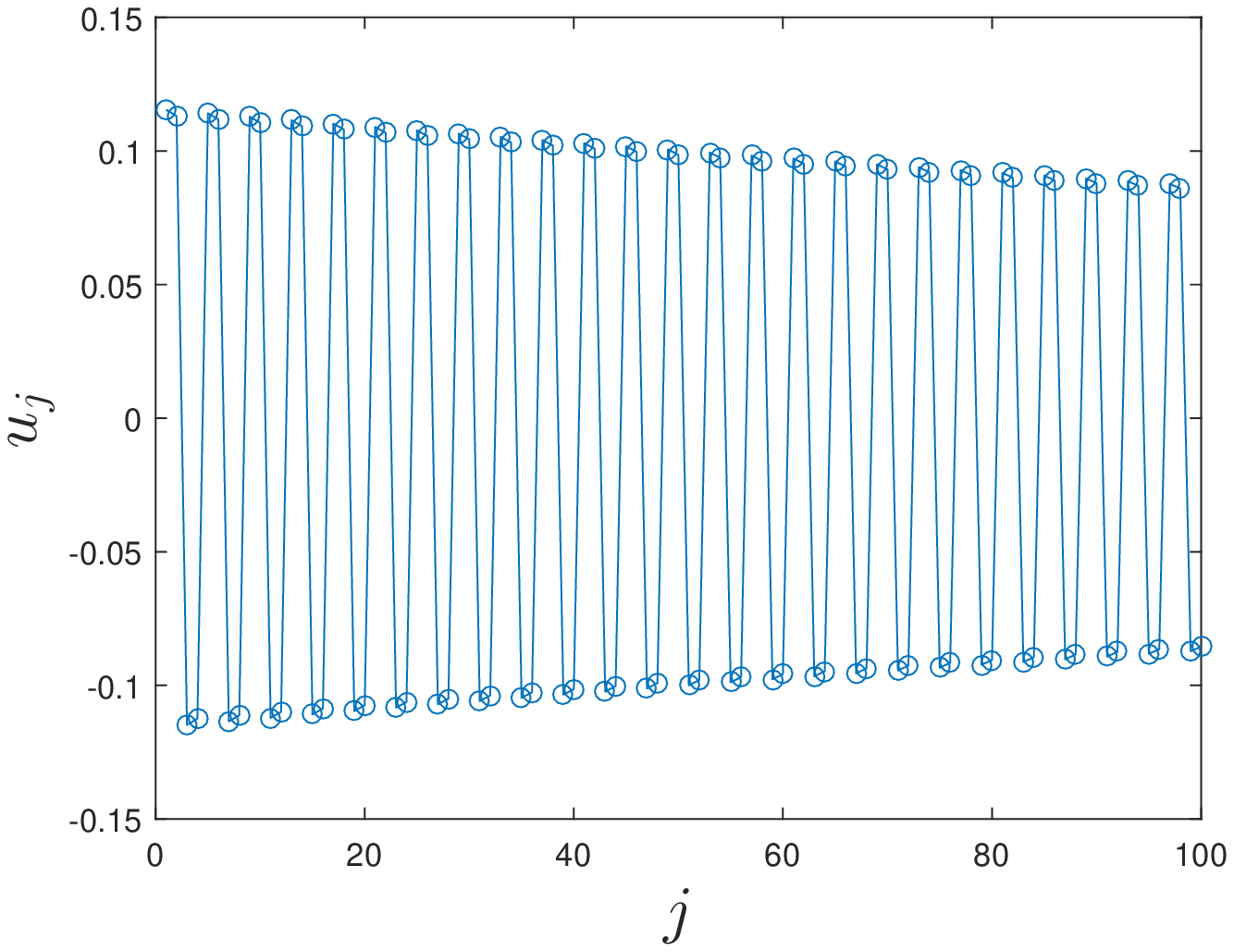}}
\end{minipage}
\hfill
\begin{minipage}{0.5\linewidth}
\leftline{(f)}
\centerline{\includegraphics[height = 5cm, width = 6cm]{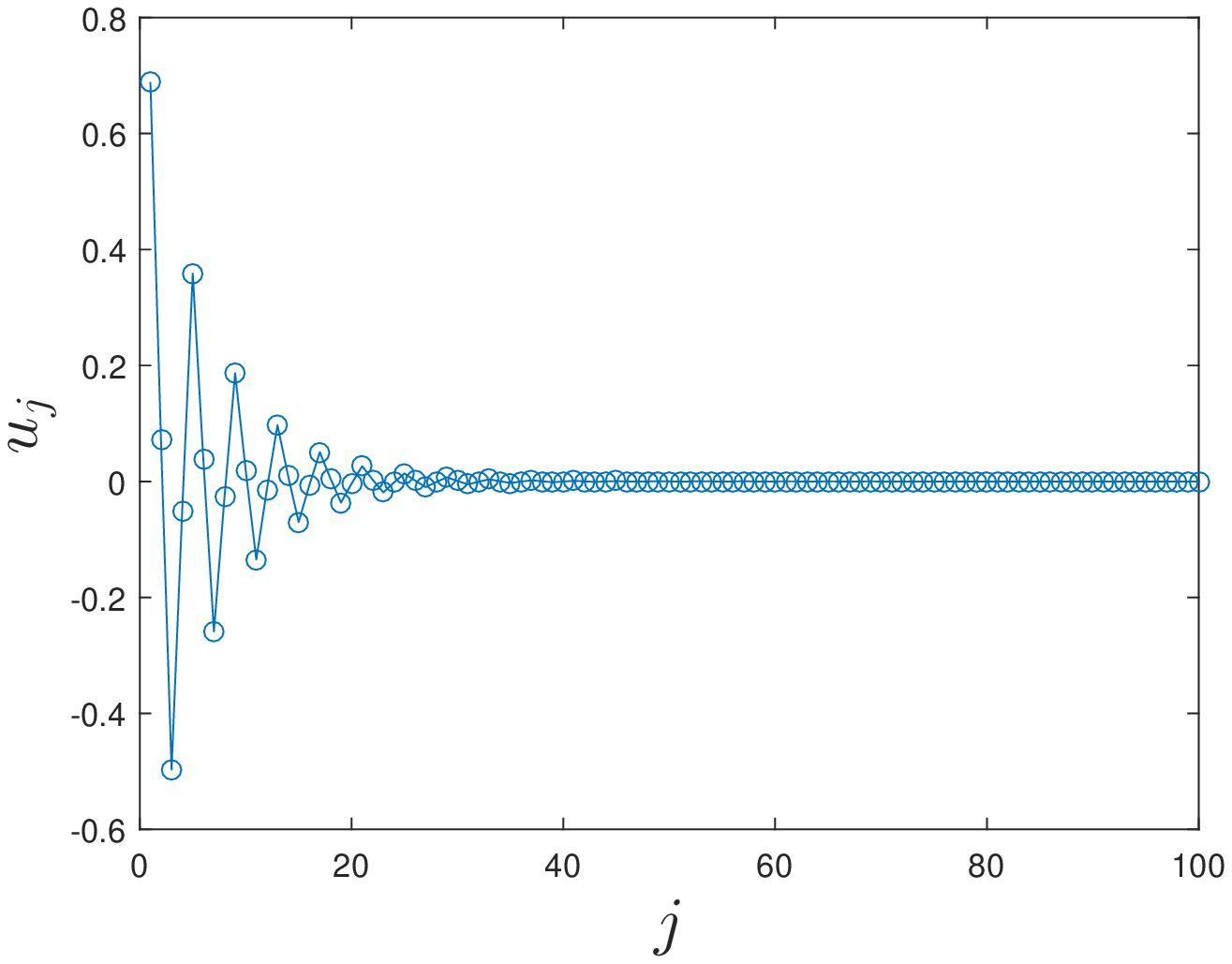}}
\end{minipage}

\caption{Here we plot some representative eigenstates of the linear diatomic chain with $n=50$ (length $100$), $k_{1}=1$ and $k_{2}=1.4$. The panel (a) shows an eigenstate localized at both ends when $k_{3,1}=k_{3,2}=k_{2}$. The panel (b) shows a left edge state for case \ref{subsec:k31_approx_k2} with 
$k_{3,1}=k_{2}$ and 
$k_{3,2}=1.2$. The panel (c) in the middle left (panel (d) in the middle right) shows a ``band edge state'' with $\omega^{2}=2k_{2}$ and $k_{3,1}=2k_{2}=k_{3,2}$ ($\omega^{2}=2k_{1}$,  $k_{3,1}=2k_{2}$ and $k_{3,2}=\frac{2k_{1}k_{2}}{(2n-1)(k_{2}-k_{1})+k_{2}}$) for case~\ref{subsec: k31_approx_2k2}. The panel (e) shows a state decays ``slowly'' with $a\approx-0.9941582\approx-1$ when $2.78=k_{3,1}\approx 2k_{2}\approx k_{3,2}=2.82$. The panel (f) shows a left edge state for the
generic case \ref {generic case} when $k_{3,1}=1.6$ and $k_{3,2}=2$.}
\label{fig:edge state}
\end{figure}

\subsubsection{Subcase: $0<|k_{3,1}-k_2|\ll 1$}
\label{subsubsec:k31_approx_k2}

If $k_{3,1}$ is not exactly $k_2$, we write $k_{3,1}=k_2+\delta k_{3,1}$ and $\tilde{a}=-\frac{k_1}{k_2}+\delta a$ (here $\delta a<0$). Based on equation~(\ref{eq:5}), it can be obtained that 
\begin{equation}
\delta k_{3,1}\approx \sigma k_2^2\sqrt{\frac{k_2\delta a}{k_1(k_1^2-k_2^2)}}\sim O(|\delta a|^{1/2}).
\end{equation}
Different from the subcase $k_{3,1}=k_2$, now $c_2=0$ and $a=\tilde{a}$ in (\ref{eq:boundary conditions_2}) yields an edge state with $k_{3,2}\sim O(\frac{1}{|\delta a|^{1/2}})\sim O(\frac{1}{|\delta k_{3,1}|})$ that can be implemented in a finite chain.

In order to consider edge states with $c_2\neq 0$, we write $$a=\tilde{a}+\delta\tilde{a}=-\frac{k_1}{k_2}+\delta a+\delta\tilde{a}=-\frac{k_1}{k_2}+\Delta a$$ and (\ref{eq:eigenvector_app}) still holds.
Accordingly 
(\ref{eq:boundary conditions_1}) yields 
\begin{equation}
\label{eq:c_2}
c_2\approx \frac{\delta k_{3,1}\sqrt{\frac{k_1^2-k_2^2}{k_1}}-\sigma\frac{k_2^2}{k_1}\sqrt{k_2\Delta a}}{\sigma\sqrt{k_1(k_1^2-k_2^2)}}\approx\frac{k_2^2}{k_1}\sqrt{\frac{k_2}{k_1(k_1^2-k_2^2)}}(\sqrt{\delta a}-\sqrt{\Delta a}).
\end{equation} 

By comparing $|\delta a|$ and $|\delta\tilde{a}|$, we consider different situations to obtain the dependence of $\Delta a$ on $k_{3,1}$ and $k_{3,2}$. The main results are summarized in Lemma~\ref{lemma: k31_approx_k2} and one can consult Appendix~\ref{sec:special cases proof} for more details.

\begin{lemma}
\label{lemma: k31_approx_k2}
In \eqref{eq:boundary conditions_1}\eqref{eq:boundary conditions_2} with $n\gg 1$, if $k_{3,1}=k_2+\delta k_{3,1}$ and $|\delta k_{3,1}|\ll 1$, then $k_{3,2}\not\approx k_2$ is a necessary condition to the existence of a left edge state with $a\approx -\frac{k_1}{k_2}$ and $|c_2 a^{2-2n}|\leq O(1)$.
\end{lemma}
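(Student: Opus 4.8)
\medskip
\noindent\textbf{Proof proposal.} The plan is to prove the statement directly. Suppose a left ``edge state'' exists with (after normalizing) $c_{1}=1$, with $a=-\tfrac{k_{1}}{k_{2}}+\Delta a$ and $0<-\Delta a\ll 1$, and with $|R|\le M$, where $R:=c_{2}a^{2-2n}$ and $M$ is a fixed constant (this is the meaning of $a\approx-\tfrac{k_{1}}{k_{2}}$ and $|c_{2}a^{2-2n}|\le O(1)$). I will produce a constant $c_{0}=c_{0}(k_{1},k_{2},M)>0$ with $|k_{3,2}-k_{2}|\ge c_{0}$, provided $n$ is large and $|\delta k_{3,1}|$ is small (both chosen in terms of $M,k_{1},k_{2}$), which is exactly the assertion $k_{3,2}\not\approx k_{2}$. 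First I would combine the two boundary relations: with $c_{1}=1$, \eqref{eq:boundary conditions_1} fixes $c_{2}=c_{2}(a)$ as in \eqref{eq:c_2}, and \eqref{eq:boundary conditions_2} solved for $k_{3,2}$ is \eqref{eq:k32_original}, that is, $k_{3,2}=(\omega^{2}-k_{1})+\sigma k_{1}\dfrac{v_{11}+v_{12}R}{v_{12}+v_{11}R}$. By \eqref{eq:eigenvector_app} the relevant asymptotics are $v_{12}=\sqrt{k_{2}\Delta a}=i\sqrt{-k_{2}\Delta a}$, which is purely imaginary of size $o(1)$, and $v_{11}=i\sqrt{(k_{2}^{2}-k_{1}^{2})/k_{1}}+\mathcal{O}(\Delta a)$, which is purely imaginary of size $\Theta(1)$; moreover \eqref{eq:omega} gives $\omega^{2}-k_{1}=k_{2}+\mathcal{O}(\sqrt{|\Delta a|})$.

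The core step is a uniform lower bound on $\bigl|\tfrac{v_{11}+v_{12}R}{v_{12}+v_{11}R}\bigr|$. Since $c_{2}$ from \eqref{eq:c_2} and $a^{2-2n}$ are real, $R$ is real; writing $v_{11}=i\alpha$ and $v_{12}=i\beta$ with $\alpha=\sqrt{(k_{2}^{2}-k_{1}^{2})/k_{1}}+\mathcal{O}(\Delta a)=\Theta(1)$ and $\beta=\sqrt{-k_{2}\Delta a}=o(1)$, the ratio equals the real number $(\alpha+\beta R)/(\beta+\alpha R)$. Once $|\Delta a|$ is small enough that $\beta M\le\alpha/2$, we get $|\alpha+\beta R|\ge\alpha-\beta M\ge\alpha/2$ while $|\beta+\alpha R|\le\beta+\alpha M\le\alpha(M+1)$, hence $\bigl|\tfrac{v_{11}+v_{12}R}{v_{12}+v_{11}R}\bigr|\ge\tfrac{1}{2(M+1)}$. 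Therefore $|k_{3,2}-(\omega^{2}-k_{1})|=k_{1}\bigl|\tfrac{v_{11}+v_{12}R}{v_{12}+v_{11}R}\bigr|\ge\tfrac{k_{1}}{2(M+1)}$, and combining with $\omega^{2}-k_{1}=k_{2}+\mathcal{O}(\sqrt{|\Delta a|})$ yields $|k_{3,2}-k_{2}|\ge\tfrac{k_{1}}{2(M+1)}-\mathcal{O}(\sqrt{|\Delta a|})\ge\tfrac{k_{1}}{4(M+1)}=:c_{0}$ for $|\Delta a|$ small. The geometric content is simply that a bounded $R$ cannot cancel the smallness of $v_{12}$ relative to $v_{11}$ in that ratio, so the ratio stays of order one and $k_{3,2}$ is held away from $\omega^{2}-k_{1}\approx k_{2}$.

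Finally I would verify that the hypothesis $|R|\le O(1)$ is consistent and pin down where $\Delta a$ sits, which connects the argument to the case distinction in Appendix~\ref{sec:special cases proof}. Since $a^{2-2n}\sim\bigl(\tfrac{k_{2}}{k_{1}+k_{2}|\Delta a|}\bigr)^{2n-2}$ grows exponentially in $n$, $|R|\le M$ forces $|c_{2}|$ to be exponentially small; substituting this into \eqref{eq:c_2}, where $c_{2}\approx\tilde{C}\,(\sqrt{|\delta a|}-\sqrt{|\Delta a|})$ with $|\delta a|=\mathcal{O}((\delta k_{3,1})^{2})$ determined by $k_{3,1}$ and $\tilde{C}\neq0$, one sees that only the sub-regime $|\delta\tilde{a}|\ll|\delta a|$ (equivalently $\Delta a$ exponentially close to $\delta a$, so $v_{12}$ is effectively frozen near $i\sqrt{k_{2}|\delta a|}$) is compatible with the hypothesis, whereas $|\delta\tilde{a}|\gtrsim|\delta a|$ makes $|c_{2}|\gtrsim\sqrt{|\delta a|}$ and $|R|$ exponentially large. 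In every admissible case the bound of the previous paragraph applies verbatim. I expect the one real obstacle to be bookkeeping: $M$ must be fixed first, and then $n$ taken large and $|\delta k_{3,1}|$ small \emph{in terms of} $M,k_{1},k_{2}$, so that the leading-order approximations \eqref{eq:eigenvector_app} and \eqref{eq:c_2} and the estimate $\omega^{2}-k_{1}=k_{2}+\mathcal{O}(\sqrt{|\Delta a|})$ all hold with errors negligible compared with $c_{0}$; the mechanism itself is robust.
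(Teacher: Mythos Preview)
Your argument in the first two paragraphs is correct and already constitutes a complete proof of the lemma. The route differs from the paper's: in Appendix~\ref{sec:special cases proof} the paper performs an exhaustive case analysis on the relative sizes of $|\delta a|$, $|\delta\tilde a|$ and $a^{2n-2}$, computing in each sub-regime the leading-order value of $k_{3,2}$ and then reading off which sub-regimes are left-localized. You instead work directly from \eqref{eq:k32_original} and bound the ratio $(v_{11}+v_{12}R)/(v_{12}+v_{11}R)$ uniformly away from zero using only $|R|\le M$ and $|v_{12}|\ll|v_{11}|$. This is cleaner for the bare necessity statement. What the paper's case analysis buys in exchange is finer information: explicit asymptotic formulas for $k_{3,2}$ in every regime, which are then used in Remark~\ref{remark: k31_approx_k2} (via continuity in $k_{3,2}$) to upgrade necessity to sufficiency and to describe what happens when $k_{3,2}\approx k_{2}$.

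One minor caveat on your third paragraph: the claim that only the sub-regime $|\delta\tilde a|\ll|\delta a|$ is compatible with $|R|\le M$ is too strong. The paper's own case list shows, for example, that when $|\delta a|\ll|\delta\tilde a|$ with $\sqrt{|\Delta a|}\sim a^{2n-2}$ one still has $|c_{2}a^{2-2n}|\sim O(1)$; this corresponds to $|\delta k_{3,1}|$ exponentially small in $n$, which is not excluded by the hypothesis $|\delta k_{3,1}|\ll 1$. This does not damage your proof, since your second-paragraph bound already covers every admissible pair $(\Delta a,R)$ uniformly; the third paragraph is not needed for the lemma and can simply be dropped.
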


\begin{remark}
\label{remark: k31_approx_k2}
Utilizing continuity of $a$ with respect to $k_{3,2}$ (when $k_{3,1}$ is fixed) in the proof of Lemma~\ref{lemma: k31_approx_k2}, we can further conclude that $k_{3,1}\approx k_2 \not \approx k_{3,2}$ suffices to imply the existence of a left edgestate with $a\approx -\frac{k_1}{k_2}$ and $|c_2 a^{2-2n}|\leq O(1)$. On the other hand, $k_{3,1}\approx k_2\approx k_{3,2}$ leads to an eigenstate with $a\approx -\frac{k_1}{k_2}$ and $|c_2 a^{2-2n}|\gg O(1)$, which is possibly localized at both ends. With that being said, if $k_{3,1}\not\approx k_2\not\approx k_{3,2}$, the edge states (when exist) have $a\not\approx -\frac{k_1}{k_2}$ and $\omega^2\not\approx k_1+k_2$.
\end{remark}

\subsection{Special Case: $|k_{3,1}-k_2|\approx k_2$, $a\approx \pm 1$}
\label{subsec: k31_approx_2k2}

\subsubsection{Subcase: $a=\pm 1$}
\label{subsubsec: a=pm1}
In order to study the states with $|a|<1$ and $a\approx \pm 1$, we first characterize the states with exactly $a=\pm 1$. Since the frequencies of these states with $a=\pm 1$ are exactly on the edges of the bands, we name them as ``band edge state''(see Fig.~\ref{fig:edge state}(c)(d) as two examples). 
When ${a}=1$, $\mathcal{T}$ has an eigenvector $\vec{v}_{1}=(1,\sigma)^{\top}$ and a generalized eigenvector $\vec{v}_{2}=(1,-\sigma)^{\top}$. Then
the eigenstate $u$ of $\mathcal{L}$ has the form
\begin{equation}
\label{relation:iteration1}
\begin{split}
  (u_{1},u_{2})^{\top}
&=c_{1}\vec{v}_{1}+c_{2}\vec{v}_{2};\\
  (u_{2n-1},u_{2n})^{\top}
&=c_{1}\vec{v}_{1}+c_{2}(n-1)(\frac{-2(k_{1}+k_{2})}{k_{2}})\vec{v}_{1}+c_{2}\vec{v}_{2}
\end{split}
\end{equation}
where $\omega^2=k_1+k_2-\sigma\sqrt{k_1+k_2 a}\sqrt{k_1+k_2/a}=(k_1+k_2)(1-\sigma)$. 
Similarly, ${a}=-1$ corresponds to the eigenstate $u$ of $\mathcal{L}$ with
\begin{equation}
\label{relation:iteration2}
\begin{split}
(u_{1},u_{2})^{\top}
&=c_{1}\vec{v}_{1}+c_{2}\vec{v}_{2};\\
(u_{2n-1},u_{2n})^{\top}
&=(-1)^{n-1}[c_{1}\vec{v}_{1}-c_{2}(n-1)\frac{2(k_{2}-k_{1})}{k_{2}}\vec{v}_{1}+c_{2}\vec{v}_{2}].
\end{split}
\end{equation}
Substituting these into $-\omega^2 u=\mathcal{L}u$, we obtain the conditions of $k_{3,1}$ and $k_{3,2}$ for ``band edge states'' in the following Remark~\ref{remark: band edge states} (see Appendix~\ref{sec:special cases proof} for more detailed derivation).

\begin{remark}
\label{remark: band edge states}
Suppose the system bears a ``band edge state'' with $a=\pm 1$ and $\omega^2=k_1+k_2-\sigma(k_1+k_2 a)$.
    \begin{itemize}
        \item If $k_{3,1}$ is not close enough to $0$ or $2k_2$, say $|k_{3,1}-k_2(1-\sigma a)|\gg \frac{1}{n}$, then $k_{3,2}\approx k_2(1-\sigma a)- \sigma \frac{k_1k_2}{n(k_2+a k_1)}$.
        \item If $k_{3,1}$ is very close to $0$ or $2k_2$, say $|k_{3,1}-k_2(1-\sigma a)|\ll \frac{1}{n}$, then $k_{3,2}-k_2(1-\sigma a)\approx k_2(1-\sigma a)-k_{3,1}\ll O(\frac{1}{n})$.
        \item If $|k_{3,1}-k_2(1-\sigma a)|\sim O(\frac{1}{n})$ and $k_{3,1}-k_2(1-\sigma a)\not\approx -\sigma\frac{k_1 k_2}{n(k_2+a k_1)}$, then $|k_{3,2}-k_2(1-\sigma a)|\sim O(\frac{1}{n})$ and $k_{3,2}-k_2(1-\sigma a)\not\approx -\sigma\frac{k_1 k_2}{n(k_2+a k_1)}$.
    \end{itemize}
\end{remark}
That is to say, at least one of $k_{3,1}$ and $k_{3,2}$ should be close to $0$ or $2k_2$ for any ``band edge state'' with $a=1$ or $a=-1$ to exist. Therefore it suffices to characterize all the ``band edge states'' by considering $|k_{3,1}-k_2|\approx k_2$ and utilizing the symmetry (hence also $|k_{3,2}-k_2|\approx k_2$).

\subsubsection{Subcase: $a\approx \pm 1$}
Now we consider the more generic situation $a\approx \pm 1$. In particular, the case with $\omega^2\approx 2k_2$ ($a\approx -1$ and $\sigma=1$) will be discussed and the rest follow similar strategies. Here we define $a=-1+\Delta a\approx -1$, $k_{3,1}=2k_2+\delta k_{3,1}$ and $k_{3,2}=2k_2+\Delta k_{3,2}$. Plugging these into \eqref{eq:boundary conditions_1}\eqref{eq:boundary conditions_2}, we obtain the relation between $\delta k_{3,1}$, $\Delta k_{3,2}$ and $\Delta a$. which leads to the following remark (see proof in Appendix~\ref{sec:special cases proof}).

\begin{remark}
\label{remark:k31_approx_2k2}
When $|k_{3,1}-2k_2|\geq O(1) \leq |k_{3,2}-2k_2|$, there does not exist any eigenstate $u$ near $\omega^2=2k_2$ with $-1<a\approx -1$.     
\end{remark}
In the same spirit, the other band edges can be discussed and we find that the existence of eigenstates with $a\approx \pm 1$ requires $|k_{3,1}-k_2|\approx k_2$ or $|k_{3,2}-k_2|\approx k_2$.
On the other hand, the states with $a\approx \pm 1$ decay too slow to be 
``localized enough'' ($|c_2 a^{2-2n}|\leq O(1)$), although their corresponding frequencies are outside the special bands(see Fig.~\ref{fig:edge state}(e) as an example). In this work we will place our attention on the more generic scenario where $||a|-1||\geq O(1)$ and $|k_{3,1}-k_2|\not\approx k_2\not\approx |k_{3,2}-k_2|$.

\subsection{Generic case: $k_{3,1}\not\approx k_{2}$ and $|k_{3,1}-2k_{2}|\not\approx k_2$}

\label{generic case}

In this case, equation \eqref{eq:5} always has a root $\tilde{a}$ which is away from $\pm 1$ and $-\frac{k_1}{k_2}$. Again we perturb $\tilde{a}$ as $a=\tilde{a}+\Delta a$ where $|\Delta a|\ll 1$.
Since $v_{11}\sim\mathcal{O}(1)\sim v_{12}$ when $a=\tilde{a}$,
$v_{11}$ and $v_{12}$ can be written as
\begin{equation}
  v_{11}(a) = \sqrt{k_{1}+k_{2}/\tilde{a}}
  +\mathcal{O}(\Delta a); \quad 
  v_{12}(a) = \sqrt{k_{1}+k_{2}\tilde{a}}
  +\mathcal{O}(\Delta a).
  \label{eq:v_11,12}
\end{equation}
In addition, $c_2$ can be obtained from \eqref{eq:boundary conditions_1} as
\begin{equation}
    c_{2}(a) = \frac{-(k_{3,1}-k_2)\sqrt{k_1+k_2/a}-\sigma k_2/a\sqrt{k_1+k_2 a}}{\sigma k_2 a\sqrt{k_1+k_2/a}+(k_{3,1}-k_2)\sqrt{k_1+k_2 a}}
    \approx c_2'(\tilde{a})(\Delta a)\sim O(\Delta a).
  \label{eq:c_2+}
\end{equation}
Substituting this into \eqref{eq:boundary conditions_2}, we can express $k_{3,2}$ as 
\begin{equation}
    k_{3,2}(a) =
    \frac{k_{1}(v_{11}+c_{2}a^{2-2n}v_{12})}
    {\sigma v_{12}+\sigma c_{2}a^{2-2n}v_{11}}
    +\omega^{2}-k_{1}.
\label{k32 of a}    
\end{equation}
Similar to the special case $k_{3,1}=k_2$, the value of $k_{3,2}$ changes as the order of $\Delta a$ varies:
\begin{itemize}
    \item $|a^{2n-2}|\gg|\Delta a|$
    
    In this case $|\Delta a|$ is so small that $|c_2 a^{2-2n}|\ll 1$. Therefore \eqref{k32 of a} now implies
    \begin{equation}
        k_{3,2}\approx k_2-\frac{k_2 \tilde{a}\sqrt{k_1+k_2/\tilde{a}}}{\sigma\sqrt{k_1+k_2 \tilde{a}}};
    \end{equation}
    
    \item $|a^{2n-2}|\ll|\Delta a|$
    
    When $|\Delta a|$ is relatively large, $k_{3,2}$ becomes
    \begin{equation}
        k_{3,2}\approx k_2-\frac{k_2 \sqrt{k_1+k_2\tilde{a}}}{\sigma\tilde{a}\sqrt{k_1+k_2/ \tilde{a}}};
    \end{equation}
    
    \item $|a^{2n-2}|\sim|\Delta a|$
    
    When $|c_2 a^{2-2n}|\sim\mathcal{O}(1)$, $k_{3,2}$ in \eqref{k32 of a} can fortunately take most of the values. Also taking the dominant role of $a^{m-1}v_1$ in \eqref{iteration m-th} into consideration, we will place emphasis on this situation (see Fig.~\ref{fig:edge state}(f) as an example). 
\end{itemize}

In this section, we find that the edge states are universal in long chains. To be more explicit, edge states with $|a|<1$ exist as long as $\{k_{3,1}, k_{3,2}\}$ are away from some special values. Moreover, for these edge states we know $|\Delta a|\sim {a}^{4n-4}$ for $a\approx -\frac{k_1}{k_2}$ and $|\Delta a|\sim {a}^{2n-2}$ for $a\not\approx -\frac{k_1}{k_2}$. In particular, we suppose the eigenstate $u^{(k)}$ localized at the left end as
\begin{equation}
\label{matrix:uk localized}
  \left(
  \begin{array}{c}
    u^{(k)}_{2j-1} \\
    u^{(k)}_{2j}
  \end{array}
  \right)
  =\left(
  \begin{array}{c}
    a^{j-1}\sqrt{k_1+k_2/a}+c_2a^{1-j}\sqrt{k_1+k_2a}\\
    \sigma (a^{j-1}\sqrt{k_1+k_2a}+c_2a^{1-j}\sqrt{k_1+k_2/a})
  \end{array}
  \right), \quad 1\leq j\leq n.
\end{equation}
The states localized at the right end can be defined in the same way. Suppose $(\omega^{(k)})^2$ is in the bandgap $(2k_1,2k_2)$ and is away from the band edges, then it can be assumed that
\begin{equation}
\label{eq: A1}
[A1]:\quad a\in(-1+\delta_1,-\frac{k_1}{k_2}), \quad  2k_1+\delta_2<(\omega^{(k)})^2<2k_2-\delta_2
\end{equation}
where $0<\delta_1\sim O(1)\sim \delta_2>0$. Accordingly we can calculate its norm as
\begin{equation}
|u^{(k)}|^{2}=\|u^{(k)}\|_{2}^{2}\approx|\frac{2k_1+k_2(a+1/a)}{1-a^2}|\sim\mathcal{O}(1).
\end{equation}

\section{Estimation for states with eigenfrequencies near the edges of the spectral band}
\label{sec:estimate}

In Section.~\ref{sec:edgestates}, we have studied the edge states with frequencies outside the bands in long linear diatomic chains. Now we move to the eigenstates with frequencies inside the bands and especially provide estimates for those with frequencies near the band edges. This knowledge will later enable us to investigate the continuation of states in the weakly nonlinear regime. 

First we notice that the eigenfrequencies of the linear localized states should be outside the frequency bands and this rule also holds for nonlinear localized states. Suppose with generic choice of $k_{3,1}>0$ and $k_{3,2}>0$, a linear diatomic chain usually bears two edge states. If the number of eigenfrequencies in optical band ($(\omega^{(j)})^2\in (2k_2, 2k_1+2k_2)$) and acoustic band ($(\omega^{(j)})^2\in (0, 2k_1)$) is $n_{1}$ and $n_{2}$ respectively, then we assume $n_{1}+n_{2}+2=2n$.

\subsection{Eigenfrequencies in the optical band}
\label{sec:optical band}
To start with, we assume $a\neq\pm 1$ and consider the eigenfrequencies in the optical band. Suppose  
\begin{equation}
2k_2<(\omega^{(1)})^2<(\omega^{(2)})^2<\dots<(\omega^{(n_1)})^2<2k_1+2k_2
\end{equation}
and $a=e^{i\theta} (\theta \in (\pi,2\pi))$, then the frequencies have the form
\begin{equation}
\omega^2=k_1+k_2-\sigma\sqrt{k_1^2+k_2^2+2k_1 k_2 \cos\theta}
\end{equation}
with $\sigma=-1$ and accordingly
\begin{equation}
\pi<\theta^{(1)}<\theta^{(2)}<\dots<\theta^{(n_1)}<2\pi.
\end{equation}
When $\theta\in (\pi, 2\pi)$, we define $\alpha\in (0, \frac{\pi}{2})$ such that 
\begin{equation}
\sqrt{k_1+k_2 e^{-i\theta}}=\rho e^{i\alpha}, \quad \sqrt{k_1+k_2 e^{i\theta}}=\rho e^{-i\alpha}.
\end{equation}
Then the eigenvectors of $\mathcal{T}$ in \eqref{eq:omega} can be written as
\begin{equation}
\begin{split}
  \vec{v}_{1}(a) =
  \left(
  \begin{array}{c}
    v_{11} \\
    \sigma v_{12}
  \end{array}
  \right)=
  \left(
  \begin{array}{c}
    \rho e^{i\alpha} \\
    \sigma \rho e^{-i\alpha}
  \end{array}
  \right); 
  \vec{v}_{2}(a) =
    \left(
  \begin{array}{c}
    v_{21} \\
   \sigma v_{22}
  \end{array}
  \right)=
  \left(
  \begin{array}{c}
    \rho e^{-i\alpha} \\
    \sigma \rho e^{i\alpha}
  \end{array}
  \right)
\end{split}
\label{eq:v1v2}
\end{equation}
Since the states $u^{(j)}$ are real, we suppose 
\begin{equation}
  c_{1}=re^{i\beta},\quad c_{2}=re^{-i\beta}
  \label{eq:c1c2}
\end{equation}
and substituting these into the system \eqref{eq:boundary conditions_1}, \eqref{eq:boundary conditions_2} to obtain
\begin{eqnarray}
  \label{eq:boundary conditions_b1}
  \frac{\cos(\alpha+\beta)}{-\cos(\beta-\alpha)} &=&  \frac{k_{1}}{k_{1}+k_{3,1}-\omega^{2}}\\
  \frac{\cos(\alpha+\beta+(n-1)\theta)}{-\cos(\beta-\alpha+(n-1)\theta)} &=& \frac{k_{1}+k_{3,2}-\omega^{2}}{k_{1}}
\label{eq:boundary conditions_b2}
\end{eqnarray}
Since \eqref{eq:boundary conditions_b1}\eqref{eq:boundary conditions_b2} are invariant under the change $\beta\to\beta+\pi$, it is enough to consider $\beta$ in the range $\beta\in [0,\pi)$. Focusing on the eigenfrequencies near the lower edge of the optical band, we assume $\theta=\pi+\Delta\theta\approx \pi$ hence
\begin{equation}
\alpha=\frac{\pi}{2}+\Delta\alpha\approx\frac{\pi}{2}+\frac{k_2}{2(k_1-k_2)}\Delta \theta,  \quad \omega^2\approx 2 k_2+\frac{k_1 k_2 (\Delta\theta)^2}{2(k_2-k_1)}.
\end{equation}
At the same time, \eqref{eq:boundary conditions_b1} can be rewritten as
\begin{equation}
\label{eq:tan_beta}
\tan\beta=\frac{2k_1+k_{3,1}-\omega^2}{k_{3,1}-\omega^2}\cot\alpha=\frac{2k_1+k_{3,1}-\omega^2}{\omega^2-k_{3,1}}\tan\Delta\alpha. 
\end{equation}
\begin{itemize}
\item Case: $k_{3,1}\not\approx 2k_2$\\
In this case, $|\tan\beta|\leq O(|\Delta\alpha|)\sim O(|\Delta\theta|)\ll O(1)$ hence $\beta\approx 0$ or $\beta\approx \pi$. Suppose we define
\begin{equation}
\delta_1\beta= 
\begin{cases}
\beta, & {\rm if}~~0\leq\beta<\frac{\pi}{2} \\
\beta-\pi, & {\rm if}~~\frac{\pi}{2}<\beta<\pi 
\end{cases}
\end{equation}
Then
\begin{equation}
    \delta_1\beta\approx \frac{2k_1+k_{3,1}-2k_2}{2k_2-k_{3,1}}\Delta\alpha.
\end{equation}
Next we notice that \eqref{eq:boundary conditions_b2} yields
\begin{equation}
\tan((n-1)\Delta\theta)=\frac{(k_1+k_{3,2}-\omega^2)\sin(\delta_1\beta-\Delta\alpha)-k_1\sin(\Delta\alpha+\delta_1\beta)}{k_1\cos(\Delta\alpha+\delta_1\beta)-(k_1+k_{3,2}-\omega^2)\cos(\delta\beta_1-\Delta\alpha)}
\label{eq:n-1_theta}
\end{equation}
where the denominator approaches zero if $k_{3,2}\approx 2k_2$. 
\begin{itemize}
    \item Subcase: $k_{3,2}\not\approx 2k_2$\\
    Now \eqref{eq:n-1_theta} becomes
    \begin{equation}
    \begin{aligned}
    \tan((n-1)\Delta\theta)
    &\approx\frac{(k_{3,2}-2k_2)\delta_1\beta-(2k_1+k_{3,2}-2k_2)\Delta\alpha}{2k_2-k_{3,2}}\\
    &\approx (\frac{k_{3,2}+2k_1-2k_2}{k_{3,1}-2k_2}+\frac{k_{3,1}+2k_1-2k_2}{k_{3,2}-2k_2})\Delta\alpha
    \end{aligned}
    \label{n-1_theta}
    \end{equation}
    hence
    \begin{equation}
    (n-1)\Delta\theta \approx m\pi-(\frac{2k_1+k_{3,1}-2k_2}{2k_2-k_{3,1}}+\frac{2k_1+k_{3,2}-2k_2}{2k_2-k_{3,2}})\Delta\alpha,~m\in\mathbb{Z}.
    \end{equation}
    Comparing the form of $\Delta\theta$ with the eigenfrequencies in the spectal band in order, it can be further inferred that
    \begin{equation}
    \label{eq:Delta theta}
    \Delta\theta^{(k)}=\frac{k\pi}{n-1}+\frac{\tilde{\theta}^{(k)}}{n-1} \approx \frac{k\pi}{n-1},~k=1,2,3,\cdots
    \end{equation}
near the lower edge of the optical band ($k\ll n$). Moreover, in the generic case where $(\frac{2k_1+k_{3,1}-2k_2}{2k_2-k_{3,1}}+\frac{2k_1+k_{3,2}-2k_2}{2k_2-k_{3,2}})\sim O(1)$, expanding \eqref{n-1_theta} at higher orders enables us to obtain
\begin{equation}
\begin{split}
\label{formula:tilde theta k}
\tilde{\theta}^{(k)}
=&(\frac{2k_1+k_{3,1}-2k_2}{2k_2-k_{3,1}}+\frac{2k_1+k_{3,2}-2k_2}{2k_2-k_{3,2}})\frac{k_2}{2(k_2-k_1)}\frac{k\pi}{n-1}\\
&+[(\frac{2k_1+k_{3,1}-2k_2}{2k_2-k_{3,1}}+\frac{2k_1+k_{3,2}-2k_2}{2k_2-k_{3,2}})\frac{k_2}{2(k_2-k_1)}]^2\frac{k\pi}{(n-1)^2}+\mathcal{O}(|\frac{k}{n}|^3)
\end{split}
\end{equation}
and particularly
\begin{equation}
|\tilde{\theta}^{(k)}-k\tilde{\theta}^{(1)}|\leq \mathcal{O}(|\frac{k}{n}|^3), \quad k\ll n.
\label{theta_k}
\end{equation}

\item Subcase: $k_{3,2}\approx 2k_2$\\
Suppose $k_{3,2}=2k_2+\delta k_{3,2}$ and $|\delta k_{3,2}|\ll 1$, then \eqref{eq:n-1_theta} reads
\begin{equation}
\tan((n-1)\Delta\theta)\approx\frac{2 k_1\Delta\alpha }{ \delta k_{3,2}+2k_1\Delta\alpha\delta_1\beta-\frac{k_1 k_2 (\Delta\theta)^2}{2(k_2-k_1)} }
\end{equation}
and
\begin{equation}
(n-1)\Delta\theta\approx m\pi+{\rm arctan}(\frac{2 k_1\Delta\alpha }{ \delta k_{3,2}+2k_1\Delta\alpha\delta_1\beta-\frac{k_1 k_2 (\Delta\theta)^2}{2(k_2-k_1)} }),~m\in\mathbb{Z}.
\end{equation}
This implies near the lower edge of optical band $|\Delta\theta^{(k)}|\leq O(\frac{k}{n})$ for $k\ll n$. To be more specific, the form of $\Delta\theta^{(k)}$ can be approximated in the following different situations:
\begin{itemize}
    \item $\frac{1}{n}\ll |\delta k_{3,2}|\ll 1$: \\
    Then $\frac{2 k_1\Delta\alpha }{ \delta k_{3,2}+2k_1\Delta\alpha\delta_1\beta-\frac{k_1 k_2 (\Delta\theta)^2}{2(k_2-k_1)} }\approx\frac{2k_1\Delta\alpha}{\delta k_{3,2}}\ll 1$ thus $\Delta\theta^{(k)}\approx \frac{k\pi}{n-1}$.
    \item $|\delta k_{3,2}|\ll \frac{1}{n}$: \\
    Then $\frac{2 k_1\Delta\alpha }{ \delta k_{3,2}+2k_1\Delta\alpha\delta_1\beta-\frac{k_1 k_2 (\Delta\theta)^2}{2(k_2-k_1)} }\gg 1$ thus $\Delta\theta^{(k)}\approx \frac{(k-\frac{1}{2})\pi}{n-1}$.
    \item $|\delta k_{3,2}|\sim \frac{1}{n}$: \\
    Then $\frac{2 k_1\Delta\alpha }{ \delta k_{3,2}+2k_1\Delta\alpha\delta_1\beta-\frac{k_1 k_2 (\Delta\theta)^2}{2(k_2-k_1)} }\approx\frac{2k_1\Delta\alpha}{\delta k_{3,2}}\sim O(1)$ thus $\Delta\theta^{(k)}\approx \frac{(k-1)\pi+\gamma}{n-1}$ where $\gamma\approx{\rm arctan}(\frac{2 k_1\Delta\alpha }{ \delta k_{3,2} })$ is away from $0$ and $\frac{\pi}{2}$.
\end{itemize}

\end{itemize}
\item Case: $k_{3,1}\approx 2k_2$\\
Since $k_{3,1}\approx 2k_2\not\approx k_{3,2}$ and $k_{3,1}\not\approx 2k_2\approx k_{3,2}$ are basically the same, it only remains to discuss $k_{3,1}\approx 2k_2\approx k_{3,2}$. Since this is a special case and the corresponding estimate of $\Delta\theta$ heavily depends on the choice of $\{ k_{3,1},k_{3,2}\}$, here we only list a few examples instead of demonstrating the discussion on all situations.  Suppose $k_{3,1}=2k_2+\delta k_{3,1}$ and $k_{3,2}=2k_2+\delta k_{3,2}$, then \eqref{eq:tan_beta} now reads $\tan\beta\approx\frac{2k_1}{\frac{k_1 k_2(\Delta\theta)^2}{2(k_2-k_1)}-\delta k_{3,1}}\tan\Delta\alpha$ hence $|\beta|\gg |\Delta\alpha|\sim O(|\Delta\theta|)$.
\begin{itemize}
    \item Example (I): $|\delta k_{3,1}|\ll \frac{1}{n^2} \gg |\delta k_{3,2}|$\\
    If we write $\beta=\frac{\pi}{2}+\delta_2\beta$, then $\delta_2\beta\approx\frac{\Delta\theta}{2}$ or $|\Delta\theta|\ll\frac{1}{n}$. For the former \eqref{eq:n-1_theta} becomes 
    \begin{equation}
    \tan((n-1)\Delta\theta)\approx\frac{ -\frac{k_1 k_2 (\Delta\theta)^2}{k_2-k_1} }{ 2k_1\Delta\alpha  }\approx 0
    \end{equation}
    thus $\Delta\theta^{(k)}\approx \frac{k\pi}{n-1}$ or $\Delta\theta^{(k)}\approx \frac{(k-1)\pi}{n-1}$ for $1\leq k\ll n$.
    \item Example (II): $|\delta k_{3,1}|\gg \frac{1}{n} \ll |\delta k_{3,2}|$\\
    This means $|\Delta\alpha|\ll |\delta_1\beta|\ll 1$. Then \eqref{eq:n-1_theta} becomes 
    \begin{equation}
    \tan((n-1)\Delta\theta)\approx\frac{ 2k_1\Delta\alpha-\delta k_{3,2}\delta_1\beta }{ \delta k_{3,2}  }\approx 0
    \end{equation}
    thus $\Delta\theta^{(k)}\approx \frac{k\pi}{n-1}$ or $\Delta\theta^{(k)}\approx \frac{(k-1)\pi}{n-1}$ for $1\leq k\ll n$.
    \item Example (III): $|\delta k_{3,1}|\gg \frac{1}{n}$ and $|\delta k_{3,2}|\ll\frac{1}{n^2}$\\
    This is similar to Example (II) but here
    \begin{equation}
    \tan((n-1)\Delta\theta)\approx\frac{ 2k_1\Delta\alpha }{ \delta k_{3,2}+2k_1\Delta\alpha\delta_1\beta }
    \end{equation}
    which leads to $\Delta\theta^{(k)}\approx \frac{(k-\frac{1}{2})\pi}{n-1}$ for $1\leq k\ll n$.
\end{itemize}

\end{itemize}

\begin{figure}[!htp]
\begin{minipage}{0.5\linewidth}
\leftline{(a)}
\centerline{\includegraphics[height = 5cm, width = 6cm]{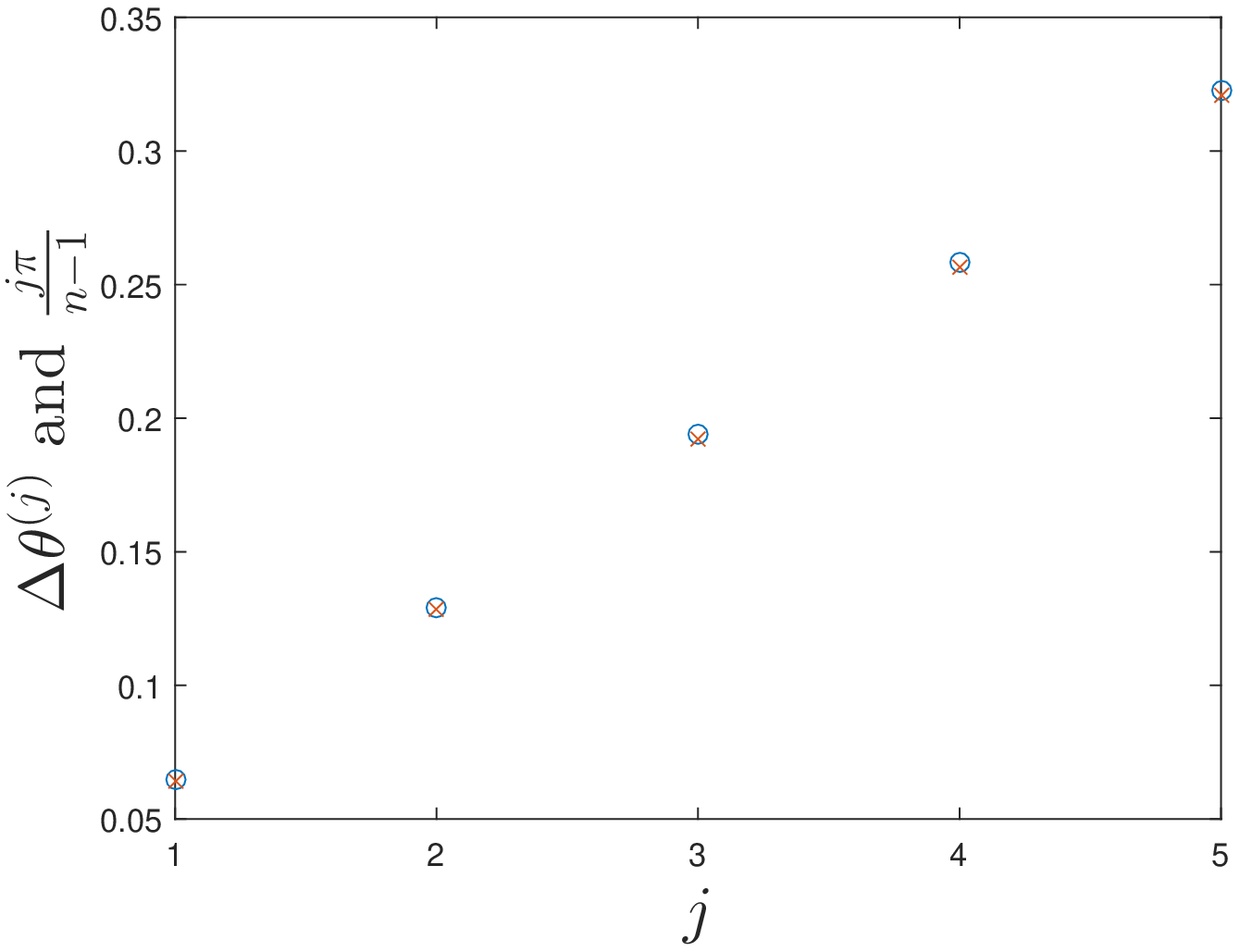}}
\end{minipage}
\hfill
\begin{minipage}{0.5\linewidth}
\leftline{(b)}
\centerline{\includegraphics[height = 5cm, width = 6cm]{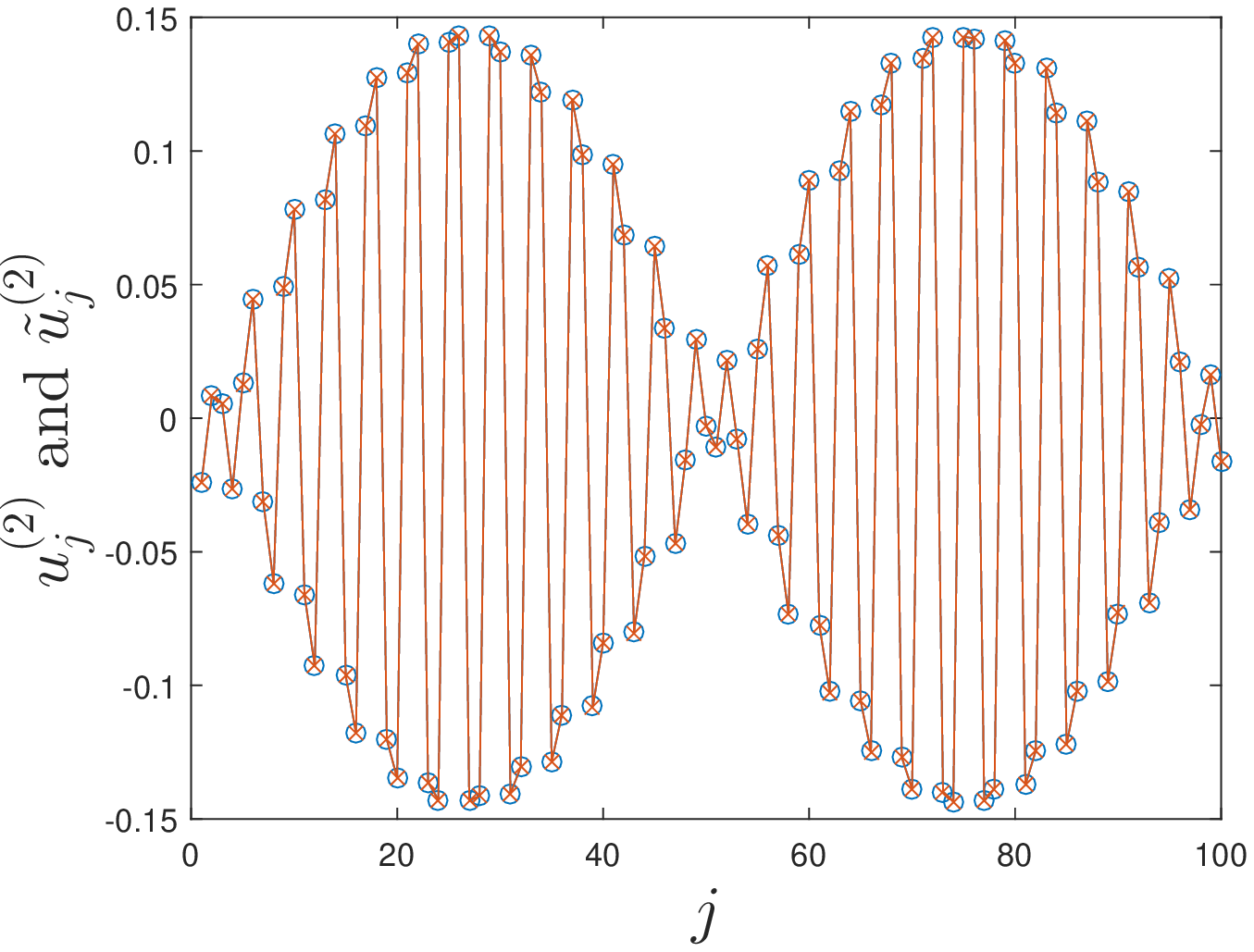}}
\end{minipage}

\begin{minipage}{0.5\linewidth}
\leftline{(c)}
\centerline{\includegraphics[height = 5cm, width = 6cm]{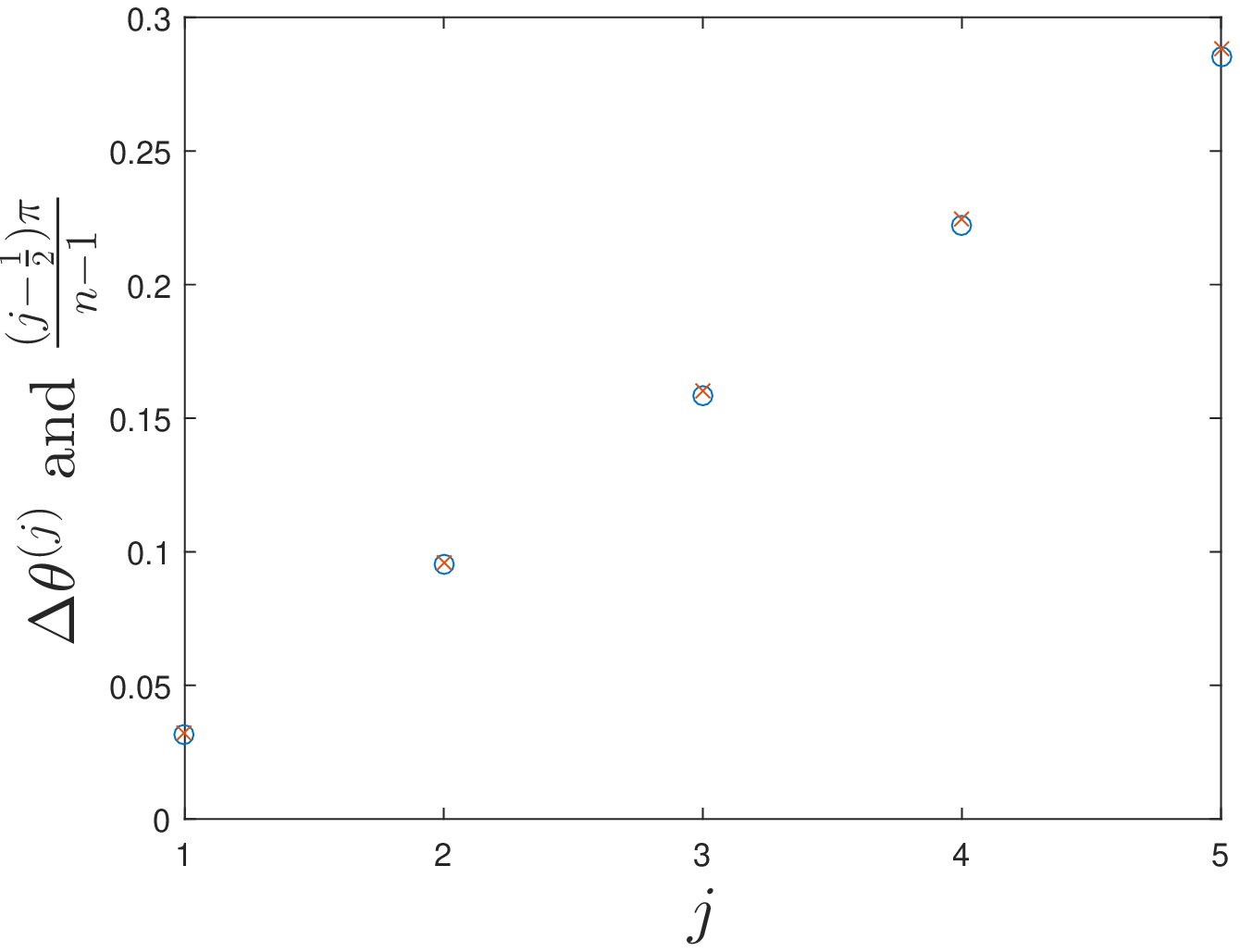}}
\end{minipage}
\hfill
\begin{minipage}{0.5\linewidth}
\leftline{(d)}
\centerline{\includegraphics[height = 5cm, width = 6cm]{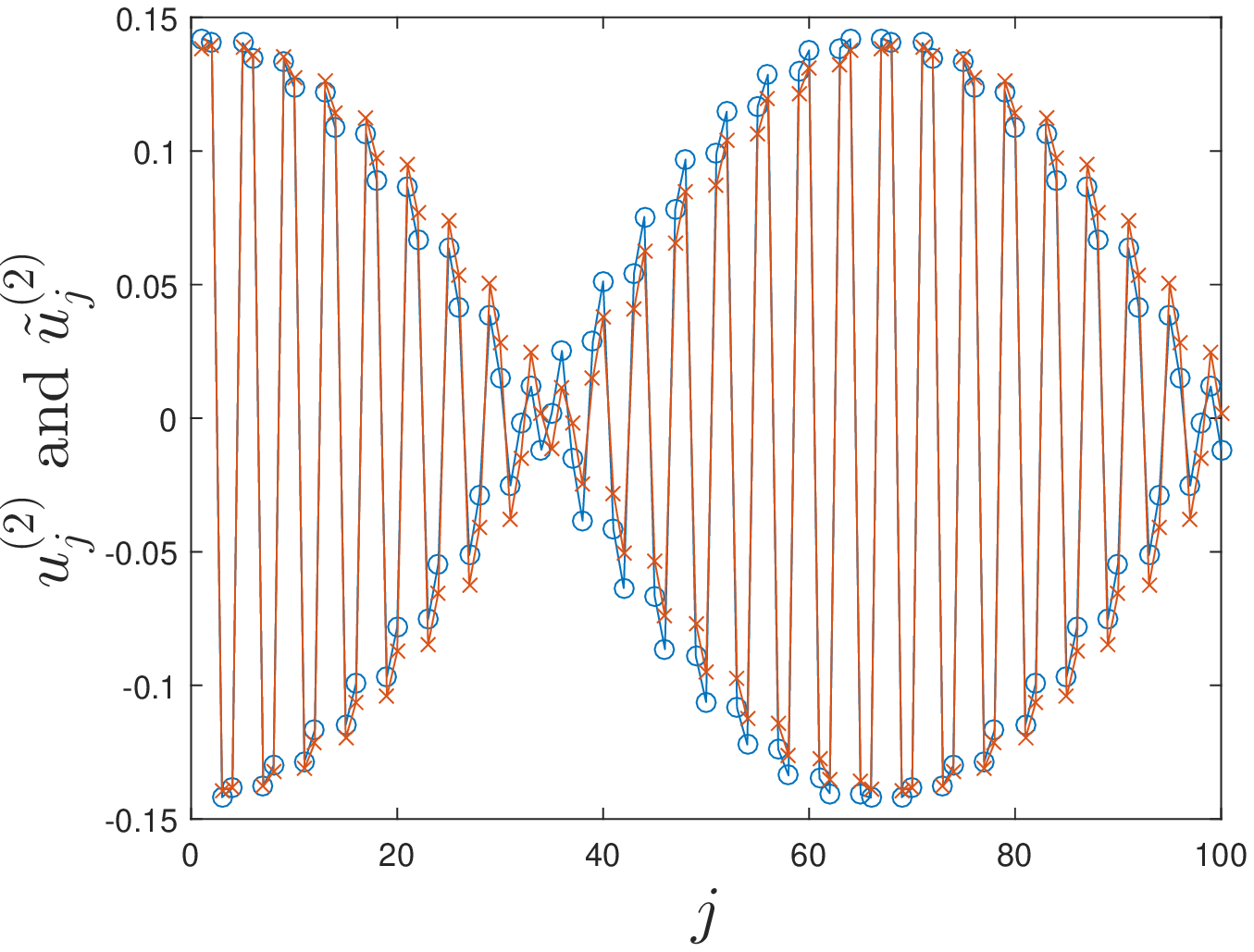}}
\end{minipage}
\caption{
Here we show the approximations for $\Delta\theta^{(j)}$ and the eigenstates near the lower edge of optical band ($\omega^2\approx 2k_2$) in chains with $n=50$, $k_1=0.6$,
$k_{2}=1.4$ and $k_{3,2}=1.6$. In panel (a), we plot numerically obtained $\Delta\theta^{(j)}$ (denoted by blue``$\circ$") for $k_{3,1}=2\not\approx 2k_2$ and compare them with the approximations $\frac{j\pi}{n-1}$ (denoted by red``$\times$''). In panel (b), the corresponding eigenstate $u^{(2)}$ (denoted by blue $``\circ"$) and its approximation $\tilde{u}^{(2)}$ (denoted by red $``\times"$) are illustrated respectively. The panels (c) and (d) in the bottom row follow the same structure as the top row. The only difference is that now we choose $k_{3,1}=2k_2$ so that $\Delta\theta^{(j)}$ in panel (c) are compared with $\frac{(j-\frac{1}{2})\pi}{n-1}$.
}
\end{figure}

In the same spirit, the eigenfrequencies near the upper edge of the optical band can be obtained:
\begin{remark}
\label{remark:op upper}
Suppose $1\leq j\ll n$ and $\omega^2\approx 2k_1+2k_2$.    
\begin{itemize}
    \item When $k_{3,1}\not\approx 2k_2 \not\approx k_{3,2}$, 
    \begin{equation}
    \label{k31 k32 away from 2k2}
        \theta^{(n_1+1-j)}\approx 2\pi-\frac{j}{n-1}\pi.
    \end{equation}
    \item When $k_{3,1}\not\approx 2k_2$ and $|k_{3,2}-2k_2|\ll \frac{1}{n}$, 
    \begin{equation}
    \label{upper edge of optical band}
        \theta^{(n_1+1-j)}\approx 2\pi-\frac{j-\frac{1}{2}}{n-1}\pi.
    \end{equation}
\end{itemize}
\end{remark}

\subsection{Eigenfrequencies in the acoustic band}
\label{sec:acoustic band}
For simplicity of the numbering, we assume the chain bears two edgestates with their eigenfrequencies in the bandgap (which roughly corresponds to the condition $k_{3,1}, k_{3,2}\in (0, 2k_2)$). If we denote the frequencies in the acoustic band by
\begin{equation}
0<(\omega^{(2n)})^2<(\omega^{(2n-1)})^2<\cdots<(\omega^{(2n-n_2+1)})^2<2k_1
\end{equation}
with the form
\begin{equation}
\omega^2=k_1+k_2-\sqrt{k_1^2+k_2^2+2k_1 k_2 \cos(\theta)},
\end{equation}
then 
\begin{equation}
\pi<\theta^{(2n-n_2+1)}<\theta^{(2n-n_2+2)}<\cdots<\theta^{(2n)}<2\pi.
\end{equation}
Revisiting the strategies in Sec.~\ref{sec:optical band}, we can also achieve the estimates for eigenfrequencies near the edges of the acoustic band:
\begin{remark}
Suppose $1\leq j\ll n$,     
\begin{itemize}
    \item when $k_{3,1}\not\approx 0 \not\approx k_{3,2}$, 
    \begin{eqnarray}
        \theta^{(2n-n_2+j)} &\approx& \pi+\frac{j}{n-1}\pi, \\
        \theta^{(2n-j+1)} &\approx& 2\pi-\frac{j}{n-1}\pi.
    \end{eqnarray}
    \item When $k_{3,1}\not\approx 0$ and $|k_{3,2}|\ll \frac{1}{n}$, 
  \begin{eqnarray}
 \theta^{(2n-n_2+j)} &\approx& \pi+\frac{j-\frac{1}{2}}{n-1}\pi, \\
        \theta^{(2n-j+1)} &\approx& 2\pi-\frac{j-\frac{1}{2}}{n-1}\pi.
    \end{eqnarray}
\end{itemize}
\end{remark}

\subsection{The forms and approximations of eigenstates}
\label{form of eigenstates}

After knowing the eigenfrequencies $\omega^{(k)}$ (and $\theta^{(k)}$), the corresponding eigenstates $u^{(k)}$ can be explicitly expressed. 
Here we only show the generic case where $k_{3,1}$ and $k_{3,2}$ are away from $0$ and $2k_2$ while other special cases can be similarly derived.

\begin{itemize}
  \item When $(\omega^{(k)})^{2}\in(2k_{2},2k_{1}+2k_{2})$:
   
According to the periodic structure of diatomic chains and \eqref{eq:eigenvector}\eqref{eq:v1v2}\eqref{eq:c1c2}, the eigenvector $u^{(k)}$ of $\mathcal{L}$ for eigenvalue $-(\omega^{(k)})^2$ has the form
\begin{equation}
\label{matrix:uk op}
  \left(
  \begin{array}{c}
    u^{(k)}_{2j-1} \\
    u^{(k)}_{2j}
  \end{array}
  \right)
  =\left(
  \begin{array}{c}
    \cos(\alpha^{(k)}+\beta^{(k)}+(j-1)\theta^{(k)})\\
    -\cos(-\alpha^{(k)}+\beta^{(k)}+(j-1)\theta^{(k)})
  \end{array}
  \right), \quad 1\leq j\leq n_1
\end{equation}
which near the lower edge of the optical band ($k\ll n_1$) can be written as
\begin{equation}
\label{matrix:uk op lower}
\begin{aligned}
  \left(
  \begin{array}{c}
    u^{(k)}_{2j-1} \\
    u^{(k)}_{2j}
  \end{array}
  \right)
  &=(-1)^{j}\left(
  \begin{array}{c}
    \sin(\Delta\alpha^{(k)}+\Delta\beta^{(k)}+(j-1)\Delta\theta^{(k)})\\
    \sin(-\Delta\alpha^{(k)}+\Delta\beta^{(k)}+(j-1)\Delta\theta^{(k)})
  \end{array}
  \right), \\
\theta^{(k)}&=\pi+\Delta\theta^{(k)}, \quad \alpha^{(k)}=\frac{\pi}{2}+\Delta\alpha^{(k)}, \quad \beta^{(k)}=\Delta\beta^{(k)} 
\end{aligned}
\end{equation}
and near the upper edge ($n_1-k\ll n_1$) yields
\begin{equation}
\label{matrix:uk op upper}
\begin{aligned}
  \left(
  \begin{array}{c}
    u^{(k)}_{2j-1} \\
    u^{(k)}_{2j}
  \end{array}
  \right)
  &=\left(
  \begin{array}{c}
    -\sin(\Delta\alpha^{(k)}+\Delta\beta^{(k)}+(j-1)\Delta\theta^{(k)})\\
    \sin(-\Delta\alpha^{(k)}+\Delta\beta^{(k)}+(j-1)\Delta\theta^{(k)})
  \end{array}
  \right), \\
\theta^{(k)}&=2\pi+\Delta\theta^{(k)}, \quad \alpha^{(k)}=\Delta\alpha^{(k)}, \quad \beta^{(k)}=\frac{\pi}{2}+\Delta\beta^{(k)} .
\end{aligned}
\end{equation}
Since $\Delta\theta^{(k)}\approx k\Delta\theta^{(1)}$ when $1\leq k\ll n$, we can approximate $u^{(k)}$ by $\tilde{u}^{(k)}$ where 
\begin{equation}
  \left(
  \begin{array}{c}
    \tilde{u}^{(k)}_{2j-1} \\
    \tilde{u}^{(k)}_{2j}
  \end{array}
  \right)
  =(-1)^{j}\left(
  \begin{array}{c}
    \sin(k(\Delta\alpha^{(1)}+\Delta\beta^{(1)}+(j-1)\Delta\theta^{(1)}))\\
    \sin(k(-\Delta\alpha^{(1)}+\Delta\beta^{(1)}+(j-1)\Delta\theta^{(1)}))
  \end{array}
  \right), \quad 1\leq k\ll n_1.
 \label{eq:approx eigenvector}
\end{equation}
In this situation we recall $|\Delta\alpha^{(k)}-k\Delta\alpha^{(1)}|\leq O(|\Delta\theta^{(k)}|^3)$, $|\Delta\beta^{(k)}-k\Delta\beta^{(1)}|\leq O(|\Delta\theta^{(k)}|^3)$ and $|\Delta\theta^{(k)}-k\Delta\theta^{(1)}|\leq \mathcal{O}(\frac{k^3}{n^4})$, then obtain that 
\begin{equation}
\label{eq:approx eigenvector err}
|\tilde{u}^{(k)}_j-u^{(k)}_j|\leq \mathcal{O}(\frac{k^3}{n^3}).
\end{equation}

\item When $(\omega^{(k)})^{2}\in(0,2k_{1})$:

Similarly the eigenvector $u^{(k)}$ ($2n-n_2+1\leq k\leq 2n$) has the form
\begin{equation}
\label{matrix:uk ac}
  \left(
  \begin{array}{c}
    u^{(k)}_{2j-1} \\
    u^{(k)}_{2j}
  \end{array}
  \right)
  =\left(
  \begin{array}{c}
    \cos(\alpha^{(k)}+\beta^{(k)}+(j-1)\theta^{(k)})\\
    \cos(-\alpha^{(k)}+\beta^{(k)}+(j-1)\theta^{(k)})
  \end{array}
  \right), \quad 1\leq j\leq n
\end{equation}
which near the upper edge of the acoustic band ($k-(2n-n_2)\ll n_2$) becomes
\begin{equation}
\label{matrix:uk ac upper}
\begin{aligned}
\left(
  \begin{array}{c}
    u^{(k)}_{2j-1} \\
    u^{(k)}_{2j}
  \end{array}
  \right)
  &=(-1)^{j}\left(
  \begin{array}{c}
    \sin(\Delta\alpha^{(k)}+\Delta\beta^{(k)}+(j-1)\Delta\theta^{(k)})\\
    -\sin(-\Delta\alpha^{(k)}+\Delta\beta^{(k)}+(j-1)\Delta\theta^{(k)})
  \end{array}
  \right), \\
\theta^{(k)}&=\pi+\Delta\theta^{(k)}, \quad \alpha^{(k)}=\frac{\pi}{2}+\Delta\alpha^{(k)}, \quad \beta^{(k)}=\Delta\beta^{(k)}  
\end{aligned}
\end{equation}
and near the lower edge ($2n-k\ll n_2$) reads
\begin{equation}
\label{matrix:uk ac lower}
\begin{aligned}
\left(
  \begin{array}{c}
    u^{(k)}_{2j-1} \\
    u^{(k)}_{2j}
  \end{array}
  \right)
  &=\left(
  \begin{array}{c}
    -\sin(\Delta\alpha^{(k)}+\Delta\beta^{(k)}+(j-1)\Delta\theta^{(k)})\\
    -\sin(-\Delta\alpha^{(k)}+\Delta\beta^{(k)}+(j-1)\Delta\theta^{(k)})
  \end{array}
  \right), \\
\theta^{(k)}&=2\pi+\Delta\theta^{(k)}, \quad \alpha^{(k)}=\Delta\alpha^{(k)}, \quad \beta^{(k)}=\frac{\pi}{2}+\Delta\beta^{(k)}.  
\end{aligned}
\end{equation}

\end{itemize}

\subsection{Some properties of the eigenstates}

\begin{itemize}
\item Norms of the eigenstates

For the generic case  $k_{3,1}\not\approx 2k_{2}\not\approx k_{3,2}$, we have the following results
\begin{lemma}
\label{norm of eigenstates}
When $(\omega^{(k)})^{2}\in(0,2k_{1})\cup(2k_{2},2k_{1}+2k_{2})$, $\forall \varepsilon>0$,
\begin{equation}
  ||u^{(k)}|^{2}-n|\lesssim n^{\varepsilon} ;
\label{norm of uk in spectral band}
\end{equation}
\begin{proof}
See Appendix \ref{sec:norm lemma}.
\end{proof}
\end{lemma}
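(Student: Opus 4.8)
The plan is to evaluate $|u^{(k)}|^{2}=\|u^{(k)}\|_{2}^{2}$ directly from the closed forms of the eigenstates and then reduce the estimate to a uniform bound on a single trigonometric sum. First I would substitute \eqref{matrix:uk op} (when $(\omega^{(k)})^{2}\in(2k_{2},2k_{1}+2k_{2})$) or \eqref{matrix:uk ac} (when $(\omega^{(k)})^{2}\in(0,2k_{1})$) into $|u^{(k)}|^{2}=\sum_{j=1}^{n}\big((u^{(k)}_{2j-1})^{2}+(u^{(k)}_{2j})^{2}\big)$. Writing $\psi_{j}=\beta^{(k)}+(j-1)\theta^{(k)}$ and using $\cos^{2}x=\tfrac12(1+\cos 2x)$ together with $\cos(A+B)+\cos(A-B)=2\cos A\cos B$, one checks that in \emph{both} the optical and the acoustic case the contribution of the $j$-th cell is $1+\cos(2\alpha^{(k)})\cos(2\psi_{j})$, whence
\begin{equation*}
|u^{(k)}|^{2}=n+\cos(2\alpha^{(k)})\,S_{k},\qquad S_{k}:=\sum_{j=1}^{n}\cos\!\big(2\beta^{(k)}+2(j-1)\theta^{(k)}\big).
\end{equation*}
Summing the geometric series $\sum_{j=1}^{n}e^{2i(j-1)\theta^{(k)}}$ and taking real parts gives $|S_{k}|\le|\sin(n\theta^{(k)})|/|\sin\theta^{(k)}|$, which is meaningful because the generic hypotheses ($k_{3,1}\not\approx 2k_{2}\not\approx k_{3,2}$ for the optical band, $k_{3,1}\not\approx 0\not\approx k_{3,2}$ for the acoustic band) exclude band edge states, so $\theta^{(k)}\notin\{\pi,2\pi\}$. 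Since $|\cos(2\alpha^{(k)})|\le 1$, it then suffices to bound $|\sin(n\theta^{(k)})|/|\sin\theta^{(k)}|$ by $n^{\varepsilon}$.

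For frequencies whose $\theta^{(k)}$ stays a fixed positive distance from $\{\pi,2\pi\}$ we have $|\sin\theta^{(k)}|\gtrsim 1$ and hence $\big||u^{(k)}|^{2}-n\big|\le|S_{k}|=O(1)$. The substantive case is when $\theta^{(k)}$ approaches a band edge; by analogous computations at each of the four band edges I would treat, say, $\theta^{(k)}=\pi+\Delta\theta^{(k)}$ with $0<\Delta\theta^{(k)}\ll 1$. For the values of $k$ covered by the estimates of Sections~\ref{sec:optical band} and \ref{sec:acoustic band} I would invoke \eqref{eq:Delta theta}--\eqref{theta_k} (and their acoustic analogues) in the form $(n-1)\Delta\theta^{(k)}=k\pi+O(\Delta\theta^{(k)})$; then $n\Delta\theta^{(k)}=k\pi+r_{k}$ with $|r_{k}|=O(\Delta\theta^{(k)})$, so that $|\sin(n\theta^{(k)})|=|\sin r_{k}|\le|r_{k}|=O(\Delta\theta^{(k)})$ while $|\sin\theta^{(k)}|=|\sin\Delta\theta^{(k)}|\ge\tfrac{2}{\pi}\Delta\theta^{(k)}$, which gives $|S_{k}|=O(1)$. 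For the remaining values of $k$ — a transition window in which $\Delta\theta^{(k)}$ is small but the fine expansion is not directly available — I would use the trivial bound $|\sin(n\theta^{(k)})|\le 1$ together with $|\sin\theta^{(k)}|=|\sin\Delta\theta^{(k)}|\gtrsim n^{-\varepsilon}$ (valid there because $\Delta\theta^{(k)}$ can be taken $\gtrsim n^{-\varepsilon}$), obtaining $|S_{k}|\lesssim n^{\varepsilon}$. Collecting the three cases yields $\big||u^{(k)}|^{2}-n\big|\le|S_{k}|\lesssim n^{\varepsilon}$ for every $\varepsilon>0$, which is \eqref{norm of uk in spectral band}.

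The main obstacle is this last case, namely controlling the cancellation in $S_{k}$ uniformly in $k$: this requires knowing that $n\theta^{(k)}$ is an integer multiple of $\pi$ up to an error that is itself $O(|\sin\theta^{(k)}|)$, and near the band edges this is precisely the quantization recorded by the eigenfrequency estimates of Section~\ref{sec:estimate}. The statement is phrased with an $n^{\varepsilon}$ loss rather than $O(1)$ exactly to absorb the parameter range where only the weaker bound $|\sin\theta^{(k)}|\gtrsim n^{-\varepsilon}$ is immediately at hand; a more careful accounting of the higher-order terms of $\theta^{(k)}$ would upgrade the bound to $\big||u^{(k)}|^{2}-n\big|=O(1)$.
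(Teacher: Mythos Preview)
Your argument is correct and essentially the same as the paper's. The paper computes $|u^{(k)}|^{2}-n$ as the two oscillating sums with phases $p_{k,\pm}=\beta^{(k)}\pm\alpha^{(k)}$ (Appendix~\ref{sec:norm lemma}, equations \eqref{form: norm of u_k}--\eqref{form: norm of u_k 2}), sums the geometric series, and then makes the same two-case split: for $1\le k\le n^{1-\varepsilon/2}$ (or the mirror range near the upper edge) it invokes the quantization \eqref{eq:Delta theta} to show numerator and denominator are of the same order in $\Delta\theta^{(k)}$, giving $O(1)$; for the remaining $k$ it uses the trivial bound $\tfrac{4}{1-\cos 2\theta^{(k)}}=\tfrac{2}{\sin^{2}\theta^{(k)}}\lesssim n^{\varepsilon}$.

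Your version is slightly cleaner in two places: you collapse the two $p_{k,\pm}$ sums into the single expression $\cos(2\alpha^{(k)})S_{k}$ via the identity $\cos^{2}(\alpha+\psi)+\cos^{2}(\psi-\alpha)=1+\cos 2\alpha\cos 2\psi$, and your Dirichlet bound $|S_{k}|\le|\sin(n\theta^{(k)})|/|\sin\theta^{(k)}|$ has only one power of $\sin\theta^{(k)}$ in the denominator, so your ``transition window'' only needs $|\sin\theta^{(k)}|\gtrsim n^{-\varepsilon}$ rather than $n^{-\varepsilon/2}$. These are cosmetic; the underlying mechanism---quantization forces $n\theta^{(k)}\in\pi\mathbb{Z}+O(\Delta\theta^{(k)})$ near band edges, and away from edges the denominator is harmless---is exactly the paper's.
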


\item Estimates on the inner products involving the eigenstates

\begin{definition}
\label{definition of Hadamard product}
Suppose $\vec{x}=(x_{1},x_{2},\cdots,x_{n})^{\top}$ and $\vec{y}=(y_{1},y_{2},\cdots,y_{n})^{\top}$ are two vectors. Let
"$\circ$" denotes the Hadamard product such that
\begin{equation*}
  \vec{x}\circ\vec{y}=(x_{1}y_{1},x_{2}y_{2},\cdots,x_{n}y_{n})^{\top}
\end{equation*}
In what follows, we use $\vec{x}\vec{y}$ to represent $\vec{x}\circ\vec{y}$ when not causing confusion.
\end{definition}

Here the assumption on $k_{3,1}$ and $k_{3,2}$ is the same as that in 
 Sec.~\ref{form of eigenstates}. Thus,
\begin{lemma}
\label{lemma: inner product}
(1). If $(\omega^{(a)})^{2},(\omega^{(b)})^{2},
(\omega^{(c)})^{2},(\omega^{(d)})^{2}\in (0,2k_1)\cup(2k_{2},2k_{1}+2k_{2})$, then 
\begin{equation}
    |(u^{(a)}u^{(b)}u^{(c)},u^{(d)})|\leq 2n.
\end{equation}

(2). If at least one of $(\omega^{(a)})^{2},(\omega^{(b)})^{2},
(\omega^{(c)})^{2},(\omega^{(d)})^{2}$ is not in $[0,2k_1]\cup[2k_{2},2k_{1}+2k_{2}]$, then 
\begin{equation}
    |(u^{(a)}u^{(b)}u^{(d)},u^{(d)})|\leq \mathcal{O}(1).
\end{equation}

(3). Suppose $(\omega^{(a)})^{2},(\omega^{(b)})^{2},
(\omega^{(c)})^{2},(\omega^{(d)})^{2}\in(2k_{2},2k_{1}+2k_{2})$. If $k=\max\{a,b,c,d\}$ and $1\leq k\ll n$, then for $k_{3,1}\neq 2k_{2}\neq k_{3,2}$
\begin{itemize}
    \item $\prod\limits_{j_1=0}^1\prod\limits_{j_2=0}^1\prod\limits_{j_3=0}^1(a+(-1)^{j_1}b+(-1)^{j_2}c+(-1)^{j_3}d)\neq 0$ implies
    \begin{equation}
     |(u^{(a)}u^{(b)}u^{(d)},u^{(d)})|\sim \mathcal{O}(1);
    \end{equation}
    \item $\prod\limits_{j_1=0}^1\prod\limits_{j_2=0}^1\prod\limits_{j_3=0}^1(a+(-1)^{j_1}b+(-1)^{j_2}c+(-1)^{j_3}d)= 0$ implies
    \begin{equation}
    |(u^{(a)}u^{(b)}u^{(d)},u^{(d)})|\sim \mathcal{O}(n)
    \end{equation}
\end{itemize}  
(4). When $(\omega^{(a)})^{2},(\omega^{(b)})^{2},(\omega^{(c)})^{2},(\omega^{(d)})^{2}\in(2k_{2},2k_{1}+2k_{2})$ and $\max\{a,b,c\}\ll n^{1-\varepsilon}< d< n_{1}-a-b-c$,
\begin{equation}
  (u^{(a)}u^{(b)}u^{(c)},u^{(d)})\lesssim\ n^{2\varepsilon}
   \label{eq:inner product with a far frequency}
\end{equation}    
 \begin{proof}
 See \textbf{Appendix} \ref{sec:inner product}.
 \end{proof}   

\end{lemma}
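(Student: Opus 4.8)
The plan is to treat the four claims in order of difficulty; (1) and (2) are immediate from the pointwise size of the eigenstates. For (1), all four frequencies lie in the bands, so by the explicit forms \eqref{matrix:uk op}, \eqref{matrix:uk ac} every component of each $u^{(k)}$ is a single cosine, hence $\|u^{(k)}\|_\infty\le 1$; the Hadamard product $u^{(a)}u^{(b)}u^{(c)}$ then also has $\ell^\infty$-norm $\le 1$, and summing $2n$ terms gives $|(u^{(a)}u^{(b)}u^{(c)},u^{(d)})|\le 2n$. For (2), relabel so that $\omega^{(d)}$ lies outside $[0,2k_1]\cup[2k_2,2k_1+2k_2]$; then $u^{(d)}$ is an edge state of the type studied in Section~\ref{sec:edgestates}, with $|u^{(d)}_{2j-1}|,|u^{(d)}_{2j}|\lesssim |a|^{\,j-1}+|a|^{\,2n-1-j}$ for some $|a|<1$ (using $|c_2a^{2-2n}|\le\mathcal O(1)$, or its mirror for a right-localized state), so $\|u^{(d)}\|_1=\mathcal O(1)$; since the remaining three factors have $\ell^\infty$-norm $\mathcal O(1)$, we get $|(u^{(a)}u^{(b)}u^{(c)},u^{(d)})|\le\|u^{(a)}\|_\infty\|u^{(b)}\|_\infty\|u^{(c)}\|_\infty\|u^{(d)}\|_1=\mathcal O(1)$.

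For (3) and (4) I would substitute the near-edge trigonometric representations of Section~\ref{form of eigenstates}. Since $\max\{a,b,c,d\}=k\ll n$, in part (3) all four eigenstates take the near-lower-edge form \eqref{matrix:uk op lower}: $u^{(k)}_{2j-1}=(-1)^j\sin\eta^{(k)}_j$ and $u^{(k)}_{2j}=(-1)^j\sin\tilde\eta^{(k)}_j$ with $\eta^{(k)}_j=\Delta\alpha^{(k)}+\Delta\beta^{(k)}+(j-1)\Delta\theta^{(k)}$, $\tilde\eta^{(k)}_j=-\Delta\alpha^{(k)}+\Delta\beta^{(k)}+(j-1)\Delta\theta^{(k)}$. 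Thus $(u^{(a)}u^{(b)}u^{(c)},u^{(d)})$ splits into the odd-index sum $\sum_{j=1}^n\prod_m\sin\eta^{(m)}_j$ and the even-index sum $\sum_{j=1}^n\prod_m\sin\tilde\eta^{(m)}_j$. Expanding each product of four sines by the product-to-sum identity rewrites these as $\tfrac18\sum_{\vec s\in\{\pm1\}^3}s_bs_cs_d\,G_{\vec s}$, where (with $s_a:=1$) $G_{\vec s}=\sum_{j=1}^n\cos\!\big(A_{\vec s}+(j-1)B_{\vec s}\big)$, $B_{\vec s}=\sum_m s_m\Delta\theta^{(m)}$, and $A_{\vec s}$ is the $j$-independent part. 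I then use the closed form $G_{\vec s}=\cos\!\big(A_{\vec s}+\tfrac{(n-1)B_{\vec s}}2\big)\dfrac{\sin(nB_{\vec s}/2)}{\sin(B_{\vec s}/2)}$ (and $G_{\vec s}=n\cos A_{\vec s}$ when $B_{\vec s}\in2\pi\mathbb Z$), together with the sharp asymptotics of Section~\ref{sec:estimate}: $(n-1)\Delta\theta^{(k)}=k\pi+\mathcal O(k/n)$ and $\Delta\alpha^{(k)},\Delta\beta^{(k)}=\mathcal O(k/n)$, each proportional to $k$ up to an $\mathcal O((k/n)^3)$ error by \eqref{theta_k} and the near-proportionality relations recalled before \eqref{eq:approx eigenvector err}. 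From these one reads off $A_{\vec s}=\mathcal O(N_{\vec s}/n)$ and $(n-1)B_{\vec s}\equiv\pi N_{\vec s}+\mathcal O(N_{\vec s}/n)\pmod{2\pi}$, where $N_{\vec s}:=a+s_bb+s_cc+s_dd\in\mathbb Z$, $|N_{\vec s}|\le 4k$.

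The dichotomy in (3) now falls out. If $N_{\vec s}=0$ then $B_{\vec s}\in2\pi\mathbb Z$ (equivalently $|B_{\vec s}|\lesssim k/n^2$) and $G_{\vec s}=n\cos A_{\vec s}=n(1+o(1))$. If $N_{\vec s}\neq0$ then $1\le|N_{\vec s}|\le4k\ll n$ forces $|\sin(B_{\vec s}/2)|\sim|N_{\vec s}|/n$, while $\cos(A_{\vec s}+\tfrac{(n-1)B_{\vec s}}2)=\cos(\tfrac{\pi N_{\vec s}}2+\mathcal O(N_{\vec s}/n))$ and $\sin(nB_{\vec s}/2)=\sin(\tfrac{\pi N_{\vec s}}2+\mathcal O(N_{\vec s}/n))$ have — according to the parity of $N_{\vec s}$ — one factor of size $\sim1$ and one of size $\mathcal O(N_{\vec s}/n)$, so $|G_{\vec s}|=\mathcal O(1)$. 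Summing the finitely many $\vec s$ (over both the odd and even parts, whose leading terms carry the same sign) gives $(u^{(a)}u^{(b)}u^{(c)},u^{(d)})=\tfrac n4\big(\sum_{\vec s:\,N_{\vec s}=0}s_bs_cs_d\big)+\mathcal O(1)$, which is $\mathcal O(1)$ when $\prod_{j_1,j_2,j_3\in\{0,1\}}\big(a+(-1)^{j_1}b+(-1)^{j_2}c+(-1)^{j_3}d\big)\neq0$ and of exact order $n$ otherwise — for the latter one checks that, for positive integers $a,b,c,d$, the signed count $\sum_{\vec s:\,N_{\vec s}=0}s_bs_cs_d$ is a nonzero integer whenever some $N_{\vec s}$ vanishes (a short case analysis on which of $a,b,c,d$ is largest shows the vanishing sign-patterns all share the same value of $s_bs_cs_d$). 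For (4) the same expansion applies, now using \eqref{matrix:uk op} for the ``far'' factor $u^{(d)}$ and the near-lower-edge forms for $u^{(a)},u^{(b)},u^{(c)}$; since $\theta^{(a)},\theta^{(b)},\theta^{(c)}=\pi+\mathcal O(n^{-\varepsilon})$ while $\Delta\theta^{(d)}=\theta^{(d)}-\pi$ lies at distance $\gtrsim n^{-\varepsilon}$ from $0$ and from $\pi$ (because $\max\{a,b,c\}\ll n^{1-\varepsilon}<d<n_1-a-b-c$), every slope $B_{\vec s}$ stays at distance $\gtrsim n^{-\varepsilon}$ from $2\pi\mathbb Z$, hence $|G_{\vec s}|\le 1/|\sin(B_{\vec s}/2)|\lesssim n^{\varepsilon}$ and $(u^{(a)}u^{(b)}u^{(c)},u^{(d)})\lesssim n^{\varepsilon}\le n^{2\varepsilon}$ (the slack absorbs the $\mathcal O(k^3/n^3)$ errors if one prefers to pass through the approximants $\tilde u^{(k)}$ of \eqref{eq:approx eigenvector}).

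I expect the main obstacle to be the bookkeeping in (3): getting genuine $\mathcal O(1)$ bounds (rather than the naive $\mathcal O(k)$) for the non-resonant $G_{\vec s}$ hinges on the fact that the $\mathcal O(k/n)$-size errors in $A_{\vec s}$ and in the phase $\tfrac{(n-1)B_{\vec s}}2$ are actually proportional to $N_{\vec s}$, which is precisely where the near-proportionality $\Delta\alpha^{(k)}\approx k\Delta\alpha^{(1)}$, $\Delta\beta^{(k)}\approx k\Delta\beta^{(1)}$, $\Delta\theta^{(k)}\approx k\Delta\theta^{(1)}$ of Section~\ref{sec:estimate} is used; and then on separately verifying the non-vanishing of the signed count so that the resonant case is of exact order $n$ rather than merely $\mathcal O(n)$.
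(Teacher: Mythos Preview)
Your proposal is correct and follows essentially the same route as the paper's proof in Appendix~\ref{sec:inner product}. Parts (1) and (2) match the paper exactly. For (3) the paper also expands the product of sines into a signed sum of geometric progressions and bounds each non-resonant piece as $\mathcal O(1)$ using the near-proportionality $\Delta\theta^{(k)}\approx k\Delta\theta^{(1)}$, then passes to the approximants $\tilde u^{(k)}$ to verify non-cancellation in the resonant case; your Dirichlet-kernel formulation and parity argument are a cosmetic repackaging of the same computation, and your observation that all vanishing sign patterns share the same value of $s_bs_cs_d$ is precisely the content of the paper's case analysis in Lemma~\ref{lemma:approx inner product order n}. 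For (4) you obtain the slightly sharper bound $n^{\varepsilon}$ via $|G_{\vec s}|\le 1/|\sin(B_{\vec s}/2)|$, whereas the paper bounds the numerator crudely by $4$ to get $n^{2\varepsilon}$; both suffice for the stated claim.
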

\end{itemize}

\section{Emergence of a nonlinear localized state with frequency near the lower edge of optical band}
\label{add nonlinearility}
When the interactions between the masses gradually become nonlinear, linear states (localized and nonlocalized) discussed in previous sections will in general persist, which can be proved by the Lyapunov-Schmidt Reduction and the Implicit Function Theorem. However, the strength of nonlinearity for the Implicit Function Theorem to hold can be very small in long chains. As the nonlinearity grows, frequencies previously in the band may cross the band edge and new nonlinear localized states 
can emerge. Here we add a cubic term to the linear system \eqref{eq:motion1} as an example to explore this formation mechanism of localized solutions:
\begin{equation}
  \frac{d^{2}}{dt^{2}}q(t)=\mathcal{L}q(t)+(q(t))^{3}
  \label{eq:motion nl0}
\end{equation}
with energy
\begin{equation}
\label{eq:energy}
\begin{split}
E=&\sum_{j=1}^{2n}[\frac{1}{2}(\dot{q}_j)^2-\frac{1}{4}q_j^4] +\sum_{j=1}^{n}\frac{k_1}{2}(q_{2j-1}-q_{2j})^2+\sum_{j=1}^{n-1}\frac{k_2}{2}(q_{2j+1}-q_{2j})^2 \\
&+\frac{k_{3,1}}{2}q_{1}^2+\frac{k_{3,2}}{2}q_{2n}^2
\end{split}
\end{equation}
where $q^{3}$ represents the Hadamard product $q\circ q\circ q$.
Focusing on time-periodic solutions, we 
define $\tau=\omega t$ and $Q(\tau)=q(\frac{\tau}{\omega})=q(t)$.
Then the nonlinear system \eqref{eq:motion nl0} becomes
\begin{equation}
  \omega^{2}\frac{d^{2}}{d\tau^{2}}Q(\tau)=\mathcal{L}Q(\tau)+(Q(\tau))^{3}
  \label{eq:motion nl}
\end{equation}
By the Implicit Function Theorem, linear periodic solutions can possibly be continued to families of nonlinear periodic solutions in $L^{2}_{per}[0,2\pi]$ parametrized by the frequency or the amplitude. In particular, we study the frequency near the lower edge of optical band and write the expansions as follows:
\begin{equation}
\begin{split}
 \omega  &= \omega^{<0>}+\epsilon\omega^{<1>}+\epsilon^{2}\omega^{<2>}+\cdots \\
 Q(\tau) &= \epsilon Q^{<0>}(\tau)+\epsilon^{2}Q^{<1>}(\tau)+\epsilon^{3}Q^{<2>}(\tau)+\cdots
 \end{split}
 \label{eq:series1}
\end{equation}
where $\omega^{< 0 >}=\omega^{(1)}$ and  $Q^{< 0 >}(\tau)=\frac{u^{(1)}}{|u^{(1)}|}(e^{i\tau}+e^{-i\tau})$.

For solutions even in time, we write $Q^{<m>}$ as 
\begin{equation}
  Q^{<m>}(\tau)=\sum_{j=1}^{2n}
  \sum_{k=1}^{+\infty}c_{m,k,j}
  \frac{u^{(j)}}
  {|u^{(j)}|}
  (e^{ik\tau}+e^{-ik\tau})
\label{eq:The form of Q}
\end{equation}
where $c_{0,1,1}=1$ and $c_{m,1,1}=0$ for $m\geq 1$. As a result, we have  $(Q^{<0>}(\tau),Q^{<k>}(\tau))=0$ for $k\geq 1$ where the $L^2$ inner product is given by
$$
(\vec{f}(\tau),\vec{g}(\tau))=\frac{1}{2\pi}\int^{2\pi}_{0}\sum_{j=1}^{2n}f_{j}(\tau)g_{j}(\tau)d\tau.
$$

By exploiting the symmetry of $\mathcal{L}$ and the cubic form of nonlinearity, we simplify \eqref{eq:series1} as in the following lemma.
\begin{lemma}
\label{prop:simple form}
For $m\geq 0$, 
\begin{equation*}
  \omega^{< 2m+1 >}=0,\quad Q^{< 2m+1 >}(\tau)=0.
\end{equation*}
\end{lemma}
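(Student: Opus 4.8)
The proof rests on the odd symmetry of \eqref{eq:motion nl}: since $\mathcal{L}$ is linear and the nonlinearity is cubic, the map $(Q(\tau),\omega)\mapsto(-Q(\tau),\omega)$ sends solutions of \eqref{eq:motion nl} to solutions. The branch $(Q_\epsilon(\tau),\omega_\epsilon)$ produced by the Implicit Function Theorem is parametrized so that $\epsilon$ is exactly the coefficient of $Q^{<0>}$ in $Q$ (because $c_{0,1,1}=1$ and $c_{m,1,1}=0$ for $m\ge 1$); hence $-Q_\epsilon$ is again a periodic solution with frequency $\omega_\epsilon$ whose $Q^{<0>}$-coefficient is $-\epsilon$, and uniqueness of the branch forces, as formal power series in $\epsilon$,
\begin{equation*}
Q_{-\epsilon}(\tau)=-Q_\epsilon(\tau),\qquad \omega_{-\epsilon}=\omega_\epsilon.
\end{equation*}
Matching the coefficient of $\epsilon^{m}$ in $\omega_{-\epsilon}=\omega_\epsilon$ gives $(-1)^m\omega^{<m>}=\omega^{<m>}$, and matching the coefficient of $\epsilon^{m+1}$ in $-Q_\epsilon=Q_{-\epsilon}$ gives $(-1)^mQ^{<m>}=Q^{<m>}$; both quantities vanish for odd $m$, i.e. $\omega^{<2m+1>}=Q^{<2m+1>}=0$ for all $m\ge 0$.

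To make this self-contained one can instead argue by strong induction on $k\ge 0$ directly in the hierarchy obtained by collecting powers of $\epsilon$ in \eqref{eq:motion nl}. Writing $L_0:=(\omega^{<0>})^2\partial_\tau^2-\mathcal{L}$ and $\Omega_\ell:=\sum_{i+j=\ell}\omega^{<i>}\omega^{<j>}$, the order-$\epsilon^{N+1}$ balance has the form
\begin{equation*}
L_0\,Q^{<N>}=-\sum_{\substack{\ell+m=N\\ \ell\ge 1}}\Omega_\ell\,(Q^{<m>})''+\sum_{m_1+m_2+m_3=N-2}Q^{<m_1>}Q^{<m_2>}Q^{<m_3>}
\end{equation*}
(the cubic sum being absent for $N=1$), with $L_0Q^{<0>}=0$ and $L_0$ self-adjoint for the given $L^2$ inner product since $\mathcal{L}$ is symmetric and $\partial_\tau^2$ is self-adjoint under periodic boundary conditions. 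Assume $\omega^{<i>}=Q^{<i>}=0$ for every odd $i\le 2k-1$ and take $N=2k+1$. A parity count then kills every term on the right except $\Omega_{2k+1}(Q^{<0>})''=2\omega^{<0>}\omega^{<2k+1>}(Q^{<0>})''$: in the cubic sum $m_1+m_2+m_3=2k-1$ is odd so some $m_i\le 2k-1$ is odd; in the quadratic sum a surviving $(Q^{<m>})''$ needs $m$ even, hence $\ell=N-m$ odd, hence an odd index $\le 2k-1$ occurs inside $\Omega_\ell$ unless $(\ell,m)=(2k+1,0)$. Pairing with $Q^{<0>}$ and using $(L_0Q^{<2k+1>},Q^{<0>})=(Q^{<2k+1>},L_0Q^{<0>})=0$ gives the solvability condition $2\omega^{<0>}\omega^{<2k+1>}(Q^{<0>},Q^{<0>})=0$, whence $\omega^{<2k+1>}=0$ because $\omega^{<0>}=\omega^{(1)}\ne 0$ and $(Q^{<0>},Q^{<0>})\ne 0$; then $L_0Q^{<2k+1>}=0$, so $Q^{<2k+1>}\in\ker L_0$, and the normalization $c_{2k+1,1,1}=0$ forces $Q^{<2k+1>}=0$. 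The base case $k=0$ is the same computation at $N=1$.

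The parity bookkeeping is elementary; the genuine point is the well-posedness that legitimizes the solvability step, namely that $\ker L_0$ restricted to the even-in-$\tau$ subspace spanned by the $\frac{u^{(j)}}{|u^{(j)}|}(e^{ip\tau}+e^{-ip\tau})$ is exactly one-dimensional, equivalently that there is no internal resonance $(\omega^{(j)})^2=p^2(\omega^{(1)})^2$ with $(j,p)\ne(1,1)$. This is precisely the non-degeneracy behind the Implicit Function Theorem continuation invoked just before \eqref{eq:series1}, so it may be assumed here; granting it, the uniqueness of the branch used in the first argument and the step ``$Q^{<2k+1>}\in\ker L_0\Rightarrow Q^{<2k+1>}=0$'' in the second are both justified, and the lemma follows.
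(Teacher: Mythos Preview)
Your proposal is correct. Your second argument---strong induction on the order-$\epsilon^{N+1}$ hierarchy, isolating the single surviving term $2\omega^{<0>}\omega^{<2k+1>}(Q^{<0>})''$ by parity, projecting to obtain $\omega^{<2k+1>}=0$, and then using the one-dimensionality of $\ker L_0$ together with the normalization $c_{2k+1,1,1}=0$ to kill $Q^{<2k+1>}$---is exactly the paper's proof, though you spell out more carefully the non-resonance hypothesis $(\omega^{(j)})^2\neq p^2(\omega^{(1)})^2$ for $(j,p)\neq(1,1)$ that the paper leaves implicit in its appeal to the Implicit Function Theorem.

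Your first argument, by contrast, is a genuinely different and cleaner route: exploiting the odd symmetry $(Q,\omega)\mapsto(-Q,\omega)$ of \eqref{eq:motion nl} to conclude $Q_{-\epsilon}=-Q_\epsilon$ and $\omega_{-\epsilon}=\omega_\epsilon$ directly, then reading off the parity of the Taylor coefficients. The paper does not take this shortcut. What it buys you is that the entire parity bookkeeping in the hierarchy collapses to a one-line observation, and the argument makes transparent that the result depends only on the odd degree of the nonlinearity and on uniqueness of the continued branch. The induction approach, on the other hand, has the virtue of being entirely constructive within the formal-series framework already set up for computing $Q^{<2>}$ and $\omega^{<2>}$, so no separate invocation of branch uniqueness is needed beyond the pointwise solvability at each order.
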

\begin{proof}
We prove by induction:

At $\mathcal{O}(\epsilon^{2})$ of \eqref{eq:motion nl}, we have
\begin{equation*}
  ((\omega^{<0>})^{2}\frac{d^{2}}{d\tau^{2}}-\mathcal{L})Q^{<1>}(\tau)+2\omega^{<0>}\omega^{<1>}\frac{d^{2}}{d\tau^{2}}Q^{<0>}(\tau)=0
\end{equation*}
Projecting the equation above onto $span\{u^{(1)}(e^{i\tau}+e^{-i\tau})\}$, then we get $\omega^{<1>}=0$. Moreover, the vanishing of right hand side and $(Q^{<1>},Q^{<0>})=0$ then imply $Q^{<1>}(\tau)=0$.

Suppose $\omega^{<2l+1>}=0$ and $Q^{<2l+1>}(\tau)=0$ for $l=0,1,2,\cdots,m$.
When $l=m+1$, at order $\mathcal{O}(\epsilon^{2m+4})$ of equation \eqref{eq:motion nl}, we have
\begin{equation*}
 ((\omega^{<0>})^{2}\frac{d^{2}}{d\tau^{2}}-\mathcal{L})Q^{< 2m+3>}(\tau)+2\omega^{<0>}\omega^{<2m+3>}\frac{d^{2}}{d\tau^{2}}Q^{<0>}(\tau)
 =0.
\end{equation*}
The right hand side vanishes because every resulting term includes at least one of $Q^{<2l+1>},(1\leq l\leq m)$.
Similar to the above, we can get $\omega^{<2m+3>}=0$ and $Q^{<2m+3>}(\tau)=0$.
Now according to the second mathematical inductive method, we have proved $\omega^{< 2m+1 >}=0$ and $Q^{< 2m+1 >}(\tau)=0$ for $m\geq 0$.
\end{proof}

Now $\epsilon^3 Q^{<2>}$ is the first correction term to $\epsilon Q^{<0>}$ and $Q^{<2>}$ can be calculated as follows:

\begin{proposition}
\label{prop:Q2}
\begin{equation}
\label{eq:Q2}
Q^{<2>}(\tau)=\sum\limits_{j=2}^{2n}c_{2,1,j}\frac{u^{(j)}}
  {|u^{(j)}|}
  (e^{i\tau}+e^{-i\tau})+\sum\limits_{j=1}^{2n}c_{2,3,j}\frac{u^{(j)}}
  {|u^{(j)}|}
  (e^{3i\tau}+e^{-3i\tau})
\end{equation}
where
\begin{equation}
\begin{split}
  &|c_{2,1,3}-\frac{-3(k_{2}-k_{1})n}{16k_{1}k_{2}\pi^{2}}|\leq\mathcal{O}(1);
  |c_{2,1,j}|\leq \mathcal{O}(\frac{1}{j^{2}}), (j\in[2,n^{1-\varepsilon}]\setminus j=3); \\ &|c_{2,1,j}|\leq\mathcal{O}(\frac{1}{n^{2-4\varepsilon}}),
  (j\in(n^{1-\varepsilon},n_{1}-4]);\\
  &|c_{2,1,j}|\leq\mathcal{O}(\frac{1}{n}),((\omega^{(j)})^{2}\in(0,2k_{1})\cup(2k_2,2k_{1}+2k_{2}), j \geq n_{1}-3);\\
  &|c_{2,1,j}|\leq\mathcal{O}(\frac{1}{n^{\frac{3}{2}}}),((\omega^{(j)})^{2}\in(2k_{1},2k_{2})\cup(2k_{1}+2k_{2},+\infty))
\end{split}
\label{eq:c(21j)}
\end{equation}
and  
\begin{equation}
\begin{split}
  &|c_{2,3,1}-\frac{-3}{64k_{2}n}|\leq\mathcal{O}(\frac{1}{n^2});\quad
   |c_{2,3,3}-\frac{1}{64k_{2}n}|\leq\mathcal{O}(\frac{1}{n^{2}});\\
  &|c_{2,3,j}|\leq\mathcal{O}(\frac{1}{n}),((\omega^{(j)})^{2}\in(0,2k_{1})\cup(2k_2,2k_{1}+2k_{2}));\\
  &|c_{2,3,j}|\leq\mathcal{O}(\frac{1}{n^{3/2}}),((\omega^{(j)})^{2}\in(2k_{1},2k_{2})\cup(2k_{1}+2k_{2},+\infty)).
\end{split}
\label{eq:c(23j)}
\end{equation}
\end{proposition}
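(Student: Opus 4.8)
The plan is to determine $Q^{<2>}$ from the $\mathcal{O}(\epsilon^{3})$ balance of \eqref{eq:motion nl}, and then to bound its eigenmode coefficients by combining the norm estimate (Lemma~\ref{norm of eigenstates}), the inner-product estimates (Lemma~\ref{lemma: inner product}) and the band-edge frequency spacing of Section~\ref{sec:optical band}. First I substitute the expansions \eqref{eq:series1}, already reduced by Lemma~\ref{prop:simple form} to even orders, into \eqref{eq:motion nl}; using $\omega^{<0>}=\omega^{(1)}$ and $\ddot{Q}^{<0>}=-Q^{<0>}$, the $\mathcal{O}(\epsilon^{3})$ terms yield the linear inhomogeneous equation
\[
\Big((\omega^{(1)})^{2}\tfrac{d^{2}}{d\tau^{2}}-\mathcal{L}\Big)Q^{<2>}(\tau)=(Q^{<0>}(\tau))^{3}+2\omega^{(1)}\omega^{<2>}Q^{<0>}(\tau).
\]
Since $Q^{<0>}=\frac{u^{(1)}}{|u^{(1)}|}(e^{i\tau}+e^{-i\tau})$ and the nonlinearity is the Hadamard cube, $(Q^{<0>})^{3}=\frac{(u^{(1)})^{3}}{|u^{(1)}|^{3}}\big[(e^{3i\tau}+e^{-3i\tau})+3(e^{i\tau}+e^{-i\tau})\big]$, so the right-hand side carries only the first and third temporal harmonics. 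Expanding $Q^{<2>}$ in the orthogonal eigenbasis $\{u^{(j)}\}$ of $\mathcal{L}$ as in \eqref{eq:The form of Q}, the operator on the left acts diagonally, multiplying the $(j,k)$-component by $(\omega^{(j)})^{2}-k^{2}(\omega^{(1)})^{2}$; hence every harmonic $k\notin\{1,3\}$ of $Q^{<2>}$ vanishes and $c_{2,1,1}=0$ by the normalization convention, which is exactly the form \eqref{eq:Q2}.

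Projecting the $k=1$ part onto $u^{(1)}$ gives the solvability condition $\omega^{<2>}=-\tfrac{3(u^{(1)}u^{(1)}u^{(1)},u^{(1)})}{2\omega^{(1)}|u^{(1)}|^{4}}$, and projecting onto $u^{(j)}$ for $j\ge 2$ (in the $k=1$ part) and onto $u^{(j)}$ for all $j$ (in the $k=3$ part) gives the closed expressions
\[
c_{2,1,j}=\frac{3\,(u^{(1)}u^{(1)}u^{(1)},u^{(j)})}{|u^{(1)}|^{3}\,|u^{(j)}|\,\big((\omega^{(j)})^{2}-(\omega^{(1)})^{2}\big)},\qquad c_{2,3,j}=\frac{(u^{(1)}u^{(1)}u^{(1)},u^{(j)})}{|u^{(1)}|^{3}\,|u^{(j)}|\,\big((\omega^{(j)})^{2}-9(\omega^{(1)})^{2}\big)}.
\]
These are well defined: the eigenvalues of $\mathcal{L}$ are distinct, so $(\omega^{(j)})^{2}\neq(\omega^{(1)})^{2}$ for $j\neq 1$, and generically $9(\omega^{(1)})^{2}\approx 18k_{2}$ — which lies above the optical band — equals no eigenfrequency, so the $k=3$ denominator never vanishes and is in fact $\sim\mathcal{O}(1)$ for every $j$.

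It remains to estimate the two factors in each formula. For the denominators I use $|u^{(1)}|^{2}=n+\mathcal{O}(1)$ (the near-edge refinement of Lemma~\ref{norm of eigenstates}), $|u^{(j)}|^{2}=n+\mathcal{O}(n^{\varepsilon})$ for in-band modes and $|u^{(j)}|\sim\mathcal{O}(1)$ for the at most two edge states, together with $(\omega^{(j)})^{2}=2k_{2}+\tfrac{k_{1}k_{2}\pi^{2}j^{2}}{2(k_{2}-k_{1})(n-1)^{2}}+\cdots$ near the lower optical edge, so that $(\omega^{(j)})^{2}-(\omega^{(1)})^{2}\sim\tfrac{k_{1}k_{2}\pi^{2}(j^{2}-1)}{2(k_{2}-k_{1})(n-1)^{2}}$ there and $\sim\mathcal{O}(1)$ away from it. For the numerators I invoke Lemma~\ref{lemma: inner product}: $(u^{(1)}u^{(1)}u^{(1)},u^{(j)})$ is $\mathcal{O}(1)$ in the non-resonant in-band ranges, $\lesssim n^{2\varepsilon}$ for $n^{1-\varepsilon}<j\le n_{1}-4$, $\le 2n$ for the remaining in-band indices, $\mathcal{O}(1)$ when $(\omega^{(j)})^{2}$ is outside all bands, and it is resonant (of size $\sim cn$) precisely when the product over all eight sign choices $1\pm 1\pm 1\pm j$ vanishes, i.e.\ only for $j=3$. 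Multiplying termwise reproduces every bound in \eqref{eq:c(21j)}--\eqref{eq:c(23j)}: for example $|c_{2,1,j}|\lesssim\tfrac{1}{n^{2}}\cdot\tfrac{n^{2}}{j^{2}}=\tfrac{1}{j^{2}}$ on $[2,n^{1-\varepsilon}]\setminus\{3\}$; $|c_{2,1,j}|\lesssim\tfrac{n^{2\varepsilon}}{n^{2}n^{-2\varepsilon}}=n^{4\varepsilon-2}$ on $(n^{1-\varepsilon},n_{1}-4]$; $|c_{2,1,j}|,|c_{2,3,j}|\lesssim\tfrac{2n}{n^{2}}=\tfrac{1}{n}$ for the remaining in-band indices; and $|c_{2,1,j}|,|c_{2,3,j}|\lesssim\tfrac{1}{n^{3/2}}$ for the edge states.

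For the distinguished indices $j=1,3$ the leading constants come out by explicit summation. Using the near-edge forms \eqref{matrix:uk op lower} together with the approximation \eqref{eq:approx eigenvector}--\eqref{eq:approx eigenvector err} and the identity $\sin^{3}\phi=\tfrac{3\sin\phi-\sin 3\phi}{4}$, the oscillatory sums cancel and leave $(u^{(1)}u^{(1)}u^{(1)},u^{(1)})=\tfrac{3n}{4}+\mathcal{O}(1)$ and $(u^{(1)}u^{(1)}u^{(1)},u^{(3)})=-\tfrac{n}{4}+\mathcal{O}(1)$; combined with $|u^{(1)}|^{2}=n+\mathcal{O}(1)$, $(\omega^{(1)})^{2}-9(\omega^{(1)})^{2}=-16k_{2}+\mathcal{O}(n^{-2})$ and $(\omega^{(3)})^{2}-(\omega^{(1)})^{2}=\tfrac{4k_{1}k_{2}\pi^{2}}{(k_{2}-k_{1})(n-1)^{2}}+\mathcal{O}(n^{-3})$, the two formulas yield $c_{2,1,3}=-\tfrac{3(k_{2}-k_{1})n}{16k_{1}k_{2}\pi^{2}}+\mathcal{O}(1)$, $c_{2,3,1}=-\tfrac{3}{64k_{2}n}+\mathcal{O}(n^{-2})$ and $c_{2,3,3}=\tfrac{1}{64k_{2}n}+\mathcal{O}(n^{-2})$, closing the proposition. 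The routine part is the $\mathcal{O}(\epsilon^{3})$ bookkeeping and the projections; the genuine work — and the most error-prone step — is the estimation, where one must balance $|u^{(j)}|^{2}$, the $j^{2}/n^{2}$ smallness of the near-edge frequency gaps, and the sharp dichotomy of Lemma~\ref{lemma: inner product} that singles out the one resonant index $j=3$ (an $\mathcal{O}(n^{-2})$ gap against an $\mathcal{O}(n)$ numerator, hence $c_{2,1,3}\sim\mathcal{O}(n)$), and then pin down the leading constants of $(u^{(1)}u^{(1)}u^{(1)},u^{(3)})$ and of the frequency gap precisely (the $(n-1)$-versus-$n$ mismatches affect only the $\mathcal{O}(1)$ remainders).
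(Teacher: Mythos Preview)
Your proposal is correct and follows essentially the same route as the paper: derive the $\mathcal{O}(\epsilon^{3})$ balance, project onto $u^{(j)}(e^{\pm i\tau}+e^{\mp i\tau})$ and $u^{(j)}(e^{\pm 3i\tau}+e^{\mp 3i\tau})$ to obtain the closed formulas $c_{2,1,j}=\tfrac{3((u^{(1)})^{3},u^{(j)})}{|u^{(1)}|^{3}|u^{(j)}|((\omega^{(j)})^{2}-(\omega^{(1)})^{2})}$ and $c_{2,3,j}=\tfrac{((u^{(1)})^{3},u^{(j)})}{|u^{(1)}|^{3}|u^{(j)}|((\omega^{(j)})^{2}-9(\omega^{(1)})^{2})}$, and then feed in Lemma~\ref{norm of eigenstates}, Lemma~\ref{lemma: inner product}, and the near-edge gap $(\omega^{(j)})^{2}-(\omega^{(1)})^{2}\approx\tfrac{k_{1}k_{2}\pi^{2}(j^{2}-1)}{2(k_{2}-k_{1})(n-1)^{2}}$ case by case. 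Your explicit evaluation of $(u^{(1)}u^{(1)}u^{(1)},u^{(1)})=\tfrac{3n}{4}+\mathcal{O}(1)$ and $(u^{(1)}u^{(1)}u^{(1)},u^{(3)})=-\tfrac{n}{4}+\mathcal{O}(1)$ via the $\sin^{3}\phi$ identity, together with the gap and norm leading terms, reproduces the distinguished constants exactly as the paper does.
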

   
\begin{proof}
At $\mathcal{O}(\epsilon^{3})$ of the equation \eqref{eq:motion nl}, we get
\begin{equation}
\begin{split}
 &\quad((\omega^{<0>})^{2}\frac{d^{2}}{d\tau^{2}}-\mathcal{L})Q^{<2>}(\tau)+2\omega^{<0>}\omega^{<2>}\frac{d^{2}}{d\tau^{2}}Q^{<0>}(\tau)\\
  &=\frac{3(u^{(1)})^{3}}{|u^{(1)}|^{3}}(e^{i\tau}+e^{-i\tau})+\frac{(u^{(1)})^{3}}{|u^{(1)}|^{3}}(e^{3i\tau}+e^{-3i\tau})
 \end{split}
  \label{eq:Q2 derivation}
\end{equation}
If we project \eqref{eq:Q2 derivation} onto ${\rm span}\{u^{(j)}(e^{i\tau}+e^{-i\tau})\}$ and ${\rm span}\{u^{(j)}(e^{3i\tau}+e^{-3i\tau})\}$ respectively,
then we obtain
\begin{equation}
\begin{split}
  c_{2,1,j}&=\frac{1}{|u^{(1)}|^{3}|u^{(j)}|}\frac{3((u^{(1)})^{3},u^{(j)})}{(\omega^{(j)})^{2}-(\omega^{(1)})^{2}}, \\
  c_{2,3,j}&=\frac{1}{|u^{(1)}|^{3}|u^{(j)}|}\frac{((u^{(1)})^{3},u^{(j)})}{(\omega^{(j)})^{2}-3^{2}(\omega^{(1)})^{2}}.
  \end{split}
  \label{eq:Q2(tau)}
\end{equation}
Thus according to \textbf{Lemma} \ref{norm of eigenstates}, \textbf{Lemma} \ref{lemma: inner product}, the following can be calculated:
\begin{itemize}
\item 
$
\frac{(u^{(1)}u^{(1)}u^{(1)},u^{(3)})}{|u^{(1)}|^3 |u^{(3)}|}\approx \frac{-\frac{n}{4}}{n^2}=-\frac{1}{4n}, \quad \frac{(u^{(1)}u^{(1)}u^{(1)},u^{(1)})}{|u^{(1)}|^4}\approx \frac{\frac{3n}{4}}{n^2}=\frac{3}{4n}.
$

\item In general, $|(\omega^{(j)})^2-3^2(\omega^{(1)})^2|>\frac{3^2}{2}(\omega^{(1)})^2\geq \mathcal{O}(1)$. Here we assume that isolated frequencies exist in the band gap rather than above the optical band. 

 \item When $1<j\ll n$, $(\omega^{(j)})^2\approx 2k_2+\frac{k_1 k_2}{2(k_2-k_1)}(\frac{j\pi}{n-1})^2$.

\item When $1<j\ll n$ and $j\neq 3$,
\begin{equation*}
\frac{(u^{(1)}u^{(1)}u^{(1)},u^{(j)})}{|u^{(1)}|^3 |u^{(j)}|}\lesssim \frac{1}{n^2}, \quad (\omega^{(j)})^{2}-(\omega^{(1)})^{2}\approx \frac{k_1 k_2}{2(k_2-k_1)}\frac{(j^2-1)\pi^2}{(n-1)^2}.
\end{equation*}

\item When $\varepsilon>0$ and $n^{1-\varepsilon}<j\leq n_{1}-4$,
\begin{equation*}
  \frac{(u^{(1)}u^{(1)}u^{(1)},u^{(j)})}{|u^{(1)}|^3 |u^{(j)}|}\lesssim \frac{n^{2\varepsilon}}{n^2}, \quad (\omega^{(j)})^{2}-(\omega^{(1)})^{2}> \frac{k_1 k_2}{4(k_2-k_1)}\frac{n^{2-2\varepsilon}\pi^2}{n^2}.
\end{equation*}

\item When $(\omega^{(j)})^{2}\in(0,2k_{1})\cup (2k_2,2k_{1}+2k_{2})$ and $j\geq n_{1}-3$,
\begin{equation*}
    \frac{(u^{(1)}u^{(1)}u^{(1)},u^{(j)})}{|u^{(1)}|^3 |u^{(j)}|}< \frac{(1+\varepsilon_1)2n}{n^2}, \quad(\omega^{(j)})^{2}-(\omega^{(1)})^{2}>\min\{2k_2-2k_1,(1-\varepsilon_2)k_1\}.
\end{equation*}

\item When $(\omega^{(j)})^{2}\in(2k_{1},2k_{2})\cup (2k_{1}+2k_{2},+\infty)$,
\begin{equation*}
 \frac{(u^{(1)}u^{(1)}u^{(1)},u^{(j)})}{|u^{(1)}|^3 |u^{(j)}|}< \frac{\tilde{C}}{n^{3/2}}, \quad (\omega^{(j)})^{2}-(\omega^{(1)})^{2}>\delta_2  
\end{equation*}
\end{itemize}
Then it can be easily checked that \eqref{eq:c(21j)} and \eqref{eq:c(23j)} hold.
\end{proof}
%
%
\begin{remark}
We notice that some results in the proof of Prop.~\ref{prop:Q2} also hold in more general scenarios, such as
\begin{itemize}
    \item If again isolated frequencies can only sit in the band gap, then $|(\omega^{(j)})^2-k^2(\omega^{(1)})^2|>\frac{k^2}{2}(\omega^{(1)})^2\geq \mathcal{O}(k^2)$.
    \item For $m\geq 1$, 
    \begin{equation}
    \label{eq:Q2m}
    Q^{<2m>}(\tau)=\sum\limits_{j=2}^{2n}c_{2m,1,j}\frac{u^{(j)}}{|u^{(j)}|}
    (e^{i\tau}+e^{-i\tau})+\sum\limits_{k=1}^m\sum\limits_{j=1}^{2n}c_{2m,2k+1,j}\frac{u^{(j)}}{|u^{(j)}|}(e^{(2k+1)i\tau}+e^{-(2k+1)i\tau})
\end{equation}
\end{itemize}
\end{remark}
On the other hand, projecting equation \eqref{eq:Q2 derivation} onto space ${\rm span}\{u^{(1)}(e^{i\tau}+e^{-i\tau})\}$ yields
\begin{equation}
 \omega^{<2>}=-\frac{3((u^{(1)})^{3},u^{(1)})}{2|u^{(1)}|^{4}\omega^{<0>}}
  =-\frac{9}{8\sqrt{2k_{2}}n}+\mathcal{O}(\frac{1}{n^{2}})\sim\mathcal{O}(\frac{1}{n}).
  \label{eq:omega2}
\end{equation}
With the knowledge of $\omega^{<2>}$, we now have a linear approximation of $\omega$ as 
\begin{equation}
\begin{split}
\omega &\approx \omega^{<0>}+\epsilon^2\omega^{<2>}\\
& \approx \sqrt{2k_2}+\frac{k_1 k_2}{4\sqrt{2k_{2}}(k_2-k_1)}(\frac{\pi}{n-1})^2- \frac{9}{8 \sqrt{2k_2}n}\epsilon^2 
\label{eq:linear approx}
\end{split}
\end{equation}
where $\epsilon^2$ serves as a variable. This implies that near $\epsilon^2\approx \frac{2k_1 k_2}{9(k_2-k_1)}\frac{\pi^2}{n}\sim\ \mathcal{O}(\frac{1}{n})$ the frequency crosses the lower edge of the optical band and enters the band gap. 
At the same time, as $\epsilon$ grows the state corresponding to this frequency gradually gets localized, producing a family of breathers which do not bear a localized counterpart in the linear limit. However, the expansions in \eqref{eq:series1} (and \eqref{eq:linear approx}) only hold when $\epsilon^2$ is small enough and $|c_{2,1,3}|\sim\mathcal{O}(n)$ prevents $\epsilon^2$ from growing beyond $\mathcal{O}(\frac{1}{n})$. In fact, we will show in the following theorem the radius of convergence for $\epsilon$ is just at $\mathcal{O}(\frac{1}{\sqrt{n}})$.

\begin{theorem}\label{theorem}
Expansions of $Q$ and $\omega$ in \eqref{eq:series1} converge when $0<\epsilon^2<\frac{1}{fn}$.
\end{theorem}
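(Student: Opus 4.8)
The plan is to establish geometric‑in‑$n$ bounds on the expansion coefficients, namely to find a constant $C$ (independent of $n$) with $\|Q^{<2m>}\|_{1}\le(Cn)^{m}$ and $n^{2}|\omega^{<2m>}|\le(Cn)^{m}$ for all $m\ge1$, where $\|\,\cdot\,\|_{1}:=\sum_{j,k}w_{k}|c_{2m,k,j}|$ is an $\ell^{1}$‑type norm on the coefficients in \eqref{eq:The form of Q} (with $w_{1}=1$, the weight $w_{k}$ growing only mildly in $k$ so as to absorb the $k^{2}$ produced by $\tfrac{d^{2}}{d\tau^{2}}$; this is harmless because the harmonic‑$k$ part of $Q^{<2m>}$ has coefficients suppressed by powers of $1/n$). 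Granting these bounds, $\sum_{m\ge0}\epsilon^{2m+1}\|Q^{<2m>}\|_{1}\le\epsilon\sum_{m\ge0}(C\epsilon^{2}n)^{m}$ and the analogous $\omega$‑series converge whenever $C\epsilon^{2}n<1$, which is the claim with $f=C$ (enlarged, if needed, so that the limit also solves \eqref{eq:motion nl}). The recursion follows from expanding $\omega^{2}=\sum_{p\ge0}\epsilon^{2p}\Omega_{p}$, with $\Omega_{0}=(\omega^{<0>})^{2}$ and $\Omega_{p}=2\omega^{<0>}\omega^{<2p>}+\sum_{l_{1}+l_{2}=p,\ l_{i}\ge1}\omega^{<2l_{1}>}\omega^{<2l_{2}>}$, and matching the coefficient of $\epsilon^{2N+1}$ in \eqref{eq:motion nl}; using $\tfrac{d^{2}}{d\tau^{2}}Q^{<0>}=-Q^{<0>}$ one obtains, for $N\ge1$,
\[
\mathcal{A}Q^{<2N>}=R_{N},\qquad \mathcal{A}:=(\omega^{<0>})^{2}\tfrac{d^{2}}{d\tau^{2}}-\mathcal{L},
\]
\[
R_{N}=\Omega_{N}Q^{<0>}-\sum_{p=1}^{N-1}\Omega_{p}\,\tfrac{d^{2}}{d\tau^{2}}Q^{<2(N-p)>}+\sum_{m_{1}+m_{2}+m_{3}=N-1}Q^{<2m_{1}>}Q^{<2m_{2}>}Q^{<2m_{3}>}.
\]
On the subspace of functions even in $\tau$, $\ker\mathcal{A}=\operatorname{span}\{u^{(1)}(e^{i\tau}+e^{-i\tau})\}=\operatorname{span}\{Q^{<0>}\}$, so the solvability condition $(R_{N},Q^{<0>})=0$ determines $\Omega_{N}$ (hence $\omega^{<2N>}$), and $Q^{<2N>}:=\mathcal{A}^{-1}R_{N}$ is then the unique solution orthogonal to $Q^{<0>}$, i.e.\ with $c_{2N,1,1}=0$; for $N=1$ this is precisely Prop.~\ref{prop:Q2} and \eqref{eq:omega2}.

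The analytic core is the estimate of $\mathcal{A}^{-1}$. On the mode $\tfrac{u^{(j)}}{|u^{(j)}|}(e^{ik\tau}+e^{-ik\tau})$ the operator $\mathcal{A}$ is multiplication by $(\omega^{(j)})^{2}-k^{2}(\omega^{(1)})^{2}$, whose smallest nonzero modulus is $\mathcal{O}(1/n^{2})$ (realised by in‑band modes just above $\omega^{(1)}$, with $k=1$), so the operator norm of $\mathcal{A}^{-1}$ on $(\ker\mathcal{A})^{\perp}$ is $\sim n^{2}$. That is a factor $n$ too lossy, and recovering it is the crux. I would expand $R_{N}$ in the eigenbasis and bound each coefficient by combining Lemma~\ref{norm of eigenstates} ($|u^{(j)}|^{2}=n+\mathcal{O}(n^{\varepsilon})$ for in‑band modes, $\mathcal{O}(1)$ for edge states), Lemma~\ref{lemma: inner product} (each $|(u^{(a)}u^{(b)}u^{(c)},u^{(d)})|\le2n$, the sharper bound $\mathcal{O}(1)$ when some mode lies outside the bands, and the decay away from the cubic resonance), the elementary estimate $\|u^{(a)}\circ u^{(b)}\circ u^{(c)}\|_{2}\le\|u^{(a)}\|_{\infty}\|u^{(b)}\|_{\infty}|u^{(c)}|$ with $\|\tfrac{u^{(j)}}{|u^{(j)}|}\|_{\infty}\lesssim n^{-1/2}$, and the summation bound $\sum_{j,k}|(\omega^{(j)})^{2}-k^{2}(\omega^{(1)})^{2}|^{-1}|u^{(j)}|^{-1}\lesssim n^{3/2}$ (dominated by $k=1$, $j$ near a band edge). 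These yield, for every family of modes that occurs in $R_{N}$,
\[
\|\mathcal{A}^{-1}\bigl(Q^{<2m_{1}>}Q^{<2m_{2}>}Q^{<2m_{3}>}\bigr)\|_{1}\lesssim n\prod_{i=1}^{3}\|Q^{<2m_{i}>}\|_{1},\qquad
\|\mathcal{A}^{-1}\bigl(\Omega_{p}\tfrac{d^{2}}{d\tau^{2}}Q^{<2m>}\bigr)\|_{1}\lesssim n^{2}|\Omega_{p}|\,\|Q^{<2m>}\|_{1},
\]
while the solvability condition, together with $(Q^{<2m>},Q^{<0>})=0$ which annihilates the $\Omega_{q}Q^{<2(p-q)>}$ contributions to $(R_{p},Q^{<0>})$, gives $|\Omega_{p}|\lesssim n^{-1}\sum_{m_{1}+m_{2}+m_{3}=p-1}\prod_{i}\|Q^{<2m_{i}>}\|_{1}$ and hence $|\omega^{<2p>}|\lesssim|\Omega_{p}|+\sum_{l_{1}+l_{2}=p,\ l_{i}\ge1}|\omega^{<2l_{1}>}|\,|\omega^{<2l_{2}>}|$.

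To close the induction I would run a generating‑function (majorant) argument that absorbs the combinatorics of the sums over $m_{1}+m_{2}+m_{3}$ and over $p$. Put $a_{m}:=\|Q^{<2m>}\|_{1}\,n^{-m}$, $b_{p}:=n^{2-p}|\omega^{<2p>}|$ and $A(x)=\sum_{m\ge0}a_{m}x^{m}$, $B(x)=\sum_{p\ge1}b_{p}x^{p}$ (so $a_{0}=1$); the estimates of the previous paragraph become, coefficientwise, $A(x)\le 1+c\,x\,A(x)^{3}+c\,x\,B(x)A(x)$ and $B(x)\le c\,x\,A(x)^{3}$ (up to a correction of relative size $\mathcal{O}(n^{-2})$) for an absolute constant $c$. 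Eliminating $B$ gives $A(x)\le 1+c\,x\,A(x)^{3}+c^{2}x^{2}A(x)^{4}$, an algebraic inequality whose least nonnegative formal solution is analytic near $x=0$ with radius of convergence bounded below by an absolute $1/C$. Hence $a_{m},b_{m}\le C^{m}$, that is $\|Q^{<2m>}\|_{1}\le(Cn)^{m}$ and $|\omega^{<2m>}|\le(Cn)^{m}n^{-2}$, so the series in \eqref{eq:series1} converge for $0<\epsilon^{2}<1/(Cn)$; one takes $f=C$. The main obstacle is exactly the passage from the naive $\mathcal{O}(n^{2})$ to the sharp $\mathcal{O}(n)$ per‑step amplification: one must not use the operator norm of $\mathcal{A}^{-1}$ but rather exploit that in $R_{N}$ the small‑gap in‑band modes $\tfrac{u^{(j)}}{|u^{(j)}|}(e^{i\tau}+e^{-i\tau})$ carry cubic couplings that are $\mathcal{O}(1)$ except at the resonance $j=3$ (where the coupling is $\mathcal{O}(n)$), so that together with the norm factors $|u^{(j)}|\sim\sqrt{n}$, dividing by the $\mathcal{O}(1/n^{2})$ gap costs only a single extra power of $n$; one must also verify that the sums over the remaining spatial modes $j$ and over all temporal harmonics $k=1,3,5,\dots$ contribute only $\mathcal{O}(1)$ factors, which follows from the summable decay furnished by Lemma~\ref{lemma: inner product}.
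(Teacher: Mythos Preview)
Your overall architecture—recursion at order $\epsilon^{2N+1}$, solvability condition to fix $\omega^{<2N>}$, then $Q^{<2N>}=\mathcal{A}^{-1}R_N$, closed by a majorant series—is exactly what the paper does. You have also correctly identified the heart of the matter: the naive operator bound $\|\mathcal{A}^{-1}\|\sim n^{2}$ is one power of $n$ too lossy, and one must recover that power from the smallness of the cubic couplings.

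The gap is in your single $\ell^{1}$ norm. With $\|\cdot\|_{1}=\sum_{j,k}w_{k}|c_{2m,k,j}|$ carrying no weight on the spatial index $j$, the induction hypothesis cannot distinguish in-band modes from edge-state modes, and this breaks the step estimate. Concretely, take all three inputs $j_{1},j_{2},j_{3}\in\mathbb{D}_{3}$. Then $|u^{(j_{i})}|\sim 1$ (not $\sqrt{n}$), so your elementary bound $\|u^{(a)}/|u^{(a)}|\|_{\infty}\lesssim n^{-1/2}$ fails; and Lemma~\ref{lemma: inner product}(2) gives only $|(u^{(j_{1})}u^{(j_{2})}u^{(j_{3})},u^{(j)})|\lesssim 1$. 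Combining with your summation bound $\sum_{j}|u^{(j)}|^{-1}|(\omega^{(j)})^{2}-(\omega^{(1)})^{2}|^{-1}\lesssim n^{3/2}$ yields $\|\mathcal{A}^{-1}V\|_{1}\lesssim n^{3/2}$ for such $V$, not $n$. Since your induction hypothesis only asserts $\|Q^{<2m_i>}\|_{1}\le(Cn)^{m_i}$, nothing prevents the $\mathbb{D}_{3}$ part of $Q^{<2m_i>}$ from being as large as the whole, so the all-$\mathbb{D}_{3}$ contribution to $\|Q^{<2N>}\|_{1}$ comes out $\sim n^{3/2}\prod(Cn)^{m_i}$ rather than $n\prod(Cn)^{m_i}$, and the majorant recursion $A(x)\le 1+cxA(x)^{3}+\dots$ acquires an extra $n^{1/2}$ that does not close. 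The same issue contaminates your bound on $|\Omega_{p}|$: the projection onto $u^{(1)}/|u^{(1)}|$ of an all-$\mathbb{D}_{3}$ cubic product is $\lesssim n^{-1/2}$, not $n^{-1}$.

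The paper's cure is precisely to replace the single norm by a vector of norms: Lemma~\ref{lemma: main estimates} proves by induction \emph{seven} separate estimates, tracking $\sum_{j\in\mathbb{D}_{s}}|c_{2m,2l+1,j}|$ for $s=1,2,3$ (and $l=0$ versus $l\ge1$) individually, with the $\mathbb{D}_{3}$ piece carrying an extra factor $n^{-5/2}$ and the $\mathbb{D}_{2}$ and higher-harmonic pieces an extra $n^{-1}$. With that refined hypothesis the offending all-$\mathbb{D}_{3}$ term becomes negligible and the dominant $(s_1,s_2,s_3)=(1,1,1)$ case yields exactly the $O(n)$ per-step amplification you want. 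Your generating-function machinery would then go through verbatim; indeed the paper's $\mathcal{R}_{m}=\frac{C_{1}f^{m}n^{m}}{m^{2}+2}$ is just a choice of majorant, and the combinatorial inequalities in its Appendix play the role of your algebraic inequality for $A(x)$. So the fix is not conceptual but bookkeeping: replace your scalar $\|\cdot\|_{1}$ by a weighted version (or a tuple of norms) that penalises $\mathbb{D}_{3}$ modes by $n^{5/2}$, and propagate that through the induction.
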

In order to prove the theorem, we first give the following lemma:

\begin{lemma}
\label{lemma: main estimates}
Let 
\begin{equation*}
  \mathbb{D}_{1}=\{j\in\mathbb{Z} | 2\leq j\leq n_1 \};\quad
   \mathbb{D}_{2}=\{j\in\mathbb{Z} | 2n-n_2+1\leq j\leq 2n \};\quad
   \mathbb{D}_{3}=\{n_1+1,2n-n_2\}.
\end{equation*}
and suppose 
$$
\mathcal{R}_m=\frac{C_1 f^m n^m}{m^2+2}
$$
where $f=\frac{40(k_2-k_1)}{k_1 k_2}$ and $C_1>0$. Then there exists $C_2, C_3, C_4>0$ such that 
$$C_1C_3=\frac{1}{16},\quad  \frac{27k_1k_2}{320\sqrt{2k_2}(k_2-k_1)}+\delta<C_1C_2=\frac{7k_1k_2}{80\sqrt{2k_2}(k_2-k_1)}<\frac{\pi^2k_1k_2}{64\sqrt{2k_2}(k_2-k_1)}-\delta
$$ and for $m\geq 1$
\begin{eqnarray*}
&(\rmnum{1}):|\omega^{<2m>}|<\frac{C_2\mathcal{R}_{m}}{n^{2}};\quad
&(\rmnum{2}):\sum\limits_{j\in\mathbb{D}_{1}}|c_{2m,1,j}|<C_3\mathcal{R}_{m}; \\
&(\rmnum{3}):\sum\limits_{j\in\mathbb{D}_{2}}|c_{2m,1,j}|<\frac{C_4\mathcal{R}_{m}}{n};\quad
&(\rmnum{4}):\sum\limits_{j\in\mathbb{D}_{3}}|c_{2m,1,j}|<\frac{C_4\mathcal{R}_{m}}{n^{\frac{5}{2}}};\\
&(\rmnum{5}):\sum\limits_{j=1}^{n_{1}}|c_{2m,2l+1,j}|<\frac{C_4\mathcal{R}_{m}}{n(2l+1)^{2}};\quad
&(\rmnum{6}):\sum\limits_{j\in\mathbb{D}_{2}}|c_{2m,2l+1,j}|<\frac{C_4\mathcal{R}_{m}}
{n(2l+1)^{2}};\\
&(\rmnum{7}):\sum\limits_{j\in\mathbb{D}_{3}}|c_{2m,2l+1,j}|<\frac{C_4\mathcal{R}_{m}}{n^{\frac{5}{2}}(2l+1)^{2}}. \quad &
\end{eqnarray*}
\end{lemma}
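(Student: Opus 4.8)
The plan is to prove \ref{lemma: main estimates} by strong induction on $m$. The base case $m=1$ is read off from Proposition~\ref{prop:Q2} and \eqref{eq:omega2}: $c_{2,1,3}\approx-\frac{3(k_2-k_1)n}{16k_1k_2\pi^2}$ is the only $O(n)$ coefficient, all other $c_{2,1,j}$ and $c_{2,3,j}$ are of lower order, and $|\omega^{<2>}|\approx\frac{9}{8\sqrt{2k_2}n}$, so with $C_1C_3=\frac1{16}$, $C_1C_2=\frac{7k_1k_2}{80\sqrt{2k_2}(k_2-k_1)}$ and $C_4$ taken large enough, all seven inequalities reduce to the numerical facts $\frac{3}{16\pi^2}<\frac56$ and $\frac98<\frac76$, which hold with room to absorb $\delta$. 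For the inductive step I would collect the $\epsilon^{2m+1}$ terms of \eqref{eq:motion nl}. At this order the cubic nonlinearity contributes only $\sum_{j_1+j_2+j_3=m-1}Q^{<2j_1>}Q^{<2j_2>}Q^{<2j_3>}$, all of whose indices are $\le m-1$, and by Lemma~\ref{prop:simple form} the odd-order corrections vanish; hence the recursion is \emph{explicit}, and isolating the unknowns gives
\begin{equation*}
(\omega^{<0>})^2\frac{d^2}{d\tau^2}Q^{<2m>}-\mathcal{L}Q^{<2m>}+2\omega^{<0>}\omega^{<2m>}\frac{d^2}{d\tau^2}Q^{<0>}=F_m,
\end{equation*}
with $F_m$ a known combination of $Q^{<2j>},\omega^{<2j>}$ ($j\le m-1$): the cubic sum $3\sum_{j_1+j_2+j_3=m-1}Q^{<2j_1>}Q^{<2j_2>}Q^{<2j_3>}$ together with the frequency corrections $\sum_{1\le k\le m-1}\bigl(\sum_{a+b=k}\omega^{<2a>}\omega^{<2b>}\bigr)\frac{d^2}{d\tau^2}Q^{<2(m-k)>}$ (and the analogous term acting on $Q^{<0>}$). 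Projecting onto $u^{(1)}(e^{i\tau}+e^{-i\tau})$ yields the Fredholm condition $\omega^{<2m>}=-(F_m,Q^{<0>})/(4\omega^{<0>})$, while projecting onto $u^{(j)}(e^{(2l+1)i\tau}+e^{-(2l+1)i\tau})$ with $(j,2l+1)\neq(1,1)$ yields
\[
c_{2m,2l+1,j}=\frac{\bigl(F_m,\,u^{(j)}(e^{(2l+1)i\tau}+e^{-(2l+1)i\tau})\bigr)}{2|u^{(j)}|\bigl((\omega^{(j)})^2-(2l+1)^2(\omega^{(1)})^2\bigr)}.
\]

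The main work will be to bound these numerators and then divide by the detunings. Expanding each $Q^{<2j_i>}$ via \eqref{eq:Q2m} and using the inductive hypotheses one has $\sum_{l,p}|c_{2j_i,2l+1,p}|\lesssim(C_3+O(1/n))\mathcal{R}_{j_i}$; the spatial objects that appear are triple Hadamard products $u^{(p_1)}u^{(p_2)}u^{(p_3)}$ paired with $u^{(j)}$, which Lemma~\ref{lemma: inner product} bounds by $2n$ in all cases, by $O(n)$ in the optical band precisely when $j=\pm p_1\pm p_2\pm p_3$ (so at most $O(1)$ target modes per triple), by $O(1)$ for non-resonant optical-band quadruples, and by $O(1)$ whenever a band-gap mode enters. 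Dividing by the norms $|u^{(p_1)}||u^{(p_2)}||u^{(p_3)}||u^{(j)}|\sim n^2$ (Lemma~\ref{norm of eigenstates}), each contributing triple thus carries an extra factor $1/n$ in the resonant case and $1/n^2$ otherwise, so the mode-$j$, harmonic-$(2l+1)$ component of the cubic part of $F_m$ is $\lesssim\frac1n\sum_{j_1+j_2+j_3=m-1}\mathcal{R}_{j_1}\mathcal{R}_{j_2}\mathcal{R}_{j_3}$ for $j$ in the optical band, with further powers of $n^{-1}$ (for band-gap or acoustic-band $j$) or of $(2l+1)^{-2}$ (for $2l+1\ge3$); the frequency-correction part of $F_m$ is controlled by the inductive $(\rmnum{1})$ together with the $c$-bounds and is of strictly lower order in $n$.

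Then one divides by the detuning $(\omega^{(j)})^2-(2l+1)^2(\omega^{(1)})^2$. For $2l+1\ge3$ this is $\sim(2l+1)^2$ and immediately gives $(\rmnum{5})$--$(\rmnum{7})$. For $2l+1=1$ and $j$ near the edge it equals $(\omega^{(j)})^2-(\omega^{(1)})^2\sim\frac{k_1k_2\pi^2}{2(k_2-k_1)}\frac{j^2-1}{n^2}$, so $|c_{2m,1,j}|\lesssim\frac{2(k_2-k_1)n}{k_1k_2\pi^2(j^2-1)}\sum_{j_1+j_2+j_3=m-1}\mathcal{R}_{j_1}\mathcal{R}_{j_2}\mathcal{R}_{j_3}$: the $n^2$ amplification from the small denominator is exactly offset by the $1/n$ in the numerator, leaving the $\sim n$ growth that matches $\mathcal{R}_m/\mathcal{R}_{m-1}\sim fn$. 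Summing over $j\in\mathbb{D}_1$ with $\sum_{j\ge2}\frac1{j^2-1}=\frac34$ and invoking the elementary estimate that the triple self-convolution of $\{1/(j^2+2)\}$ satisfies $\sum_{j_1+j_2+j_3=m-1}\frac1{(j_1^2+2)(j_2^2+2)(j_3^2+2)}\le\frac{C_0}{(m-1)^2+2}$, one obtains $\sum_{j\in\mathbb{D}_1}|c_{2m,1,j}|\lesssim\frac{3(k_2-k_1)C_0C_1^3f^{m-1}n^m}{2k_1k_2\pi^2((m-1)^2+2)}$; since $\frac{m^2+2}{(m-1)^2+2}\le2$ and $f=\frac{40(k_2-k_1)}{k_1k_2}$, $C_1C_3=\frac1{16}$, this stays below $C_3\mathcal{R}_m$ as soon as $C_1$ is chosen small enough that $C_0C_1^3$ is small, which proves $(\rmnum{2})$. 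The band-gap modes in $\mathbb{D}_3$ (where the detuning is $O(1)$) and the acoustic-band modes in $\mathbb{D}_2$ (where only the weaker $O(1)$ inner-product bound is available) acquire the extra $n^{-3/2}$ resp.\ $n^{-1}$, yielding $(\rmnum{4})$ and $(\rmnum{3})$; and $(\rmnum{1})$ comes from the $u^{(1)}$-projection, where closing the induction is precisely what forces $C_1C_2$ into the interval $\bigl(\frac{27}{320},\frac{\pi^2}{64}\bigr)\cdot\frac{k_1k_2}{\sqrt{2k_2}(k_2-k_1)}$.

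I expect the real difficulty to be the bookkeeping of this ``dominant combination'': organizing the triple sums over $(j_1,j_2,j_3)$, over modes $(p_1,p_2,p_3)$ and harmonics, and over target sectors so that one \emph{always} succeeds in extracting the extra $1/n$ from the spatial inner products---which requires the full set of case distinctions in Lemma~\ref{lemma: inner product}---and so that the near-resonant $n^2$ amplification at the harmonic-$1$ band-edge modes is never more than offset. Keeping every constant explicit through this, so as to verify the prescribed products $C_1C_2,C_1C_3$ and that $f=40(k_2-k_1)/(k_1k_2)$---with the factor $\frac{k_2-k_1}{k_1k_2}$ coming from the inverse of the band-edge dispersion and the $40$ a numerical safety margin---actually makes each step reproduce the $\mathcal{R}_m$-proportionality it consumed, is the principal technical burden.
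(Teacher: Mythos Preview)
Your outline is essentially the paper's proof: strong induction on $m$, base case from Proposition~\ref{prop:Q2} and \eqref{eq:omega2}, inductive step by projecting the $\mathcal{O}(\epsilon^{2m+1})$ equation and controlling the resulting convolutions via Lemma~\ref{lemma: inner product}, Lemma~\ref{norm of eigenstates}, and the triple-convolution inequality $\sum_{a+b+c=k}\frac{1}{(a^2+2)(b^2+2)(c^2+2)}\lesssim\frac{1}{k^2+2}$ (which the paper isolates as Proposition~\ref{prop:inequalities_ab_abc}). The paper organises the bookkeeping through an explicit decomposition of the index set $\mathcal{S}_m$ into four sectors $F_{0,m},\dots,F_{3,m}$ and a function $\mathcal{W}(a_1,a_2,a_3,s_1,s_2,s_3,j)$ that packages the triple sums, but the substance is exactly what you describe.

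One point needs correcting. You assert that ``the frequency-correction part of $F_m$ \dots\ is of strictly lower order in $n$'' and that the two-sided constraint on $C_1C_2$ comes entirely from $(\rmnum{1})$. In the inductive step for $(\rmnum{2})$ this is false: the terms $2\omega^{<0>}\omega^{<2s>}\frac{d^2}{d\tau^2}Q^{<2t>}$ with $s+t=k+1$ contribute, after dividing by the near-edge detuning $\sim j^2/n^2$ and summing over $j\in\mathbb{D}_1$, a quantity of order $\mathcal{R}_s\mathcal{R}_t$ (the $n^{-2}$ from $\omega^{<2s>}$ is exactly cancelled by the $n^2$ from the inverse detuning), hence the \emph{same} order as the cubic contribution. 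In the paper this term accounts for roughly a third of the $C_3\mathcal{R}_{k+1}$ budget (the ``$\tfrac13$'' in $(\tfrac13+\tfrac{13}{20})<1$), and closing $(\rmnum{2})$ is what actually uses the upper constraint on $C_1C_2$. The lower bound $\frac{27}{320}$ comes from the base case $m=1$ for $(\rmnum{1})$, not from the inductive step. So when you carry out the bookkeeping, do not drop the frequency-correction terms in $(\rmnum{2})$; they must be tracked with explicit constants alongside the cubic ones.
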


\begin{proof}
We prove by induction:
When $m=1$, we have $\mathcal{R}_1=\frac{C_1 f n}{3}$ and
\begin{eqnarray*}
&|\omega^{<2>}|\approx \frac{9}{8\sqrt{2k_2}n}< \frac{C_1C_2f}{3n}=\frac{C_2\mathcal{R}_1}{n^2}; \quad
&\sum\limits_{j\in \mathbb{D}_1}|c_{2,1,j}|\approx \frac{3(k_2-k_1)n}{16 k_1 k_2\pi^2}<\frac{C_1C_3fn}{3}=C_3\mathcal{R}_1;\\
&\sum\limits_{j\in \mathbb{D}_2}|c_{2,1,j}|<\frac{(1+\varepsilon)6}{2k_2-2k_1}<\frac{C_1C_4 f}{3}=\frac{C_4\mathcal{R}_1}{n}; \quad
&\sum\limits_{j\in \mathbb{D}_3}|c_{2,1,j}|<\frac{2\tilde{C}}{n^{3/2}}<\frac{C_1C_4f}{3n^{3/2}}=\frac{C_4\mathcal{R}_1}{n^{5/2}}; \\
&\sum\limits_{j=1}^{n_1}|c_{2,3,j}|<\frac{2}{9k_2}<\frac{C_1 C_4 f}{27}=\frac{C_4\mathcal{R}_1}{3^2 n}; \quad
&\sum\limits_{j\in\mathbb{D}_2}|c_{2,3,j}|<\frac{2}{9k_2}<\frac{C_1 C_4 f}{27}=\frac{C_4\mathcal{R}_1}{3^2 n}; \\
&\sum\limits_{j\in\mathbb{D}_3}|c_{2,3,j}|<\frac{2\tilde{C}}{9k_2 n^{3/2}}<\frac{C_1 C_4 f}{27 n^{3/2}}=\frac{C_4\mathcal{R}_1}{3^2 n^{5/2}} \quad &
\end{eqnarray*}
where the last five inequalities hold when $C_4$ is chosen to be a large (yet $\mathcal{O}(1)$) constant.

Now we assume $(i)$-$(vii)$ hold for $m=1,2,\cdots,k$ and will show that they remain true for $m=k+1$.
Before that we introduce some notations:
\begin{equation}
  \mathcal{S}_{m}=\{(j,l)|j,l\in\mathbb{N}^2,1\leq j \leq 2n, 0\leq l\leq m\}, \quad m\geq 0.
  \label{eq:S_m}
\end{equation}
Here $\mathcal{S}_m$ ($m\geq 1$) consists of four parts $F_{s,m}$ ($0\leq s\leq 3$) where
\begin{equation}
\label{eq:F_s}
\begin{split}
  F_{0,m}&=\{(j,l)\in\mathcal{S}_m | j=1, 1\leq l\leq m\}, \\
  F_{1,m}&=\{(j,l)\in\mathcal{S}_m | j\in\mathbb{D}_1, 0\leq l\leq m\}\cup\{(1,0)\}, \\ 
  F_{s,m}&=\{(j,l)\in\mathcal{S}_m | j\in\mathbb{D}_s, 0\leq l\leq m\}, \quad s=2,3.
\end{split}
\end{equation}
And we define some sets
\begin{equation}
\label{eq:sets}
  \begin{split}
   &\mathbb{E}_{1}(k)=\{(x,y,z)|x,y,z\in \mathbb{N}, x+y+z=k,x<y< z\};\\
   &\mathbb{E}_{1}^0(k)=\{(x,y,z)|x,y,z\in \mathbb{N}, x+y+z=k\};\\
   &\mathbb{E}_{1}^{1}(k)=\{(x,y,z)|x,y,z\in \mathbb{N}, x+y+z=k,x\leq y\leq z\};\\
   &\mathbb{E}_{1}^{1^*}(k)=\{(x,y,z)|x,y,z\in \mathbb{N}^*, x+y+z=k,x\leq y\leq z\};\\
   &\mathbb{E}_{2}(k)=\{(x,y,z)|x,y,z\in \mathbb{N},x+y+z=k+1,x<y\}  ;\\
   &\mathbb{E}_{2}^0(k)=\{(x,y,z)|x,y,z\in \mathbb{N},x+y+z=k+1\}  ;\\
    &\mathbb{E}_{2}^{0^*}(k)=\{(x,y,z)|x,y,z\in \mathbb{N}^*,x+y+z=k+1\}  ;\\
   &\mathbb{E}_{2}^{1}(k)=\{(x,y,z)|x,y,z\in \mathbb{N},x+y+z=k+1,x\leq y\} ;\\
   &\mathbb{E}_{2}^{1^*}(k)=\{(x,y,z)|x,y,z\in \mathbb{N}^{*},x+y+z=k+1,x\leq y\}; \\
   &\mathbb{E}_{2}^{2}(k)=\mathbb{E}_{2}^{0}(k)\backslash \{(0,0,k+1)\};\\
   &\mathbb{E}_{2}^{3}(k)=\mathbb{E}_{2}^{0}(k)\backslash \{(0,0,k+1),(0,k+1,0),(k+1,0,0)\};\\
   &\mathbb{E}_{3}(k)=\{(x,y)|x,y\in \mathbb{N}^{*},x+y=k+1,x<y\};\\
   &\mathbb{E}_{3}^{0}(k)=\{(x,y)|x,y\in \mathbb{N},x+y=k+1\}; \\
   &\mathbb{E}_{3}^{0^*}(k)=\{(x,y)|x,y\in \mathbb{N}^{*},x+y=k+1\};\\
   &\mathbb{E}_{3}^1(k)=\{(x,y)|x,y\in \mathbb{N},x+y=k+1,x\leq y\};\\
   &\mathbb{E}_{3}^{1^*}(k)=\{(x,y)|x,y\in \mathbb{N^{*}},x+y=k+1,x\leq y\}
  \end{split}
\end{equation}

\begin{itemize}
\item $(i)$: $|\omega^{<2k+2>}|$

To estimate $|\omega^{<2k+2>}|$, we consider $\mathcal{O}(\epsilon^{2k+3})$ terms of equation \eqref{eq:motion nl}:
\begin{equation}
  \begin{split}
 &\quad((\omega^{<0>})^{2}\frac{d^{2}}{d\tau^{2}}-\mathcal{L})Q^{<2k+2>}(\tau)
  +\sum_{(r,s,t)\in\mathbb{E}_{2}^2(k)}\omega^{<2r>}\omega^{<2s>}\frac{d^{2}}{d\tau^{2}}Q^{<2t>}(\tau)\\
  &=\sum_{(a_{1},a_{2},a_{3})\in\mathbb{E}_{1}^0(k)} Q^{<2a_{1}>}(\tau)Q^{<2a_{1}>}(\tau)Q^{<2a_{3}>}(\tau).
  \end{split}
  \label{eq:order 2k+3}
\end{equation}
Projecting the equation onto ${\rm span}\{ u^{(1)}(e^{i\tau}+e^{-i\tau}\}$, we get
\begin{equation}
  \begin{split}
  \label{eq:omega 2k+2}
&2|\omega^{<0>}\omega^{<2k+2>}| \\
\leq&\sum_{(a_{1},a_{2},a_{3})\in \mathbb{E}_{1}^{0}(k)}
\sum_{(j_{1},l_{1})\in\mathcal{S}_{a_{1}}}
\sum_{(j_{2},l_{2})\in\mathcal{S}_{a_{2}}}
\sum_{(j_{3},l_{3})\in\mathcal{S}_{a_{3}}}
\frac{4 |({P}(a_1, a_2, a_3,l_1,l_2,l_3,j_1,j_2,j_3), u^{(1)})|}{|u^{(1)}|}\\
&+\sum_{(r,s)\in\mathbb{E}_{3}^{0^*}(k)}|\omega^{<2r>}\omega^{<2s>}|
\end{split}
\end{equation}
where 
\begin{equation}
\vec{P}(a_1, a_2, a_3,l_1,l_2,l_3,j_1,j_2,j_3)
=\prod_{i=1}^{3}(c_{2a_{i},2l_{i}+1,j_{i}}\frac{u^{(j_{i})}}
{|u^{(j_{i})}|}).
\end{equation}
To discuss the first term on the right hand side of the inequality above, we define
\begin{equation}
\label{eq:W}
\begin{split}
  &\mathcal{W}(a_{1},a_{2},a_{3},s_{1},s_{2},s_{3},j)\\
 =&\sum_{(j_{1},l_{1})\in F_{s_1,a_{1}}} \sum_{(j_{2},l_{2})\in F_{s_2,a_2}} \sum_{(j_{3},l_{3})\in F_{s_3,a_3}} \\
 &\frac{|(u^{(j_{1})}u^{(j_{2})}u^{(j_{3})},u^{(j)})|}{|u^{(j_{1})}||u^{(j_{2})}||u^{(j_{3})}||u^{(j)}|} |c_{2a_{1},2l_{1}+1,j_{1}}| |c_{2a_{2},2l_{2}+1,j_{2}}| |c_{2a_{3},2l_{3}+1,j_{3}}|
\end{split}
\end{equation}
and derive the following proposition
\begin{proposition}
(1). If $a_1=a_2=0$ and $a_3\neq 0$, then 
\begin{equation}
\mathcal{W}(0,0,a_3,1,1,1,j)<\frac{(1+\varepsilon)2C_3}{n} \mathcal{R}_{a_3}, \quad j\notin\mathbb{D}_3
\end{equation}
and 
\begin{equation}
\mathcal{W}(0,0,a_3,s_1,s_2,s_3,\tilde{j})\ll\frac{1}{n}\mathcal{R}_{a_3}
\end{equation}
for $\tilde{j}\in\mathbb{D}_3$ or $(s_1,s_2,s_3)\neq (1,1,1)$.

If $a_1=0$ and $a_2 a_3\neq 0$, then 
\begin{equation}
\mathcal{W}(0,a_2,a_3,1,1,1,j)<\frac{(1+\varepsilon)2C_3^2}{n}\mathcal{R}_{a_2} \mathcal{R}_{a_3}, \quad j\notin\mathbb{D}_3
\end{equation}
and 
\begin{equation}
\mathcal{W}(0,a_2,a_3,s_1,s_2,s_3,\tilde{j})\ll\frac{1}{n}\mathcal{R}_{a_2} \mathcal{R}_{a_3}
\end{equation}
for $\tilde{j}\in\mathbb{D}_3$ or $(s_1,s_2,s_3)\neq (1,1,1)$.

If $a_1 a_2 a_3 \neq 0$, then 
\begin{equation}
\mathcal{W}(a_1,a_2,a_3,1,1,1,j)<\frac{(1+\varepsilon)2C_3^3}{n} \mathcal{R}_{a_1}\mathcal{R}_{a_2}\mathcal{R}_{a_3}, \quad j\notin\mathbb{D}_3;
\end{equation}
and 
\begin{equation}
\mathcal{W}(a_1,a_2,a_3,s_1,s_2,s_3,\tilde{j})\ll\frac{1}{n}\mathcal{R}_{a_1}\mathcal{R}_{a_2}\mathcal{R}_{a_3}
\end{equation}
for $\tilde{j}\in\mathbb{D}_3$ or $(s_1,s_2,s_3)\neq (1,1,1)$.

(2). Utilizing the inequalities in \ref{prop:inequalities_ab_abc}, we have
\begin{equation}
\begin{split}
&\sum_{(a_{1},a_{2},a_{3})\in \mathbb{E}_{1}^{0}(k)}
\sum_{(j_{1},l_{1})\in\mathcal{S}_{a_{1}}}
\sum_{(j_{2},l_{2})\in\mathcal{S}_{a_{2}}}
\sum_{(j_{3},l_{3})\in\mathcal{S}_{a_{3}}}
\frac{|({P}(a_1, a_2, a_3,l_1,l_2,l_3,j_1,j_2,j_3), u^{(j)})|}{|u^{(j)}|} \\
=&\sum_{(a_{1},a_{2},a_{3})\in \mathbb{E}_{1}^{0}(k)}
\sum_{s_1=0}^3 \sum_{s_2=0}^3 \sum_{s_3=0}^3
\mathcal{W}(a_1,a_2,a_3,s_1,s_2,s_3,j) \\
<&6\sum_{(a_{1},a_{2},a_{3})\in \mathbb{E}_{1}^{1^*}(k)}\frac{(1+\varepsilon)2C_3^3}{n} \mathcal{R}_{a_1}\mathcal{R}_{a_2}\mathcal{R}_{a_3}\\
&+6\sum_{(a_{2},a_{3})\in \mathbb{E}_{3}^{1^*}(k-1)}\frac{2(1+\varepsilon)C_3^2}{n}\mathcal{R}_{a_2} \mathcal{R}_{a_3}+
3\frac{2(1+\varepsilon)C_3}{n}\mathcal{R}_{k}\\
<&\frac{\mathcal{R}_{k+1}}{n^2}\cdot\frac{12C_3}{f}(1+8C_1C_3+32C_1^2C_3^2)(1+\varepsilon).
\end{split}
\end{equation}

\end{proposition}
On the other hand, the second term on the right hand side of equation \eqref{eq:omega 2k+2} can be estimated as
\begin{equation}
\begin{split}
\sum_{(r,s)\in\mathbb{E}_{3}^{0^*}(k)}|\omega^{<2r>}\omega^{<2s>}|\leq& \sum_{(r,s)\in\mathbb{E}_{3}^{1^*}(k)}\frac{2C_1^2C_2^2f^{k+1}n^{k-3}}{(r^2+2)(s^2+2)}\\
<&\frac{8C_1C_2^2\mathcal{R}_{k+1}}{n^4}\ll \frac{\mathcal{R}_{k+1}}{n^2}.
\end{split}
\end{equation}
Therefore 
\begin{equation}
    \begin{split}
|\omega^{<2k+2>}|<&\frac{\mathcal{R}_{k+1}}{n^2}\cdot\frac{24C_3}{f\sqrt{2k_2}}(1+8C_1C_3+32C_1^2C_3^2)(1+\varepsilon)\\
=&\frac{\mathcal{R}_{k+1}}{C_1 n^2}\cdot\frac{39k_1k_2}{640(k_2-k_1)\sqrt{2k_2}}(1+\varepsilon) <\frac{C_2\mathcal{R}_{k+1}}{n^2}.
    \end{split}
\end{equation}

\item $(ii)$: $\sum_{j\in\mathbb{D}_1}|c_{2k+2,1,j}|$

Projecting \eqref{eq:order 2k+3} onto ${\rm span}\{ u^{(j)}(e^{i\tau}+e^{-i\tau})\}$ yields
\begin{equation}
\begin{split}
&\sum_{j\in\mathbb{D}_1}|c_{2k+2,1,j}|\\
<&4\sum_{(a_{1},a_{2},a_{3})\in \mathbb{E}_{1}^{0}(k)}
\sum_{s_1=0}^3 \sum_{s_2=0}^3 \sum_{s_3=0}^3\sum_{j\in\mathbb{D}_1}
\frac{\mathcal{W}(a_1,a_2,a_3,s_1,s_2,s_3,j)}{|(\omega^{(j)})^2-
(\omega^{(1)})^2|} \\
&+\sum_{(r,s,t)\in\mathbb{E}_2^2}
\frac{|\omega^{<2r>}\omega^{<2s>}|\sum_{j\in\mathbb{D}_1} |c_{2t,1,j}|}{|(\omega^{(j)})^2-(
\omega^{(1)})^2|}.
\end{split}
\end{equation}
On the right hand side, the first term yields 
\begin{equation}
\begin{split}
&\sum_{(a_{1},a_{2},a_{3})\in \mathbb{E}_{1}^{0}(k)}
\sum_{s_1=0}^3 \sum_{s_2=0}^3 \sum_{s_3=0}^3 ( \sum_{j=2}^{[n^{1-\varepsilon}]}+\sum_{j=[n^{1-\varepsilon}]+1}^{n_1})
\frac{\mathcal{W}(a_1,a_2,a_3,s_1,s_2,s_3,j)}{|(\omega^{(j)})^2-(\omega^{(1)})^2|}\\
<& \frac{\pi^2}{6}\cdot \frac{2(k_2-k_1)}{k_1 k_2\pi^2}\cdot {\mathcal{R}_{k+1}}\cdot\frac{12C_3}{f}(1+8C_1C_3+32C_1^2C_3^2)(1+\varepsilon)
\end{split}
\end{equation}
where $(\omega^{(j)})^2-(\omega^{(1)})^2\approx\frac{k_1k_2(j^2-1)\pi^2}{2(k_2-k_1)(n-1)^2}$ for $2\leq j\leq n^{1-\varepsilon}$ and $(\omega^{(j)})^2-(\omega^{(1)})^2>\frac{k_1k_2\pi^2}{4(k_2-k_1)n^{2\varepsilon}}$ for $n^{1-\varepsilon}<j<n_1$.
At the same time, the second term is bounded from above as
\begin{equation}
\begin{split}
&\sum_{(r,s,t)\in\mathbb{E}_2^2(k)}\frac{|\omega^{<2r>}\omega^{<2s>}|\sum_{j\in\mathbb{D}_1} |c_{2t,1,j}|}{|(\omega^{(j)})^2-(\omega^{(1)})^2|} \\ 
<& \sum_{(s,t)\in\mathbb{E}_3^{0^*}(k)}2\omega^{<0>}\frac{|\omega^{<2s>}|\sum_{j\in\mathbb{D}_1} |c_{2t,1,j}|}{|(\omega^{(j)})^2-(\omega^{(1)})^2|}
+ \sum_{(r,s,t)\in\mathbb{E}_2^{0^*}(k)}\frac{|\omega^{<2r>}\omega^{<2s>}|\sum_{j\in\mathbb{D}_1} |c_{2t,1,j}|}{|(\omega^{(j)})^2-(\omega^{(1)})^2|} \\
<& \sum_{(s,t)\in\mathbb{E}_3^{1^*}(k)}4\sqrt{2k_2}(1+\varepsilon)C_2C_3\mathcal{R}_{s}\mathcal{R}_{t}\frac{2(k_2-k_1)}{k_1k_2\pi^2} \\
<& \mathcal{R}_{k+1}\frac{32\sqrt{2k_2}C_1C_2C_3(k_2-k_1)}{k_1k_2\pi^2}(1+\varepsilon).
\end{split}
\end{equation}
As a result, 
\begin{equation}
\begin{split}
    \sum_{j\in\mathbb{D}_1}|c_{2k+2,1,j}|
    <& C_3\mathcal{R}_{k+1}\frac{(1+\varepsilon)(k_2-k_1)}{k_1k_2\pi^2}(32\sqrt{2k_2}C_1C_2+\frac{16\pi^2}{f}(1+8C_1C_3+32C_1^2C_3^2)) \\
    <&C_3\mathcal{R}_{k+1}(\frac{1}{3}+\frac{13}{20})(1+\varepsilon)<C_3\mathcal{R}_{k+1}.
\end{split}
\end{equation}

\item $(iii)$: $\sum_{j\in\mathbb{D}_2}|c_{2k+2,1,j}|$

At the next step, we again project \eqref{eq:order 2k+3} onto ${\rm span}\{ u^{(j)}(e^{i\tau}+e^{-i\tau}\}$ and consider $\sum_{j\in\mathbb{D}_2}|c_{2k+2,1,j}|$ as
\begin{equation}
\begin{split}
&\sum_{j\in\mathbb{D}_2}|c_{2k+2,1,j}|\\
<&4\sum_{(a_{1},a_{2},a_{3})\in \mathbb{E}_{1}^{0}(k)}
\sum_{s_1=0}^3 \sum_{s_2=0}^3 \sum_{s_3=0}^3\sum_{j\in\mathbb{D}_2}
\frac{\mathcal{W}(a_1,a_2,a_3,s_1,s_2,s_3,j)}{|(\omega^{(j)})^2-(\omega^{(1)})^2|} \\
&+\sum_{(r,s,t)\in\mathbb{E}_2^2}\frac{|\omega^{<2r>}\omega^{<2s>}|\sum_{j\in\mathbb{D}_2} |c_{2t,1,j}|}{|(\omega^{(j)})^2-(\omega^{(1)})^2|}
\end{split}
\end{equation}
where
\begin{equation}
\begin{split}
&\sum_{(a_{1},a_{2},a_{3})\in \mathbb{E}_{1}^{0}(k)}
\sum_{s_1=0}^3 \sum_{s_2=0}^3 \sum_{s_3=0}^3 \sum_{j\in\mathbb{D}_2}
\frac{\mathcal{W}(a_1,a_2,a_3,s_1,s_2,s_3,j)}{|(\omega^{(j)})^2-(\omega^{(1)})^2|}\\
<& \frac{\mathcal{R}_{k+1}}{2(k_2-k_1)n}\cdot\frac{12C_3}{f}(1+8C_1C_3+32C_1^2C_3^2)
\end{split}
\end{equation}
and 
\begin{equation}
\begin{split}
&\sum_{(r,s,t)\in\mathbb{E}_2^2(k)}\frac{|\omega^{<2r>}\omega^{<2s>}|\sum_{j\in\mathbb{D}_2} |c_{2t,1,j}|}{|(\omega^{(j)})^2-(\omega^{(1)})^2|} \\ 
<& \sum_{(s,t)\in\mathbb{E}_3^{0^*}(k)}2\omega^{<0>}\frac{|\omega^{<2s>}|\sum_{j\in\mathbb{D}_2} |c_{2t,1,j}|}{|(\omega^{(j)})^2-(\omega^{(1)})^2|}
+ \sum_{(r,s,t)\in\mathbb{E}_2^{0^*}(k)}\frac{|\omega^{<2r>}\omega^{<2s>}|\sum_{j\in\mathbb{D}_2} |c_{2t,1,j}|}{|(\omega^{(j)})^2-(\omega^{(1)})^2|} \\
<& \sum_{(s,t)\in\mathbb{E}_3^{1^*}(k)}4\sqrt{2k_2}(1+\epsilon)C_2C_4\mathcal{R}_{s}\mathcal{R}_{t}\frac{1}{2(k_2-k_1)n^3}\ll \frac{\mathcal{R}_{k+1}}{n}.
\end{split}
\end{equation}
Then with a relatively large constant $C_4$, we have $\sum_{j\in\mathbb{D}_2}|c_{2k+2,1,j}|<\frac{C_4\mathcal{R}_{k+1}}{n}$.

\item $(iv)$-$(vii)$

The proof for $(iv)$-$(vii)$ is similar to that for $(iii)$ and again $C_4$ should be chosen large. In fact, it suffices to let $C_4$ be the maximum of the requested values in the discussion on $m=1$ and $(iii)$-$(vii)$.
\end{itemize}

\end{proof}

\begin{proof}
(\textbf{Theorem \ref{theorem}})
According to the definition of the norm of $Q^{(2m)}(\tau)$, we have
\begin{equation}
\begin{split}
  &\quad\|Q^{<2m>}(\tau)\|^{2}
  =\frac{1}{2\pi}
  \sum_{(j,l)\in\mathcal{S}_{m}}
  |c_{2m,2l+1,j}|^{2}
  \\
  &<\frac{1}{2\pi}(\sum_{\lambda=1}^{3}\sum_{j\in\mathbb{D}_{\lambda}}
  |c_{2m,1,j}|
  +\sum_{l=1}^{m}\sum_{j=1}^{n_{1}}
  |c_{2m,2l+1,j}|
  +\sum_{\lambda=2}^{3}\sum_{l=1}^{m}\sum_{j\in\mathbb{D}_{\lambda}}
  |c_{2m,2l+1,j}|)^{2}
  \end{split}
  \label{norm:Q(tau)}
\end{equation}
Combining the conclusions from ($\rmnum{2}$) to ($\rmnum{7}$) in \textbf{Lemma \ref{lemma: main estimates}}, we know
\begin{equation}
  \|Q^{<2m>}(\tau)\|^{2}
  <2C_{3}^{2}\mathcal{R}_{m}^{2}.
\end{equation}
Thus, 
the radius of $Q(\tau)$ denoted by $r_{1}$
in \eqref{eq:series1} satisfies
\begin{equation}
\frac{1}{r^{2}_{1}}=\lim_{m\rightarrow +\infty}
\frac{\mathcal{R}_{m+1}}
{\mathcal{R}_{m}} 
=fn
\end{equation}

Similarly, we known from ($\rmnum{1}$)
in \textbf{Lemma \ref{lemma: main estimates}}, the radius of series $\omega$ denoted by $r_{2}$ satisfies $r_{1}=r_{2}\sim\mathcal{O}(\frac{1}{\sqrt{n}})$.

\end{proof}

Here \textbf{Theorem \ref{theorem}} guarantees that the expansions of $Q(\tau)$ and $\omega$ in \eqref{eq:series1} hold up to $\epsilon\sim O(\frac{1}{\sqrt{n}})$, beyond which the frequency crosses the lower edge of optical band and most of the assumptions fail.
As the frequency exits the optical band and enters the band gap, numerical results show that the corresponding state $Q$ grows more middle-localized. It should be noticed that this is a typical mechanism for the emergence of localized states as their frequencies should be outside the bands. As another example, if the linear chain \eqref{eq:motion} is equipped with $k_{3,1}\not\approx 2k_2$ and $|k_{3,2}-2k_2|\ll \frac{1}{n}$, then near the lower edge of the optical band ($(\omega^{(k)})^{2}\approx 2k_2$) we have $\theta^{(k)}\approx \frac{(k-\frac{1}{2})\pi}{n-1}$. When cubic nonlinearity is adopted and \eqref{eq:motion nl} is considered, then in the optical band a frequency will cross the lower edge and its state $u$ will become end-localized (a nonlinear edge state). It is worth noting that most of our discussion on the emergence of nonlinear middle-localized state can also apply to the case of this nonlinear edge state. Unfortunately, due to space constraints, we do not include that part in this work and will leave it in future reports.

To understand the process of a state growing localized, a rough but straightforward perspective is considering the approximation of $Q$ for small $\epsilon$
\begin{equation}
\label{eq:tilde Q}
 Q\approx\tilde{Q}(\tau)
 = (\epsilon \frac{u^{(1)}}{|u^{(1)}|}+\epsilon^3 c_{2,1,3} \frac{u^{(3)}}{|u^{(3)}|})(e^{i\tau}+e^{-i\tau})
\end{equation}
where $c_{2,1,3}<0$. In the generic case $k_{3,1}\not\approx 2k_2\not\approx k_{3,2}$, the spatial envelopes of $u^{(1)}$ and $u^{(3)}$ are approximately $\sin x$ and $\sin 3x$ ($x\in (0,\pi)$), respectively. Then adding $\epsilon^2 c_{2,1,3}\sin 3x$ to $\sin x$ basically raises its middle and flattens its tails, making it more middle-localized. In the same spirit, when $k_{3,1}\not\approx 2k_2$ and $|k_{3,2}-2k_2|\ll \frac{1}{n}$, the spatial envelopes of $u^{(1)}$ and $u^{(2)}$ look like $\sin \frac{x}{2}$ and $\sin \frac{3x}{2}$ ($x\in (0,\pi)$), respectively. Then again adding $\epsilon^2 c_{2,1,2}\sin \frac{3x}{2}$ makes $\sin\frac{x}{2}$ more right-localized.

In order to better quantitatively study the localized property, we define 
\begin{equation}
p_{l_1,l_2}[f]=\frac{|(I_{l_1,l_2}, f^2)|}{\|f\|^2}
\end{equation}
where $f$ is a vector and $I_{l_1,l_2}$ is the vector with $l_1$-th to $l_2$-th elements being one while other elements being zero.
If the interval $[l_1,l_2]$ locates in the middle of the chain, as the state $Q$ gets localized, $p_{l_1,l_2}[Q(0)]$ usually increases accordingly. 
\begin{remark}
When $l_1$ and $l_2$ in $I_{l_1,l_2}$ take different values, in many cases the change of $p_{l_1,l_2}[Q(0)]$ over $\epsilon$ (or energy of $Q$) also has an increasing trend (See panel (c) in Fig.~\ref{figure:nonlinear} for instance). Moreover, even when $I_{l_1,l_2}$ is replaced by some other weight vector $W$, $\frac{|( W, Q(0)^2 )|}{\| Q(0)\|^2}$ may still increase as $\epsilon$ grows. As another example, the increasing trend of $\frac{|( W, Q(0)^2 )|}{\| Q(0)\|^2}$ with 
\begin{equation}
\label{eq:Weight}
W_j=
\begin{cases}
0, & \quad 1\leq j\leq 26~{\rm or}~275\leq j\leq 300; \\
1, & \quad 53\leq j\leq 248; \\
\frac{j-26}{27}, & \quad 27\leq j\leq 52; \\
\frac{275-j}{27}, & \quad 249\leq j\leq 274
\end{cases}
\end{equation}
is demonstrated in panel (c) of Fig.~\ref{figure:nonlinear}.
With this being said, we will just consider $p_{m_1,2n+1-m_1}[Q(0)]$ for a given $m_1$ in the following discussion.
\end{remark}

On the other hand, when $\epsilon$ is small, we can calculate $p_{m_1,2n+1-m_1}[\tilde{Q}(0)]$ explicitly as
\begin{equation}
p_{m_1,2n+1-m_1}[\tilde{Q}(0)]=\frac{ \sum_{j=m_1}^{2n+1-m_1}(\frac{u_j^{(1)}}{|u^{(1)}|}+\epsilon^2 c_{2,1,3}\frac{u_j^{(3)}}{|u^{(3)}|})^2 }{1+\epsilon^4 c_{2,1,3}^2}
\end{equation}
where
\begin{itemize}
    \item $\tilde{Q}$ corresponds to the frequency near the lower edge of the optical band;
    \item $m_1$ is chosen such that $m_1\approx\frac{n}{6}$;
    \item $\Delta\theta^{(j)}\approx\frac{j\pi}{n}$, $c_{2,1,3}\approx -\frac{3(k_2-k_1)n}{16k_1k_2\pi^2}$, $|u^{(1)}|^2\approx n$ and $|u^{(3)}|^2\approx n$.
\end{itemize}
With the estimates above we can obtain the following form of $p_{m_1,n+1-m_1}[\tilde{Q}(0)]$ as
\begin{equation}
p_{m_1,2n+1-m_1}[\tilde{Q}(0)]=\frac{\frac{1}{|u^{(1)}|^{2}}(q-\frac{T_{1,+}}{4}-\frac{T_{1,-}}{4})+
\frac{2\epsilon^{2}c_{2,1,3}}{|u^{(1)}||u^{(3)}|}T
+\frac{\epsilon^{4}c_{2,1,3}^{2}}{|u^{(3)}|^{2}}(q-\frac{T_{3,+}}{4}-\frac{T_{3,-}}{4})}{1+\epsilon^4 c_{2,1,3}^2}
\end{equation}
where $q=n+1-m_1$,
\begin{equation}
\label{T_j,pm}
\begin{split}  T_{j,\pm}&=\frac{(\cos2\Delta\theta^{(j)}-1)[\cos2(\Delta p_{j,\pm}+(n-\frac{m_{1}-1}{2})\Delta\theta^{(j)})-\cos2(\Delta p_{j,\pm}+\frac{m_{1}-1}{2}\Delta\theta^{(j)})]}
{1-\cos2\Delta\theta^{(j)}}\\
      &\quad+
 \frac{\sin2\Delta\theta^{(j)}[\sin2(\Delta p_{j,\pm}+(n-\frac{m_{1}-1}{2})\Delta\theta^{(j)})-\sin2(\Delta p_{j,\pm}+\frac{m_{1}-1}{2}\Delta\theta^{(j)})]}
{1-\cos2\Delta\theta^{(j)}}\\
  &\approx\frac{-2\sin m_{1}\Delta\theta^{(j)}\sin2\Delta\theta^{(j)}}
 {2(\Delta\theta^{(j)})^{2}}\approx\frac{-2\sin m_{1}\Delta\theta^{(j)}}
 {\Delta\theta^{(j)}},(j=1,3)
\end{split}
\end{equation}
and
\begin{equation}
\label{T}
\begin{split}
    T&=\sum^{\frac{2n+1-m_{1}}{2}}_{j=\frac{m_{1}+1}{2}}
    \sin (\Delta p_{1,+}+(j-1)\Delta\theta^{(1)})
    \sin (\Delta p_{3,+}+(j-1)\Delta\theta^{(3)})\\
    &\quad+\sum^{\frac{2n+1-m_{1}}{2}}_{j=\frac{m_{1}+1}{2}}
    \sin (\Delta p_{1,-}+(j-1)\Delta\theta^{(1)})
    \sin (\Delta p_{3,-}+(j-1)\Delta\theta^{(3)})\\
    &\approx \frac{2\sin(\Delta\theta^{(1)}+\Delta\theta^{(3)})
    \sin \frac{m_{1}}{2}(\Delta\theta^{(1)}+\Delta\theta^{(3)})}
    {(\Delta\theta^{(1)}+\Delta\theta^{(3)})^{2}}
    -\frac{2\sin(\Delta\theta^{(1)}-\Delta\theta^{(3)})
    \sin \frac{m_{1}}{2}(\Delta\theta^{(1)}-\Delta\theta^{(3)})}
    {(\Delta\theta^{(1)}-\Delta\theta^{(3)})^{2}}\\
    &\approx \frac{n(\sqrt{3}-2)}{4\pi}.
\end{split}
\end{equation}
where $\Delta p_{j,\pm}=\Delta\beta^{(j)}\pm\Delta\alpha^{(j)}$ as defined in \eqref{Delta_p}.

To be more specific, 
\begin{equation}
\label{eq:p_approx}
p_{m_1,2n+1-m_1}[\tilde{Q}(0)]\approx \frac{(\frac{5}{6}+\frac{1}{2\pi})+n\epsilon^2\frac{3(k_2-k_1)(2-\sqrt{3})}{32k_1k_2\pi^3}+n^2\epsilon^4(\frac{5}{6}+\frac{1}{3\pi})[\frac{3(k_2-k_1)}{16k_1k_2\pi^2}]^2}{1+n^2\epsilon^4[\frac{3(k_2-k_1)}{16k_1k_2\pi^2}]^2}.
\end{equation}
Since $\frac{3(k_2-k_1)(2-\sqrt{3})}{32k_1k_2\pi^3}>0$, $p_{m_1,2n+1-m_1}[\tilde{Q}(0)]$ is an increasing function for small $\epsilon$ ($n\epsilon^2\ll 1$). In other words, this is an explicit evidence that the eigenmode gradually becomes localized as its frequency tends to exit the band (see the panel (b) of Fig.~\ref{figure:nonlinear} for the comparison between $p_{m_1,2n+1-m_1}[\tilde{Q}(0)]$ and $p_{m_1,2n+1-m_1}[{Q}(0)]$). When $\epsilon$ is large enough for the frequency to touch the band edge, \textbf{Theorem \ref{theorem}} and the approximation for $Q$ no longer stay valid, but numerical results show that the state $Q$ continues to become localized. An example of the change of $Q$ in shape and in $p_{m_1,2n+1-m_1}[Q(0)]$ is illustrated in the panels (c) and (d) of Fig.~\ref{figure:nonlinear}.

\begin{figure}[!htp]
\begin{minipage}{0.5\linewidth}
\leftline{(a)}
\centerline{\includegraphics[height = 5cm, width = 6cm]{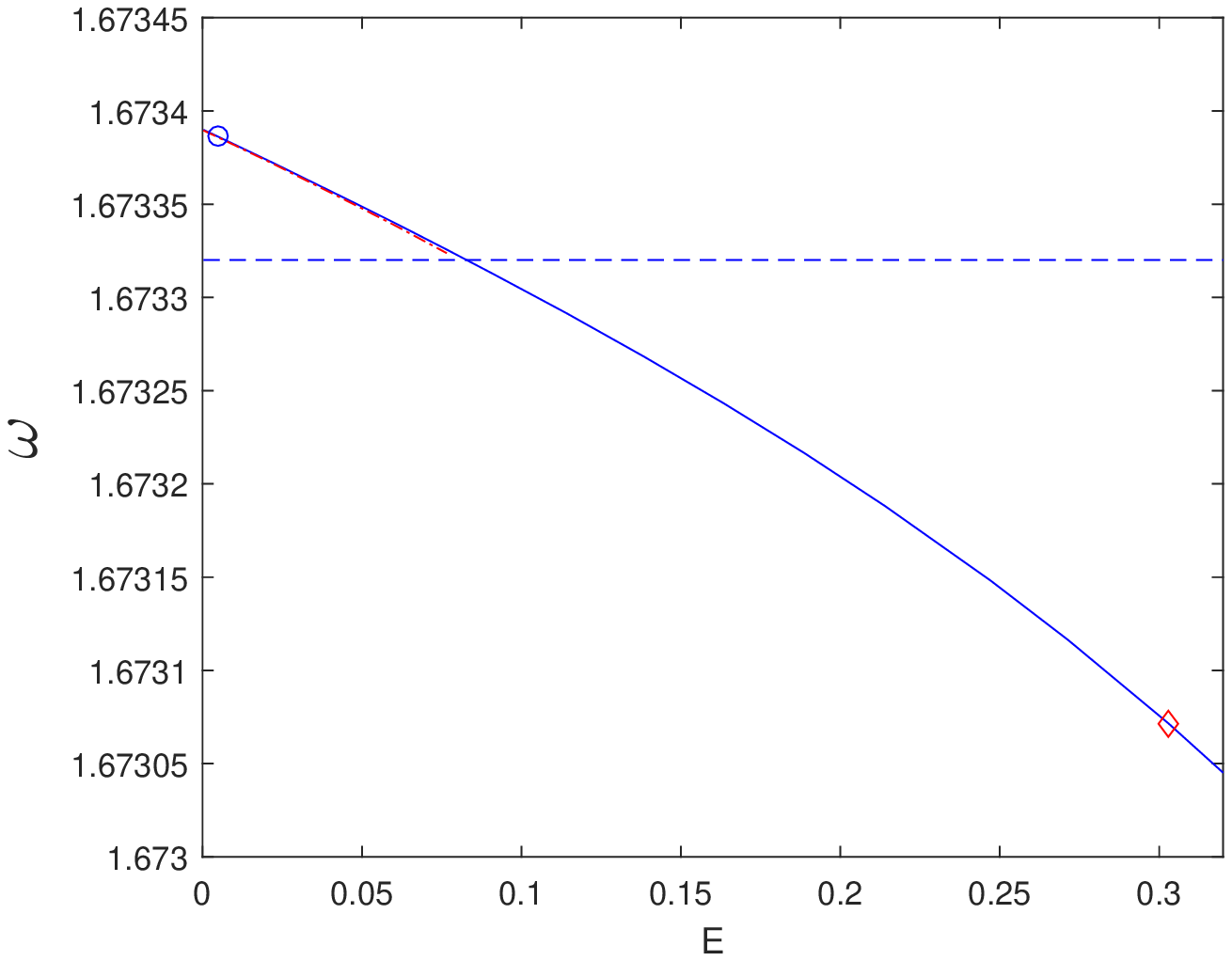}}
\end{minipage}
\hfill
\begin{minipage}{0.5\linewidth}
\leftline{(b)}
\centerline{\includegraphics[height = 5cm, width = 6cm]{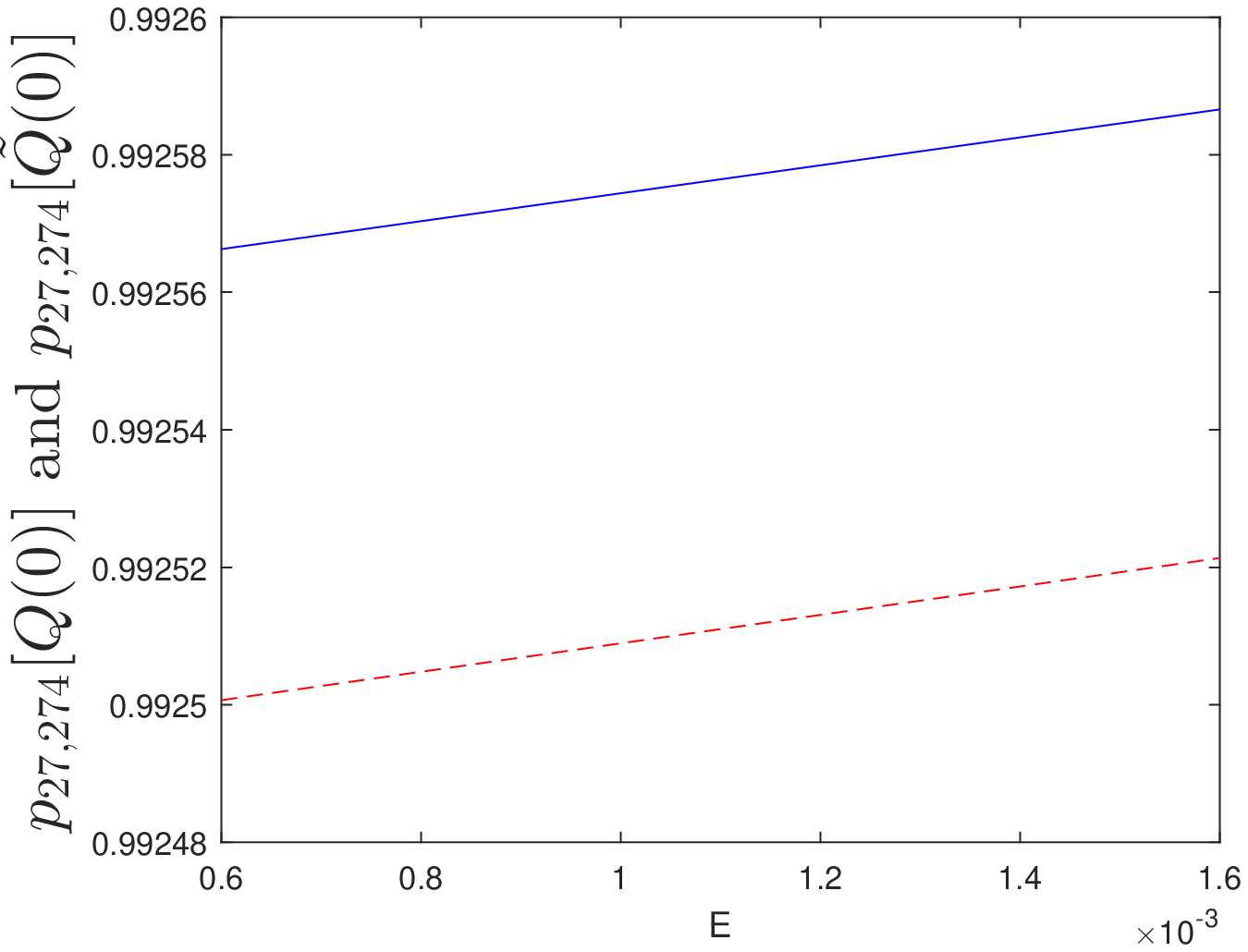}}
\end{minipage}

\begin{minipage}{0.5\linewidth}
\leftline{(c)}
\centerline{\includegraphics[height = 5cm, width = 6cm]{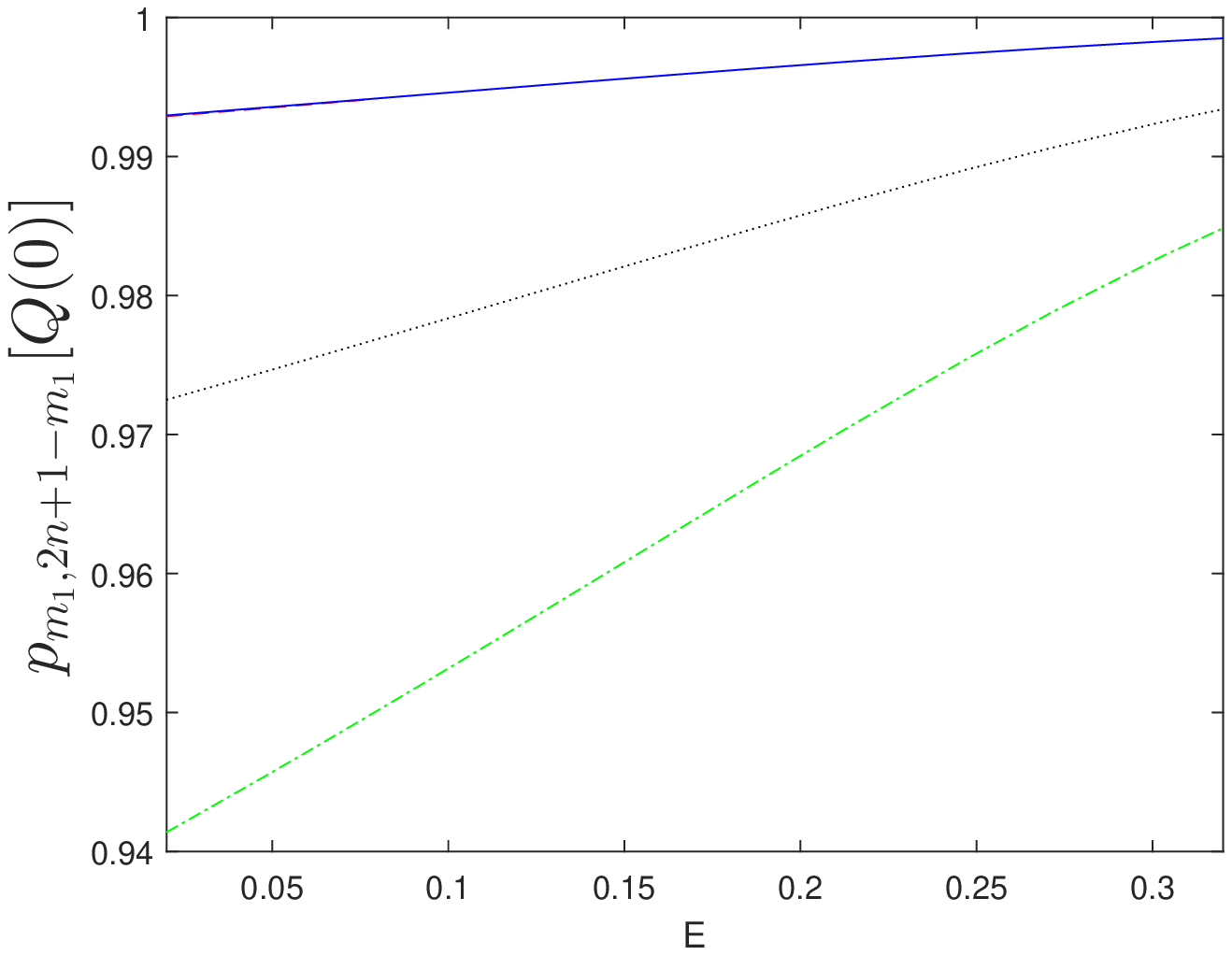}}
\end{minipage}
\hfill
\begin{minipage}{0.5\linewidth}
\leftline{(d)}
\centerline{\includegraphics[height = 5cm, width = 6cm]{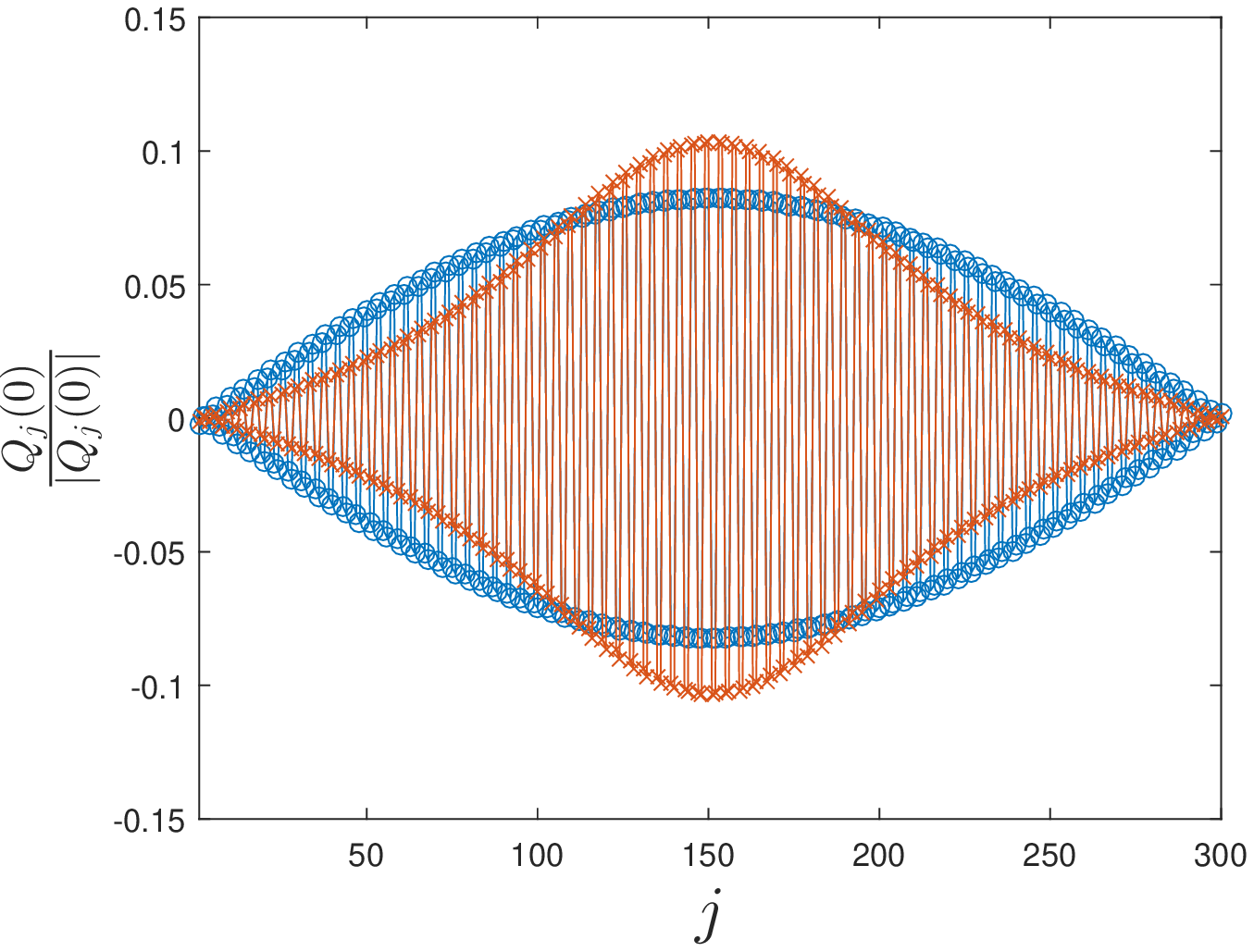}}
\end{minipage}

\caption{Here we show the change of nonlinear time-periodic state and its frequency for the system \eqref{eq:motion nl0} with $n=150$, $k_1=0.6$,
$k_{2}=1.4$, $k_{3,1}=2$ and $k_{3,2}=1.6$.  
The panel (a) shows the change of frequency $\omega$ (blue solid line) near the lower edge of the optical band (blue horizontal dashed line) over the growth of energy in \eqref{eq:energy} and its approximation from \eqref{eq:linear approx} in red ``$\cdot -$'' line. 
In panel (b) we plot the comparison between $p_{m_1,2n+1-m_1}[{Q}(0)]$ (blue solid line) and the approximation of $p_{m_1,2n+1-m_1}[\tilde{Q}(0)]$ in \eqref{eq:p_approx} (red dashed line) for small energy (or $\epsilon$). In panel (c) blue solid line and green ``$\cdot -$'' line represent  $p_{m_1,2n+1-m_1}[{Q}(0)]$ over energy with $m_1=27$ and $m_1=53$ respectively. In addition, $\frac{|( W, Q(0)^2 )|}{\| Q(0)\|^2}$ with $W$ defined in \eqref{eq:Weight} and the approximation of $p_{m_1,2n+1-m_1}[\tilde{Q}(0)]$ are shown in black dotted line and red dashed line respectively.
The panel (d) shows the normalized nonlinear state in blue circles (red crosses) with its frequency highlighted by ``o"( ``$\diamond$") in panel(a).
}
\label{figure:nonlinear}
\end{figure}

\section{Conclusions and future challenges}
\label{sec:conclusion}
In the present work, we have considered a class of long diatomic chains and provided a systematic perspective concerning the emergence of nonlinear localized states (or breathers) with frequencies crossing the band edges. To achieve that goal, we have adopted the long chain limit and derived approximated forms of the eigenstates in the linear diatomic chain model. Particularly linear edge states with frequencies outside the bands and linear non-localized states with frequencies near the band edges have been estimated, showing different characteristics in settings with different boundary conditions. Among others two representative situations are $k_{3,1}\not\approx 2k_2\not\approx k_{3,2}$ and $k_{3,1}\not\approx 2k_2\approx k_{3,2}$, which in the nonlinear regime can probably lead to the emergence of middle-localized states and end-localized states (edge states) from the lower edge of the optical band. For the former more generic case, we have conducted nonlinear continuation of the linear eigenstate with frequency closet to the band edge and obtained an approximation of the nonlinear state. Based on our knowledge of the linear eigenstates, it can be proved that the approximation continued from the linear limit remains valid before its amplitude reaches $O(\frac{1}{\sqrt{n}})$ (chain length~$\sim O(n)$) and the frequency crosses the band edge. Moreover, growing of the corresponding localized form has been explicitly identified and numerically illustrated.

The long-chain assumption in this work has connected the symmetry of infinite chains with the finiteness of real chains and offered a frame for asymptotic analysis.
Although we have mainly focused on the emergence of middle-localized states, the corresponding procedure can also be applied to the study of nonlinear edge states. Moreover, the large-lattice limit has the potential to play an role in investigating the existence and stability of localized states in lattices with more complex structures or in higher dimensions. 
Nevertheless, there are numerous open questions under this topic that merit further
exploration in the near future. For instance, we have established the relation between the boundary conditions and the existence of different localized states. It would be favorable to redescribe this relation in the frame of topology and relate it to the bulk-boundary correspondence. Next challenge is the analysis of the nonlinear states with frequencies just exiting the bands or outside the bands since the expansion from the linear limit diverges in these cases. Besides, another question of interest concerns the identification of localization. We have employed the ratio between the weighted and unweighted norms to represent the extent of localization which is dependent on the choice of weights. A more general and universal characterization for localization could help revealing the connection between the localization and the frequency. 
Some of these directions are currently under consideration and will be reported in future studies.

\vspace{5mm}

{\it Acknowledgements.} HX gratefully acknowledges that this work is partially supported by NSFC (Grant No. 11801191).

\begin{appendices}

\section{Appendix: On the Special cases of edge states}
\label{sec:special cases proof}

Proof of Lemma~\ref{lemma: k31_approx_k2}:

In \eqref{eq:boundary conditions_1}\eqref{eq:boundary conditions_2} with $n\gg 1$, if $k_{3,1}=k_2+\delta k_{3,1}$ and $|\delta k_{3,1}|\ll 1$, then $k_{3,2}\not\approx k_2$ is a necessary condition to the existence of a left edge state with $a\approx -\frac{k_1}{k_2}$ and $|c_2 a^{2-2n}|\leq O(1)$.

\begin{proof}
We follow the notations in Section~\ref{subsubsec:k31_approx_k2}.  
\begin{itemize}
\item If $|\delta a|\ll |\delta\tilde{a}|\ll 1$ (here $\delta\tilde{a}<\delta a<0$), then $c_2\approx -\frac{k_2^2}{k_1}\sqrt{\frac{k_2\Delta a}{k_1(k_1^2-k_2^2)}}$. This is the situation where $\delta a$ and $\delta k_{3,1}$ are small and the results are basically the same as those in the subcase $k_{3,1}=k_2$. 
\begin{itemize}
\item When $\sqrt{\delta\tilde{a}}\sim \sqrt{\Delta a}\ll a^{2n-2}$:\\
Then $1\ll k_{3,2}\approx \frac{k_1 a^{2n-2}}{\sigma c_2}\sim O(\frac{a^{2n-2}}{\sqrt{\Delta a}})\ll \frac{a^{2n-2}}{\sqrt{\delta a}}$. Since $c_2 a^{2-2n}\ll 1$, the corresponding state $u$ is a left edge state.
\item When $\sqrt{\delta\tilde{a}}\sim \sqrt{\Delta a}\sim a^{2n-2}$:\\
Then $k_{3,2}\approx \omega^2-k_1+ \frac{k_1 a^{2n-2}}{\sigma c_2}\leq O(1)$ and $k_{3,2}\not\approx \omega^2-k_1\approx k_2$. Since $c_2 a^{2-2n}\sim O(1)$, the eigenstate $u$ in this case can also be considered as left localized.
\item When $\sqrt{\delta\tilde{a}}\sim \sqrt{\Delta a}\gg a^{2n-2}$:\\
Then $k_{3,2}\approx \omega^2-k_1+ \frac{k_1( v_{11} a^{2n-2}+c_2 v_{12})}{\sigma c_2 v_{11}}\approx \omega^2-k_1\approx k_2$ but now $|c_2 a^{2-2n}|\gg 1$.
\end{itemize}

\item If $|\delta a|\sim |\delta\tilde{a}|$, then $c_2\sim O(\sqrt{\delta a}-\sqrt{\Delta a})\sim O(\sqrt{\delta a})$. At first we consider a special situation that $\Delta a=\delta a+\delta \tilde{a}\ll \delta a$. In this case $c_2\approx \frac{k_2^2}{k_1}\sqrt{\frac{k_2\delta a}{k_1(k_1^2-k_2^2)}}\sim O(\sqrt{\delta a}) \gg O(\sqrt{\Delta a})$. Another (and more generic) situation is $O(\Delta a)\sim O(\delta a)\sim O(\delta\tilde{a})$ and accordingly $c_2\sim O(\sqrt{\delta a})\sim O(\sqrt{\Delta a})$.
\begin{itemize}
\item When $\delta k_{3,1}\sim \sqrt{\delta a}\ll a^{2n-2}$, $k_{3,2}\approx \omega^2-k_1+\frac{k_1 a^{2n-2}}{\sigma c_2}\sim O(\frac{ a^{2n-2}}{\delta k_{3,1}})$.
\item When $\delta k_{3,1}\sim \sqrt{\delta a}\sim a^{2n-2}$, $k_{3,2}\approx \omega^2-k_1+\frac{k_1 a^{2n-2}}{\sigma c_2}\leq O(1)$ and $k_{3,2}\not\approx \omega^2-k_1\approx k_2$.
\item When $\delta k_{3,1}\sim \sqrt{\delta a}\gg a^{2n-2}$, $k_{3,2}\approx \omega^2-k_1+\frac{k_1( v_{11} a^{2n-2}+c_2 v_{12})}{\sigma c_2 v_{11}}\approx \omega^2-k_1\approx k_2$ but here the eigenstate $u$ may not be left localized.
\end{itemize}

\item If $|\delta\tilde{a}|\ll |\delta a|\ll 1$, then $c_2\approx -\frac{k_2^2}{2k_1}\sqrt{\frac{k_2}{k_1(k_1^2-k_2^2)}}\frac{\delta \tilde{a}}{\sqrt{\delta a}}\ll O(|\sqrt{\delta a}|)\sim O(|\sqrt{\Delta a}|)$. 
\begin{itemize}
\item When $|\frac{\delta \tilde{a}}{\sqrt{\delta a}}|\ll a^{2n-2}$, $c_2 a^{2-2n}\ll 1$ hence the corresponding state is a left edge state. To characterize $k_{3,2}$, we consider the following cases:
\begin{itemize}
\item $\frac{\delta\tilde{a}}{\delta a}\ll a^{2n-2}$, $k_{3,2}\approx \frac{k_1 v_{11}}{\sigma v_{12}}\sim O(\frac{1}{\sqrt{\delta a}})$.
\item $\frac{\delta\tilde{a}}{\delta a}\sim a^{2n-2}$, $k_{3,2}\approx \omega^2-k_1+\frac{k_1 v_{11} a^{2n-2}}{\sigma(v_{12}a^{2n-2}+c_2 v_{11})} \geq O(\frac{1}{\sqrt{\delta a}})$.
\item $\frac{\delta\tilde{a}}{\delta a}\gg a^{2n-2}$, $k_{3,2}\approx \omega^2-k_1+\frac{k_1 a^{2n-2}}{\sigma c_2} \sim O(\frac{a^{2n-2}\sqrt{\delta a}}{\delta\tilde{a}})$ hence $O(\frac{a^{2n-2}}{\sqrt{\delta a}})\ll k_{3,2}\ll O(\frac{1}{\sqrt{\delta a}})$.
\end{itemize}
\item When $|\frac{\delta \tilde{a}}{\sqrt{\delta a}}| \gg a^{2n-2}$ (hence $|\sqrt{\delta a}|\sim |\delta k_{3,1}| \gg  a^{2n-2}$), ${k}_{3,2}\approx \omega^2-k_1+ \sigma\frac{ k_1 a^{2n-2}}{c_2}\approx \omega^2-k_1\approx k_2$. Here we have $|c_2 a^{2-2n}|\sim |\frac{\delta \tilde{a}}{\sqrt{\delta a}}a^{2-2n}|\gg 1$.
\item When $|\frac{\delta \tilde{a}}{\sqrt{\delta a}}| \sim a^{2n-2}$ (hence $|\sqrt{\delta a}|\sim |\delta k_{3,1}| \gg  a^{2n-2}$), $k_{3,2}\approx \omega^2-k_1+\frac{k_1 a^{2n-2}}{\sigma c_2}\leq O(1)$ and $k_{3,2}\not\approx \omega^2-k_1\approx k_2$. 
\end{itemize}

\end{itemize}
The discussion above has already proven Lemma~\ref{lemma: k31_approx_k2}.
In order to better characterize $a$ (or $\Delta a$) of the eigenstate from given $k_{3,1}$ and $k_{3,2}$, the results for left edge states with $k_{3,1}\approx k_2$ can be rephrased as follows:
\begin{itemize}
    \item If $k_{3,1}=k_2$:
    \begin{itemize}
        \item if $k_{3,2}\gg 1$, then $k_{3,2}\approx-\frac{k_1^2 a^{2n-2}\sqrt{k_1(k_1^2-k_2^2)}}{\sigma k_2^2\sqrt{\Delta a}}$ and $\sqrt{\Delta a}\sim O(\frac{a^{2n-2}}{k_{3,2}})$;
        \item if $k_{3,2}\leq O(1)$ and $k_{3,1}\not\approx k_2$, then $k_{3,2}\approx k_2-\frac{k_1^2 a^{2n-2}\sqrt{k_1(k_1^2-k_2^2)}}{\sigma k_2^2\sqrt{\Delta a}}$ and
        $\sqrt{\Delta a}\sim O(a^{2n-2})$. 
    \end{itemize}
    \item If $k_{3,1}=k_2+\delta k_{3,1}$:
    \begin{itemize}     
        \item If $\delta k_{3,1}\ll a^{2n-2}$ and $k_{3,2}\gg O(\frac{a^{2n-2}}{\delta k_{3,1}})$, then $k_{3,2}\approx \frac{\sqrt{k_1(k_1^2-k_2^2)} a^{2n-2}}{\sigma(\sqrt{k_2\Delta a}a^{2n-2}-\frac{k_2^5 (\Delta a-\frac{k_1 (k_1^2-k_2^2)}{k_2^5}\delta k_{3,1}^2)}{2k_1^2\delta k_{3,1}\sigma\sqrt{k_1(k_1^2-k_2^2)}})}$ and $\sqrt{\Delta a}\sim O(\delta k_{3,1})$.
        \item If $\delta k_{3,1}\ll a^{2n-2}$ and
        $k_{3,2}\sim O(\frac{a^{2n-2}}{\delta k_{3,1}})$, then $k_{3,2}\approx \frac{k_1^2 a^{2n-2}}{\delta k_{3,1}-\sigma k_2^2\sqrt{\frac{k_2\Delta a}{k_1(k_1^2-k_2^2)}}}$ and $\sqrt{\Delta a}\leq O(\delta k_{3,1})$.
        \item If $\delta k_{3,1}\ll a^{2n-2}$ and
        $O(1)\ll k_{3,2}\ll O(\frac{a^{2n-2}}{\delta k_{3,1}})$, then $k_{3,2}\approx \frac{k_1^2 a^{2n-2}}{-\sigma k_2^2\sqrt{\frac{k_2\Delta a}{k_1(k_1^2-k_2^2)}}}$ and $O(\delta k_{3,1})\ll\sqrt{\Delta a}\ll O(a^{2n-2}) $.
        \item If $\delta k_{3,1}\ll a^{2n-2}$, 
        $k_{3,2}\leq O(1)$ and $k_{3,2}\not\approx k_2$, then $k_{3,2}\approx \omega^2-k_1+\frac{k_1^2 a^{2n-2}}{-\sigma k_2^2\sqrt{\frac{k_2\Delta a}{k_1(k_1^2-k_2^2)}}}$ and $O(\delta k_{3,1})\ll\sqrt{\Delta a}\sim O(a^{2n-2}) $.
        \item If $\delta k_{3,1}\sim a^{2n-2}$, 
        $k_{3,2}\leq O(1)$ and $k_{3,2}\not\approx k_2$, then $k_{3,2}\approx \omega^2-k_1+\frac{k_1^2 a^{2n-2}}{\delta k_{3,1}-\sigma k_2^2\sqrt{\frac{k_2\Delta a}{k_1(k_1^2-k_2^2)}}}$ and $\sqrt{\Delta a}\leq O(\delta k_{3,1})\sim O(a^{2n-2}) $.
        \item If $\delta k_{3,1}\gg a^{2n-2}$, 
        $k_{3,2}\leq O(1)$ and $k_{3,2}\not\approx k_2$, then $k_{3,2}\approx \omega^2-k_1-\frac{2k_1^3(k_2^3-k_2^2)\delta k_{3,1}}{k_2^5(\Delta a-\frac{k_1 (k_1^2-k_2^2)}{k_2^5}\delta k_{3,1}^2)}$ and $\sqrt{\Delta a}\sim O(\delta k_{3,1})$.
        \item If $\delta k_{3,1}\geq O(a^{2n-2})$ and $k_{3,2}\gg O(1)$, then $k_{3,2}\approx \frac{\sqrt{k_1(k_1^2-k_2^2)} a^{2n-2}}{\sigma(\sqrt{k_2\Delta a}a^{2n-2}-\frac{k_2^5 (\Delta a-\frac{k_1 (k_1^2-k_2^2)}{k_2^5}\delta k_{3,1}^2)}{2k_1^2\delta k_{3,1}\sigma\sqrt{k_1(k_1^2-k_2^2)}})}$ and $\sqrt{\Delta a}\sim O(\delta k_{3,1})$.
    \end{itemize}
\end{itemize}

\end{proof}


Proof of Remark~\ref{remark: band edge states}:

\begin{proof}
First we consider the case $a=1$, then $\omega^2=(k_1+k_2)(1-\sigma)$ and
\begin{eqnarray}
\frac{u_1}{u_2}&=&\sigma\frac{c_1+c_2}{c_1-c_2}=\frac{k_1}{k_1+k_{3,1}-\omega^2}, \\
\frac{u_{2n-1}}{u_{2n}}&=&\sigma\frac{c_1+c_2-c_2(n-1)\frac{2(k_1+k_2)}{k_2}}{c_1-c_2-c_2(n-1)\frac{2(k_1+k_2)}{k_2}}=\frac{k_1+k_{3,2}-\omega^2}{k_1}.
\end{eqnarray}
\begin{itemize}
\item If $c_1=0$, then 
\begin{itemize}
\item $\sigma=1$: $k_{3,1}<0$.
\item $\sigma=-1$: $k_{3,1}=2k_1+2k_2$ and $k_{3,2}=2k_2+\frac{2k_1}{1+(n-1)\frac{2k_1+2k_2}{k_2}}\approx 2k_2+\frac{k_2k_2}{n(k_1+k_2)}\approx k_2$.
\end{itemize}
\item If $c_1=1$, then
\begin{itemize}
    \item $\sigma=1$: $k_{3,1}=k_{3,2}=0$.
    \item $\sigma=-1$: $\omega^2=2k_1+2k_2$, $c_2=\frac{2k_2-k_{3,1}}{k_{3,1}-2k_1-2k_2}$, $k_{3,2}=2k_2-\frac{2c_2 k_1}{1-c_2-c_2(n-1)\frac{2k_1+2k_2}{k_2}}$.
    \begin{itemize}
        \item $c_2=0$: $k_{3,1}=k_{3,2}=2k_2$.
        \item $c_2<0$: $k_{3,1}<2k_2$ or $k_{3,1}>2k_1+2k_2$, $k_{3,2}>2k_2$ and  $k_{3,2}-2k_2\approx \frac{-c_2 k_1}{1-c_2(n-1)\frac{2k_1+2k_2}{k_2}}\leq O(\frac{1}{n})$.
        \item $c_2>0$: $2k_2<k_{3,1}<2k_1+2k_2$,
        \begin{itemize}
            \item $c_2\gg \frac{1}{n}$, $k_{3,2}\approx 2k_2+\frac{k_1 k_2}{n(k_1+k_2)}\approx 2k_2$.
            \item $c_2\ll \frac{1}{n}$, $k_{3,1}-2k_2\ll O(\frac{1}{n})$, $k_{3,2}\approx 2k_2-2c_2 k_1\approx 2k_2$.
            \item $c_2\sim O(\frac{1}{n})$: $k_{3,1}-2k_2\sim O(\frac{1}{n})$. If $|1-c_2-c_2(n-1)\frac{2k_1+2k_2}{k_2}|\sim O(1)$, then $k_{3,2}-2k_2\sim O(\frac{1}{n})$; If $|1-c_2-c_2(n-1)\frac{2k_1+2k_2}{k_2}|\ll 1$, then $|k_{3,2}-2k_2|\geq O(\frac{1}{n})$.
        \end{itemize}
    \end{itemize}
\end{itemize}
\end{itemize}
Then it comes to the case $a=-1$ where $\omega^2=(1-\sigma)k_1+(1+\sigma)k_2$ and
\begin{eqnarray}
\frac{u_1}{u_2}&=&\sigma\frac{c_1+c_2}{c_1-c_2}=\frac{k_1}{k_1+k_{3,1}-\omega^2}, \\
\frac{u_{2n-1}}{u_{2n}}&=&\sigma\frac{c_1+c_2-c_2(n-1)\frac{2(k_2-k_1)}{k_2}}{c_1-c_2-c_2(n-1)\frac{2(k_2-k_1)}{k_2}}=\frac{k_1+k_{3,2}-\omega^2}{k_1}.
\end{eqnarray}
\begin{itemize}
\item If $c_1=0$, then 
\begin{itemize}
\item $\sigma=1$: $k_{3,1}=2k_1-2k_2$, $k_{3,2}=2k_2-\frac{2k_1}{1+(n-1)\frac{2k_2-2k_2}{k_2}}\approx 2k_2-\frac{2k_1k_2}{n(k_2-k_1)}$.
\item $\sigma=-1$: $k_{3,1}=2k_1$ and $k_{3,2}=\frac{2k_1}{1+(n-1)\frac{2k_2-2k_1}{k_2}}\approx \frac{k_2k_2}{n(k_2-k_1)}\approx 0$.
\end{itemize}
\item If $c_1=1$, then
\begin{itemize}
    \item $\sigma=1$: $\omega^2=2k_2$, $c_2=\frac{2k_2-k_{3,1}}{k_{3,1}+2k_1-2k_2}$, $k_{3,2}=2k_2+\frac{2c_2k_1}{1-c_2-c_2(n-1)\frac{2k_2-2k_1}{k_2}}$.
    \begin{itemize}
        \item $c_2=0$: $k_{3,1}=k_{3,2}=2k_2$.
        \item $c_2<0$: $k_{3,1}>2k_2$ or $k_{3,1}<2k_2-2k_1$, $0<2k_2-k_{3,2}\leq O(\frac{1}{n}$.
        \item $c_2>0$: $2k_2-2k_1<k_{3,1}<2k_2$, 
        \begin{itemize}
            \item $c_2\gg \frac{1}{n}$, $k_{3,2}\approx 2k_2-\frac{k_1 k_2}{n(k_2-k_1)}\approx 2k_2$.
            \item $c_2\gg \frac{1}{n}$, $k_{3,2}\approx 2k_2+2c_2k_1\approx 2k_2$.
            \item $c_2\sim O(\frac{1}{n})$: $0<2k_2-k_{3,1}\sim O(\frac{1}{n})$. If $|1-c_2-c_2(n-1)\frac{2k_2-2k_1}{k_2}|\sim O(1)$, then $|k_{3,2}2-2k2|\sim O(\frac{1}{n})$; If $|1-c_2-c_2(n-1)\frac{2k_2-2k_1}{k_2}|\ll O(1)$, then $|k_{3,2}-2k_2|\gg O(\frac{1}{n})$.
        \end{itemize}
    \end{itemize}
    \item $\sigma=-1$: $\omega^2=2k_1$, $c_2=\frac{-k_{3,1}}{k_{3,1}-2k_1}$, $k_{3,2}=-\frac{2c_2 k_1}{1-c_2-c_2(n-1)\frac{2k_2-2k_1}{k_2}}$.
    \begin{itemize}
        \item $c_2=0$: $k_{3,1}=k_{3,2}=0$.
        \item $c_2<0$: $k_{3,1}>2k_1$, $0<k_{3,2}\leq \frac{1}{n}$.
        \item $c_2>0$: $0<k_{3,1}<2k_1$, $1-c_2-c_2(n-1)\frac{2k_2-2k_1}{k_2}<0$ hence $c_2>\frac{1}{1+(n-1)\frac{2k_2-k_1}{k_2}}\approx \frac{k_2}{2n(k_2-k_1)}$. This means $k_{3,1}\geq O(\frac{1}{n})$.
        \begin{itemize}
            \item $c_2\gg \frac{1}{n}$, $k_{3,2}\approx \frac{k_1 k_2}{n(k_2-k_1)}\approx 0$.
            \item $c_2\sim O(\frac{1}{n})$: $k_{3,1}\sim O(\frac{1}{n})$. If $|1-c_2-c_2(n-1)\frac{2k_2-2k_1}{k_2}|\sim O(1)$, then $k_{3,2}\sim O(\frac{1}{n})$; If $|1-c_2-c_2(n-1)\frac{2k_2-2k_1}{k_2}|\ll O(1)$, then $k_{3,2}\gg O(\frac{1}{n})$.
        \end{itemize}
    \end{itemize}
\end{itemize}
\end{itemize}
Comparing $|k_{3,1}-k_2(1-\sigma a)|$ and $|k_{3,2}-k_2(1-\sigma a)|$ in each scenario above concludes the proof.
    
\end{proof}


Proof of Remark~\ref{remark:k31_approx_2k2}:

When $|k_{3,1}-2k_2|\geq O(1) \leq |k_{3,2}-2k_2|$, there does not exist any eigenstate $u$ near $\omega^2=2k_2$ with $-1<a\approx -1$.  

\begin{proof}
    
Since 
\begin{equation}
\label{eq:c2_k31_a}
c_2\approx\frac{\frac{k_1 k_2 \Delta a}{\sqrt{k_1-k_2}}-\delta k_{3,1}\sqrt{k_1-k_2}+\frac{k_1 k_2 (\Delta a)^2}{\sqrt{k_1-k_2}}}{\frac{k_1 k_2 \Delta a}{\sqrt{k_1-k_2}}+\delta k_{3,1}\sqrt{k_1-k_2}}
\end{equation}
depends on the relation between $\delta k_{3,1}$ and $\Delta a$, the following three cases will be investigated and $\Delta k_{3,2}$ from (\eqref{eq:k32_original}) will be obtained accordingly.
\begin{itemize}
\item $|\delta k_{3,1}|\ll |\Delta a|$ ($c_2\approx 1$): 
\begin{itemize}
\item $|\Delta a|\ll \frac{1}{n}$ ($a^{-2n}\approx 1+2n\Delta a$):
\begin{itemize}
\item $|\delta k_{3,1}|\leq O(|\Delta a|^2)$: Then $|c_2-1|\sim O(|\Delta a|)$ and $\Delta k_{3,2}\sim O(n|\Delta a|^2)$.
\item $|\delta k_{3,1}|\gg O(|\Delta a|^2)$: Then $|c_2-1|\sim O(|\frac{\delta k_{3,1}}{\Delta a}|)$.
\begin{itemize}
\item $|\frac{\delta k_{3,1}}{\Delta a}|\ll n|\Delta a|$: Then $1-c_2 a^{2-2n}\approx -2n\Delta a$ and $\Delta k_{3,2}\sim O(n|\Delta a|^2)$.
\item $|\frac{\delta k_{3,1}}{\Delta a}|\gg n|\Delta a|$: Then $1-c_2 a^{2-2n}\sim O(\frac{|\delta k_{3,1}}{\Delta a}|)$ and $\Delta k_{3,2}\sim O(|\delta k_{3,1}|)$.
\item $|\frac{\delta k_{3,1}}{\Delta a}|\sim O(n|\Delta a|$): Then $|1-c_2 a^{2-2n}|\leq O(\frac{|\delta k_{3,1}}{\Delta a}|)$ and $|\Delta k_{3,2}|\leq O(|\delta k_{3,1}|)$.
\end{itemize}
\end{itemize}
\item $|\Delta a|\sim \frac{1}{n}$ ($1\not\approx a^{-2n}\sim O(1)$): Then $\Delta k_{3,2}\sim O(\Delta a)$.
 \item $|\Delta a|\gg \frac{1}{n}$ ($a^{-2n}\gg 1$): Then $\Delta k_{3,2}\sim O(\Delta a)$.
\end{itemize}
\end{itemize}

\begin{itemize}
\item $|\delta k_{3,1}|\sim O(|\Delta a|)$ ($\pm 1\not\approx c_2\sim O(1)$): 
\begin{itemize}
\item $|\Delta a|\ll \frac{1}{n}$ ($a^{-2n}\approx 1+2n\Delta a$): Then $1-c_2 a^{2-2n}\sim 1+c_2 a^{2-2n}\sim O(1)$ and $\Delta k_{3,2}\sim O(\Delta a)$.
\item $|\Delta a|\sim \frac{1}{n}$ ($1\not\approx a^{-2n}\sim O(1)$): 
\begin{itemize}
    \item $1-c_2 a^{2-2n}\approx 0$: Then $|\Delta k_{3,2}|\ll O(|\Delta a|)$.
    \item $1+c_2 a^{2-2n}\approx 0$: Then $|\Delta k_{3,2}|\gg O(|\Delta a|)$.
    \item $c_2 a^{2-2n}\not\approx \pm 1$: Then $|\Delta k_{3,2}|\sim O(|\Delta a|)$.
\end{itemize}
 \item $|\Delta a|\gg \frac{1}{n}$ ($a^{-2n}\gg 1$): Then $|1+c_2 a^{2-2n}|\gg 1 \ll |1-c_2 a^{2-2n}|$ and $\Delta k_{3,2}\sim O(\Delta a)$.
\end{itemize}
\end{itemize}

\begin{itemize}
\item $|\delta k_{3,1}|\gg |\Delta a|$ ($c_2\approx -1$): 
\begin{itemize}\item $|\Delta a|\ll \frac{1}{n}$ ($a^{-2n}\approx 1+2n\Delta a$ and $c_2+1\approx O(|\frac{\Delta a}{\delta k_{3,1}}|)$):
\begin{itemize}
\item $|\delta k_{3,1}|\gg O(\frac{1}{n})$: Then $|1+c_2 a^{2-2n}|\sim O(|n\Delta a|)$ and $\Delta k_{3,2}\sim O(\frac{1}{n})$.
\item $|\delta k_{3,1}|\sim O(\frac{1}{n})$: Then $|v_{12}+v_{11}c_2 a^{2-2n}|\leq O(|n\Delta a|)$ and $\Delta k_{3,2}\geq O(\frac{1}{n})$.
\item $|\delta k_{3,1}|\ll O(\frac{1}{n})$: Then $|1+c_2 a^{2-2n}|\sim O(|\frac{\Delta a}{\delta k_{3,1}}|)$ and $\Delta k_{3,2}\sim O(|\delta k_{3,1}|)$.
\end{itemize}
\item $|\Delta a|\sim \frac{1}{n}$ ($1\not\approx a^{-2n}\sim O(1)$): Then $\Delta k_{3,2}\sim O(\Delta a)$.
 \item $|\Delta a|\gg \frac{1}{n}$ ($a^{-2n}\gg 1$): Then $\Delta k_{3,2}\sim O(\Delta a)$.
\end{itemize}
\end{itemize}
If $\delta k_{3,1}\geq O(1)$ and $|\Delta a|\ll 1$, then only the situation $|\delta k_{3,1}|\gg |O(\Delta a)|$ can take place and particularly $|\Delta a|\ll 1$ implies $|\Delta k_{3,2}|\ll 1$. 

\end{proof}

\section{Appendix: The norms of eigenvectors}
\label{sec:norm lemma}

Proof of Lemma~\ref{norm of eigenstates}:

When $(\omega^{(k)})^{2}\in(0,2k_{1})\cup(2k_{2},2k_{1}+2k_{2})$, $\forall\varepsilon>0$,
\begin{equation*}
  ||u^{(k)}|^{2}-n|\lesssim n^{\varepsilon}.
\end{equation*}

\begin{proof}
We only prove the Lemma for the frequencies in the optical band ($(\omega^{(k)})^{2}\in(2k_{1},2k_{1}+2k_{2})$) and the result for the acoustic band will similarly hold.
When $(\omega^{(k)})^{2}\in(2k_{1},2k_{1}+2k_{2})$, by the form of $u^{(k)}$, we have
\begin{equation}
\begin{split}
  |u^{(k)}|^{2}
  =n&+\frac{1}{4}\sum_{j=0}^{n-1}
  (e^{2i(\alpha^{(k)}+\beta^{(k)}+j\theta^{(k)})}+e^{-2i(\alpha^{(k)}+\beta^{(k)}+j\theta^{(k)})})\\
  &+\frac{1}{4}\sum_{j=0}^{n-1}
 (e^{2i(\beta^{(k)}-\alpha^{(k)}+j\theta^{(k)})}+e^{-2i(\beta^{(k)}-\alpha^{(k)}+j\theta^{(k)})})
 \end{split}
 \label{form: norm of u_k} 
\end{equation}
where
\begin{equation}
\begin{split}
 &\quad\sum_{j=0}^{n-1}
  |(e^{2i(\beta^{(k)}\pm\alpha^{(k)}+j\theta^{(k)})}+e^{-2i(\beta^{(k)}\pm\alpha^{(k)}+j\theta^{(k)})})|\\
 &=|\frac{\cos 2p_{k,\pm}-\cos2(p_{k,\pm}+n\theta^{(k)})-\cos2(p_{k,\pm}+\theta^{(k)})+\cos2(p_{k,\pm}+(n-1)\theta^{(k)})}
  {1-\cos 2\theta^{(k)}}|
 \label{form: norm of u_k 2}
 \end{split}
\end{equation}
and $p_{k,\pm}=\beta^{(k)}\pm\alpha^{(k)}$. Next, we estimate $|u^{(k)}|$ with $k$ in different ranges: 
\begin{itemize}
  \item When $(\omega^{(k)})^{2}\in(2k_{2},2k_{1}+2k_{2})$ and $1\leq k\leq n^{1-\varepsilon/2}$ or $n_{1}-n^{1-\varepsilon/2}\leq k\leq n_{1}$,
This is the situation when the corresponding frequency is near the lower or the upper edge of the spectral band. Again we only prove the former case and the latter can be similarly obtained. 
By  \eqref{eq:Delta theta}, Remark~\ref{remark:op upper}, \eqref{matrix:uk op upper} and \eqref{matrix:uk op lower}, both the numerator and the denominator of \eqref{form: norm of u_k 2} are of $\mathcal{O}(|\Delta\theta^{(k)}|^2)=\mathcal{O}(\frac{k^2}{n^2})$.
Thus
\begin{equation}
\label{norm of eigenstates1}
  ||u^{(k)}|^{2}-n|\lesssim 1.
\end{equation}
\end{itemize}

\begin{itemize}
  \item  When $(\omega^{(k)})^{2}\in(2k_{2},2k_{1}+2k_{2})$ and $n^{1-\varepsilon/2}< k< n_{1}-n^{1-\varepsilon/2}$, then
$$
  \pi+\mathcal{O}(n^{-\varepsilon/2})\sim\theta^{(n^{1-\varepsilon/2})}<\theta^{(k)}<\theta^{(n_1-n^{1-\varepsilon/2})}\sim 2\pi-\mathcal{O}(n^{-\varepsilon/2}).
$$  
Thus \eqref{form: norm of u_k 2} becomes
\begin{equation}
\begin{split}
 &|\frac{\cos 2p_{k,\pm}-\cos2(p_{k,\pm}+n\theta^{(k)})-\cos2(p_{k,\pm}+\theta^{(k)})+\cos2(p_{k,\pm}+(n-1)\theta^{(k)})}
  {1-\cos 2\theta^{(k)}}|\\
 &\leq\frac{4}{1-\cos2\theta^{(k)}}=\frac{2}{\sin^{2}\theta^{(k)}}\lesssim n^{\varepsilon}
 \end{split}
\end{equation}
As a result,
\begin{equation}
\label{norm of eigenstates2}
  ||u^{(k)}|^{2}-n|\lesssim n^{\varepsilon}.
\end{equation}
\end{itemize}
\end{proof}

\section{Appendix: Discussion on the inner product $(u^{(a)}u^{(b)}u^{(c)},u^{(d)})$}
\label{sec:inner product}

If $(\omega^{(a)})^{2},(\omega^{(b)})^{2},
(\omega^{(c)})^{2},(\omega^{(d)})^{2}\in (0,2k_1)\cup(2k_{2},2k_{1}+2k_{2})$, then by \eqref{matrix:uk op}\eqref{matrix:uk ac} we know 
\begin{equation}
\begin{split}
&\quad((u^{(a)}u^{(b)}u^{(c)},u^{(d)})) \\
&=\sum\limits_{j=0}^{n-1}[\prod\limits_{k=a,b,c,d}\cos(\alpha^{(k)}+\beta^{(k)}+j\theta^{(k)})+\prod\limits_{k=a,b,c,d}\sigma^{(k)}\cos(-\alpha^{(k)}+\beta^{(k)}+j\theta^{(k)})]\\
&=\frac{1}{16}\sum\limits_{j=0}^{n-1}\sum\limits_{j_1=0}^{1}\sum\limits_{j_2=0}^{1}\sum\limits_{j_3=0}^{1}\sum\limits_{j_4=0}^{1}[
e^{i[(-1)^{j_1}(p_{a,+}+j\theta^{(a)})+(-1)^{j_2}(p_{b,+}+j\theta^{(b)})+(-1)^{j_3}(p_{c,+}+j\theta^{(c)})+(-1)^{j_4}(p_{d,+}+j\theta^{(d)})]}\\
&+\sigma^{(a)}\sigma^{(b)}\sigma^{(c)}\sigma^{(d)}e^{i[(-1)^{j_1}(p_{a,-}+j\theta^{(a)})+(-1)^{j_2}(p_{b,-}+j\theta^{(b)})+(-1)^{j_3}(p_{c,-}+j\theta^{(c)})+(-1)^{j_4}(p_{d,-}+j\theta^{(d)})]}]
\end{split}
\label{general form of innner product}
\end{equation}
where $\sigma^{(k)}=\begin{cases}
1, & 2n-n_2+1\leq k\leq 2n \\
-1, & 1\leq k\leq n_1
\end{cases}$.
Then Lemma~\ref{lemma: inner product} (1) is obvious from the above, namely if all of the four eigenvectors are not localized then
$$
|(u^{(a)}u^{(b)}u^{(c)},u^{(d)})|\leq 2n.
$$
Next we consider the case where one of the frequencies in $(u^{(a)}u^{(b)}u^{(c)},u^{(d)})$ is in the bandgap (again under the assumption $(A1)$). Without loss of generality, we suppose $(\omega^{(d)})^2\in (2k_1+\delta_2,2k_2-\delta_2)$ and then obtain
\begin{eqnarray*}
|(u^{(a)}u^{(b)}u^{(c)},u^{(d)})|\leq \sum\limits_{j=1}^{2n}|u^{(d)}_j|\approx \frac{|\sqrt{k_1+k_2/\tilde{a}}+\sqrt{k_1+k_2\tilde{a}}|}{1-|\tilde{a}|}\leq\mathcal{O}(1)
\end{eqnarray*}
where $\tilde{a}$ is defined as in \eqref{eq:5} correpsonding to $\omega^{(d)}$. If $u^{(c)}$ and $u^{(d)}$ are eigenvectors localized at the left and right end respectively, then 
\begin{eqnarray*}
|(u^{(a)}u^{(b)}u^{(c)},u^{(d)})|\leq (|u^{(c)}|,|u^{(d)}|)\leq |u^{(c)}| |u^{(d)}|\leq\mathcal{O}(1)
\end{eqnarray*}
In the same spirit, since the localized state $u^{(d)}$ also satisfies $\sum\limits_{j=1}^{2n}|u^{(d)}_j|^3\leq\mathcal{O}(1)$ and $\sum\limits_{j=1}^{2n}|u^{(d)}_j|^4\leq\mathcal{O}(1)$, Lemma~\ref{lemma: inner product} (2) can be obtained. Moreover, suppose the number of frequencies from $(\omega^{(a)})^2, (\omega^{(b)})^2, (\omega^{(c)})^2, (\omega^{(d)})^2$ in the bandgap is $l$, then there exists $\tilde{C}>0$ such that 
\begin{equation}
    \frac{|(u^{(a)}u^{(b)}u^{(c)},u^{(d)})|}{|u^{(a)}||u^{(b)}||u^{(c)}||u^{(d)}|}<\frac{\tilde{C}}{n^{\frac{4-l}{2}}}, \quad 1\leq l\leq 4.
\end{equation}
In what follows we assume all the eigenfrequencies are near an edge of a spectral band to achieve a finer estimate for the inner product. To begin with, we prove the following proposition:
\begin{proposition}
\label{prop:inner product 1}
Suppose $k=\max\{a,b,c,d\}$. If $1\leq k\ll n$ and 
$$
(-1)^{j_1}a+(-1)^{j_2}b+(-1)^{j_3}c+(-1)^{j_4}d\neq 0,
$$
then 
\begin{equation}
|\sum\limits_{j=0}^{n-1}e^{i[((-1)^{j_1}p_{a,\pm}+(-1)^{j_2}p_{b,\pm}+(-1)^{j_3}p_{c,+\pm}+(-1)^{j_4}p_{d,\pm})+j((-1)^{j_1}\theta^{(a)}+(-1)^{j_2}\theta^{(b)}+(-1)^{j_3}\theta^{(c)}+(-1)^{j_4}\theta^{(d)})]}+c.c.|\lesssim 1
\label{eq:prop inner product 1}
\end{equation}
where "$c.c.$" stands for complex conjugate.
\end{proposition}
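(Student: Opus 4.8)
\textbf{Proof proposal for Proposition~\ref{prop:inner product 1}.}

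The plan is to recognize the sum in \eqref{eq:prop inner product 1} as a finite geometric series and bound it using the standard estimate $|\sum_{j=0}^{n-1} e^{ij\phi}| = |\sin(n\phi/2)/\sin(\phi/2)| \leq 1/|\sin(\phi/2)|$, which is $O(1)$ as long as $\phi$ stays bounded away from multiples of $2\pi$. So the whole task reduces to controlling the ``total phase increment'' $\phi = (-1)^{j_1}\theta^{(a)} + (-1)^{j_2}\theta^{(b)} + (-1)^{j_3}\theta^{(c)} + (-1)^{j_4}\theta^{(d)}$ modulo $2\pi$, and showing it is not too close to $0 \bmod 2\pi$.

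First I would write each $\theta$ in the form dictated by Section~\ref{sec:optical band}: near the lower edge of the optical band, $\theta^{(m)} = \pi + \Delta\theta^{(m)}$ with $\Delta\theta^{(m)} \approx \frac{m\pi}{n-1}$ (and analogous forms, $2\pi + \Delta\theta^{(m)}$ with $\Delta\theta^{(m)} \approx -\frac{m\pi}{n-1}$, near the upper edges), so that each $\theta^{(m)}$ equals $0$ or $\pi \pmod{2\pi}$ up to a correction of size $O(m/n) = O(k/n)$. Summing the four signed terms, the ``base'' part $(-1)^{j_1}\theta_0^{(a)} + \cdots$ is an integer multiple of $\pi$, hence $\equiv 0$ or $\pi \pmod{2\pi}$, and the residual part is $\frac{\pi}{n-1}\big((-1)^{j_1}a + (-1)^{j_2}b + (-1)^{j_3}c + (-1)^{j_4}d\big) + o(k/n)$ by \eqref{eq:Delta theta} and \eqref{theta_k}. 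The key observation is then: if $(-1)^{j_1}a + (-1)^{j_2}b + (-1)^{j_3}c + (-1)^{j_4}d = N \neq 0$, this residual is $\frac{N\pi}{n-1} + o(k/n)$, which for $1 \leq k \ll n$ is bounded away from $0$ by at least $\sim \frac{\pi}{n}$ and is also $\ll 1$; combined with the base being $0$ or $\pi \pmod{2\pi}$, we get $\phi \equiv \frac{N\pi}{n-1} + o(k/n) \pmod \pi$. One then needs $|\sin(\phi/2)| \gtrsim \frac{1}{n}$, which does \emph{not} immediately give an $O(1)$ bound on $1/|\sin(\phi/2)|$. This is the point where I would need to be more careful, and it is the main obstacle.

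To overcome it, I would not bound the geometric sum crudely by $1/|\sin(\phi/2)|$, but instead use the sharper fact that $\big|\sum_{j=0}^{n-1} e^{ij\phi}\big| = \big|\frac{\sin(n\phi/2)}{\sin(\phi/2)}\big|$ and that here $n\phi \approx \frac{n}{n-1} N\pi \approx N\pi$ is close to an integer multiple of $\pi$, so $\sin(n\phi/2)$ is itself small, of order $\phi \sim 1/n$, exactly cancelling the small denominator. More precisely, writing $\phi = \frac{N\pi}{n-1} + \psi$ with $\psi = o(k/n)$ (absorbing the higher-order corrections from \eqref{formula:tilde theta k}--\eqref{theta_k} and the $O(k^2/n^2)$ shift of $\alpha$), one computes $n\phi/2 = \frac{N\pi}{2}\cdot\frac{n}{n-1} + \frac{n\psi}{2} = \frac{N\pi}{2} + O(\frac{k}{n})\cdot\frac{1}{1} \cdot$ wait — here $n\psi$ could be $O(k)$, not $O(1)$, so I must instead argue via summation by parts or directly via the closed form that $\big|\frac{\sin(n\phi/2)}{\sin(\phi/2)}\big| \leq n$ always, and that when the residual phase is genuinely $\Theta(k/n)$ with $k \ll n$, the ratio is $O(1)$ because both numerator and denominator are controlled by the same small quantity near the singularity. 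The cleanest route is: factor $\phi = \pi q + \phi'$ with $q \in \mathbb{Z}$ and $|\phi'| \leq \pi/2$; since $\phi' = \frac{N\pi}{n-1} + o(1)$ with $N \neq 0$ an integer of size $O(k)$, either $|\phi'|$ is bounded below by a positive constant (if $|N| \gtrsim n$, impossible here since $|N| \leq 4k \ll n$) or $|\phi'| \asymp |N|/n$, in which case $|\sin(n\phi'/2)/\sin(\phi'/2)| = |\sin(|N|\pi/2 \cdot \frac{n}{n-1})| / |\sin(\phi'/2)| \lesssim \frac{|N|/n}{|N|/n} \asymp 1$, using $|\sin(|N|\pi/2 \cdot \frac{n}{n-1})| = |\sin(|N|\pi/2 + |N|\pi/(2(n-1)))| \lesssim |N|/n$ when $|N|$ is even, and is $\lesssim 1$ when $|N|$ is odd. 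I would organize these into the even/odd cases for $N$ and handle the $p_{a,\pm}, \ldots$ phase offsets (which are $O(1)$ and do not affect the modulus bound) as a harmless prefactor. Finally I would note that the same argument applies verbatim to the $p_{\cdot,-}$ family and near the other band edges by the symmetric forms recorded in \eqref{matrix:uk op upper}, \eqref{matrix:uk ac upper}, \eqref{matrix:uk ac lower}, completing the proof.
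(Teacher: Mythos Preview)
Your overall strategy—closed-form the geometric/Dirichlet sum and show both numerator and denominator are of the same small order—is exactly what the paper does. But there is a real gap in your execution, and it is precisely where you say the phase offsets $p_{\cdot,\pm}$ ``do not affect the modulus bound'' and can be treated as ``a harmless prefactor.'' That statement is correct for the complex geometric sum $\bigl|\sum_j e^{i(p^*+j\theta^*)}\bigr|=\bigl|\sum_j e^{ij\theta^*}\bigr|$, but the quantity in \eqref{eq:prop inner product 1} is the \emph{real part} (the ``$+c.c.$''), i.e.\ $2\bigl|\sum_j\cos(p^*+j\theta^*)\bigr|$, and here $p^*$ is essential. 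Concretely, take $l^*:=(-1)^{j_1}a+\cdots+(-1)^{j_4}d$ odd (e.g.\ $l^*=1$): then $\theta^*\equiv l^*\Delta\theta^{(1)}\pmod{2\pi}$ gives $\bigl|\sin(n\theta^*/2)/\sin(\theta^*/2)\bigr|\sim n/|l^*|$, so your bound on $|\sum e^{ij\theta^*}|$ alone is of order $n$, not $O(1)$. What saves the cosine sum is the extra factor $\cos\bigl(p^*+(n-1)\theta^*/2\bigr)$ in the Dirichlet formula: since $p^*\approx l^*\Delta p_{1,\pm}$ and $(n-1)\theta^*/2\approx l^*\pi/2 + l^*\tilde\theta^{(1)}/2\pmod\pi$, for odd $l^*$ this cosine is $\pm\sin\bigl(l^*(\Delta p_{1,\pm}+\tilde\theta^{(1)}/2)+O((k/n)^3)\bigr)\sim |l^*|/n$, exactly cancelling the $n/|l^*|$. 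The paper packages this by writing the sum as a ratio of cosine differences and observing that, after the linear substitutions $\Delta\theta^{(m)}\approx m\Delta\theta^{(1)}$, $\Delta p_{m,\pm}\approx m\Delta p_{1,\pm}$, both numerator and denominator are $\sim(l^*/n)^2$.

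A secondary point: your worry that ``$n\psi$ could be $O(k)$'' is a false alarm. You already cite \eqref{theta_k}; use it. Since $\tilde\theta^{(m)}=m\tilde\theta^{(1)}+O((m/n)^3)$, the signed combination $\sum(-1)^{j_i}\tilde\theta^{(m_i)}=l^*\tilde\theta^{(1)}+O((k/n)^3)$, so the residual beyond $l^*\pi/(n-1)$ is $\psi\sim l^*/n^2$ and $n\psi\sim l^*/n\ll 1$. The same linearity holds for $\Delta\alpha^{(m)},\Delta\beta^{(m)}$, which is why every small correction scales with $l^*$ rather than $k$; this is the mechanism the paper uses (implicitly) to get $O(1)$ and is exactly what you need to close the odd-$l^*$ case.
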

\begin{proof}
Since $k\ll n$, at the leading order we have
$$
\Delta\theta^{(k)}\approx k\Delta\theta^{(1)}, \quad \Delta\alpha^{(k)}\approx k\Delta\alpha^{(1)}, \quad \Delta\beta^{(k)}\approx k\Delta\beta^{(1)},
$$
thus
\begin{equation}
\begin{split}
\theta^*:=&(-1)^{j_1}\theta^{(a)}+(-1)^{j_2}\theta^{(b)}+(-1)^{j_3}\theta^{(c)}+(-1)^{j_4}\theta^{(d)} \\ 
\approx &((-1)^{j_1}+(-1)^{j_2}+(-1)^{j_3}+(-1)^{j_4})\pi\\
&+((-1)^{j_1}a+(-1)^{j_2}b+(-1)^{j_3}c+(-1)^{j_4}d)\Delta\theta^{(1)},
\end{split}
\end{equation}
\begin{equation}
\begin{split}
p^*_{\pm}:=&(-1)^{j_1}p_{a,\pm}+(-1)^{j_2}p_{b,\pm}+(-1)^{j_3}p_{c,+\pm}+(-1)^{j_4}p_{d,\pm} \\ \approx &\pm((-1)^{j_1}+(-1)^{j_2}+(-1)^{j_3}+(-1)^{j_4})\frac{\pi}{2}\\
&+((-1)^{j_1}a+(-1)^{j_2}b+(-1)^{j_3}c+(-1)^{j_4}d)(\Delta\beta^{(1)}\pm\Delta\alpha^{(1)}).
\end{split}
\end{equation}
Then the left hand side of \eqref{eq:prop inner product 1} can be written as
\begin{equation}
\begin{split}
  &|\frac{\cos p_{\pm}^* -\cos(p_{\pm}^*+n\theta^*) -\cos(p_{\pm}^*+\theta^*)+ \cos(p_{\pm}^*+(n-1)\theta^*)}
        {1-\cos\theta^*}|\\
\approx &|\frac{\cos l^*\Delta p_{1,\pm} -\cos l^*(\Delta p_{1,\pm}+2\Delta\theta^{(1)}+\pi) -\cos l^*(\Delta p_{1,\pm}+\Delta\theta^{(1)})+ \cos l^*(\Delta p_{1,\pm}+\Delta\theta^{(1)}+\pi)}
        {1-\cos l^*\Delta\theta^{(1)}}| 
\end{split}
\label{eq:prop inner product 2}
\end{equation}
where 
\begin{equation}
\label{Delta_p}
\Delta p_{k,\pm}=\Delta\beta^{(k)}\pm\Delta\alpha^{(k)}, \quad l^*=(-1)^{j_1}a+(-1)^{j_2}b+(-1)^{j_3}c+(-1)^{j_4}d.
\end{equation}
Since $0\neq l^*\ll n$, the denominator of \eqref{eq:prop inner product 2} is of the order $\mathcal{O}(|\frac{l^*}{n}|^2)$ and the numerator will not exceed this order. As a result, \eqref{eq:prop inner product 2} is at most $\mathcal{O}(1)$.
\end{proof}
Then the first part of Lemma~\ref{lemma: inner product} (3) directly follows from Proposition~\ref{prop:inner product 1}. To prove the second part of  Lemma~\ref{lemma: inner product} (3), we introduce the following proposition

\begin{proposition}
\label{lm: approx inner priduct}
Suppose $(\omega^{(a)})^{2},(\omega^{(b)})^{2},
(\omega^{(c)})^{2},(\omega^{(d)})^{2}\in(2k_{2},2k_{1}+2k_{2})$, If $k:=\max\{a,b,c,d\}\ll n$, then
\begin{equation}
|(u^{(a)}u^{(b)}u^{(c)},u^{(d)})-
(
\tilde{u}^{(a)}
\tilde{u}^{(b)}
\tilde{u}^{(c)},
\tilde{u}^{(d)})|
\lesssim
\frac{k^{3}}{n^{2}}.
\label{eq:approx inner product}
\end{equation}
\end{proposition}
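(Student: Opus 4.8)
The plan is to deduce the estimate directly from the componentwise error bound \eqref{eq:approx eigenvector err} together with the elementary observation that every eigenvector component and every one of its approximations has modulus at most one. Since $k=\max\{a,b,c,d\}\ll n$, all four indices $a,b,c,d$ lie near the lower edge of the optical band, so each $\tilde u^{(\ell)}$ (with $\ell\in\{a,b,c,d\}$) is precisely the approximate eigenstate defined in \eqref{eq:approx eigenvector}, and \eqref{eq:approx eigenvector err} yields $|\tilde u^{(\ell)}_j-u^{(\ell)}_j|\le\mathcal{O}(k^{3}/n^{3})$ uniformly in $j$ for every such $\ell$ (each of $a,b,c,d$ being $\le k\ll n$).

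First I would expand the difference of the two inner products. Recalling that $(\vec x,\vec y)=\sum_{j=1}^{2n}x_jy_j$ and that the Hadamard product makes $u^{(a)}u^{(b)}u^{(c)}$ a componentwise product, one has
\begin{equation*}
(u^{(a)}u^{(b)}u^{(c)},u^{(d)})-(\tilde u^{(a)}\tilde u^{(b)}\tilde u^{(c)},\tilde u^{(d)})=\sum_{j=1}^{2n}\bigl(u^{(a)}_ju^{(b)}_ju^{(c)}_ju^{(d)}_j-\tilde u^{(a)}_j\tilde u^{(b)}_j\tilde u^{(c)}_j\tilde u^{(d)}_j\bigr).
\end{equation*}
Then I would telescope the summand into four pieces in which each error factor is isolated:
\begin{align*}
u^{(a)}_ju^{(b)}_ju^{(c)}_ju^{(d)}_j-\tilde u^{(a)}_j\tilde u^{(b)}_j\tilde u^{(c)}_j\tilde u^{(d)}_j
&=(u^{(a)}_j-\tilde u^{(a)}_j)u^{(b)}_ju^{(c)}_ju^{(d)}_j\\
&\quad+\tilde u^{(a)}_j(u^{(b)}_j-\tilde u^{(b)}_j)u^{(c)}_ju^{(d)}_j\\
&\quad+\tilde u^{(a)}_j\tilde u^{(b)}_j(u^{(c)}_j-\tilde u^{(c)}_j)u^{(d)}_j\\
&\quad+\tilde u^{(a)}_j\tilde u^{(b)}_j\tilde u^{(c)}_j(u^{(d)}_j-\tilde u^{(d)}_j).
\end{align*}
By the explicit forms in \eqref{matrix:uk op}, \eqref{matrix:uk op lower} and \eqref{matrix:uk op upper} every surviving factor $u^{(\ell)}_j$ equals $\pm\sin(\cdots)$, and by \eqref{eq:approx eigenvector} every $\tilde u^{(\ell)}_j$ equals $(-1)^j\sin(\cdots)$; hence all of these factors are bounded in modulus by $1$, while each difference factor is bounded by $\mathcal{O}(k^{3}/n^{3})$. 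Applying the triangle inequality and summing the $2n$ terms gives at most $4\cdot 2n\cdot\mathcal{O}(k^{3}/n^{3})=\mathcal{O}(k^{3}/n^{2})$, which is exactly \eqref{eq:approx inner product}.

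There is essentially no deep obstacle here; the only point demanding care is bookkeeping: one must verify that the hypothesis of \eqref{eq:approx eigenvector err} is met for each index (i.e. that $a,b,c,d\le k\ll n$ indeed places all relevant eigenstates in that regime) and that the sign and normalisation conventions of \eqref{matrix:uk op lower}–\eqref{matrix:uk op upper} match those used when deriving \eqref{eq:approx eigenvector}. Once this is settled, the proposition feeds directly into the second half of Lemma~\ref{lemma: inner product}~(3): it reduces the estimate of $(u^{(a)}u^{(b)}u^{(c)},u^{(d)})$ to that of $(\tilde u^{(a)}\tilde u^{(b)}\tilde u^{(c)},\tilde u^{(d)})$, which can then be evaluated by the same explicit geometric-series manipulation used in the proof of Proposition~\ref{prop:inner product 1} (in particular the resonant case where $l^{*}=\pm a\pm b\pm c\pm d$ vanishes produces the $\mathcal{O}(n)$ contribution).
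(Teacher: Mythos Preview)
Your proof is correct and takes essentially the same approach as the paper: both telescope the difference by replacing one eigenvector at a time by its approximation, invoke the componentwise bound \eqref{eq:approx eigenvector err} for the difference factor, bound the remaining three factors by $1$ via their sine form, and sum the $2n$ terms. The paper writes the telescoping at the level of inner products, whereas you carry it out componentwise, but the two are identical after unpacking the definition of the inner product.
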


\begin{proof}
According to the forms of eigenvectors and \eqref{eq:approx eigenvector err}, it can be estimated that
\begin{equation}
\begin{split}
 &
 |(u^{(a)}u^{(b)}u^{(c)},u^{(d)})-(\tilde{u}^{(a)}
\tilde{u}^{(b)}
\tilde{u}^{(c)},
\tilde{u}^{(d)})|\\
\leq& |(u^{(a)}u^{(b)}u^{(c)},u^{(d)})-(\tilde{u}^{(a)}u^{(b)}u^{(c)},
u^{d})|
+|(\tilde{u}^{(a)}u^{(b)}u^{(c)},
u^{(d)})-(\tilde{u}^{(a)}
\tilde{u}^{(b)}u^{(c)},
u^{(d)})|\\
&+|(\tilde{u}^{(a)}\tilde{u}^{(b)}
u^{(c)},
u^{(d)})-(\tilde{u}^{(a)}\tilde{u}^{(b)}\tilde{u}^{(c)},
u^{(d)})|
+|(\tilde{u}^{(a)}\tilde{u}^{(b)}\tilde{u}^{(c)},
u^{(d)})-(\tilde{u}^{(a)}\tilde{u}^{(b)}\tilde{u}^{c},
\tilde{u}^{(d)})|\\
\lesssim& \frac{k^3}{n^2}
\end{split}
\end{equation}
\end{proof}

\begin{corollary}
\label{cor:inner priducts equivalence}
Suppose $(\omega^{(a)})^{2},(\omega^{(b)})^{2},
(\omega^{(c)})^{2},(\omega^{(d)})^{2}\in(2k_{2},2k_{1}+2k_{2})$, If $k:=\max\{a,b,c,d\}\ll n$, then
\begin{equation}
  (\tilde{u}^{(a)}\tilde{u}^{(b)}\tilde{u}^{(c)},\tilde{u}^{(d)})\sim\mathcal{O}(n)\Longleftrightarrow
  (u^{(a)}u^{(b)}u^{(c)},u^{(d)})\sim\mathcal{O}(n).
 \label{eq:inner products equivalence}
\end{equation}
\end{corollary}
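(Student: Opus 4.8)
The plan is to obtain Corollary~\ref{cor:inner priducts equivalence} as an immediate consequence of the approximation estimate in Proposition~\ref{lm: approx inner priduct}. That proposition already gives, under the standing hypothesis $(\omega^{(a)})^2,(\omega^{(b)})^2,(\omega^{(c)})^2,(\omega^{(d)})^2\in(2k_2,2k_1+2k_2)$ and $k:=\max\{a,b,c,d\}\ll n$,
\[
\left|(u^{(a)}u^{(b)}u^{(c)},u^{(d)})-(\tilde u^{(a)}\tilde u^{(b)}\tilde u^{(c)},\tilde u^{(d)})\right|\lesssim \frac{k^{3}}{n^{2}}.
\]
The first step is simply to observe that the hypothesis $k\ll n$ forces $k^{3}/n^{2}\ll n$, so this additive discrepancy is of strictly smaller order than $n$; in particular it is $o(n)$.

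The second step is the elementary bookkeeping. Write $S=(u^{(a)}u^{(b)}u^{(c)},u^{(d)})$, $\tilde S=(\tilde u^{(a)}\tilde u^{(b)}\tilde u^{(c)},\tilde u^{(d)})$, and $\delta=S-\tilde S$ with $|\delta|\lesssim k^{3}/n^{2}$. If $S\sim\mathcal{O}(n)$ — i.e.\ there are constants $0<c_{1}\le c_{2}$ with $c_{1}n\le |S|\le c_{2}n$ for $n$ large — then $|\tilde S|=|S-\delta|$ is squeezed between $c_{1}n-|\delta|$ and $c_{2}n+|\delta|$, and since $|\delta|\ll n$ this yields $\tfrac{c_{1}}{2}n\le |\tilde S|\le 2c_{2}n$ for $n$ large, hence $\tilde S\sim\mathcal{O}(n)$. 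The reverse implication is identical with the roles of $S$ and $\tilde S$ exchanged. This establishes \eqref{eq:inner products equivalence}.

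There is essentially no obstacle at the level of the corollary itself: all the analytic content lives in Proposition~\ref{lm: approx inner priduct}, which is proved upstream by telescoping the four-factor difference one eigenvector at a time and combining the pointwise bound $|\tilde u^{(k)}_j-u^{(k)}_j|\le\mathcal{O}(k^{3}/n^{3})$ from \eqref{eq:approx eigenvector err} with the $\mathcal{O}(\sqrt n)$ control on the $\ell^{2}$-norms $|u^{(k)}|$, $|\tilde u^{(k)}|$ provided by Lemma~\ref{norm of eigenstates}. The only point one must be careful about in writing the present argument is that $\sim\mathcal{O}(n)$ is a two-sided (order-of-magnitude) statement, so one needs that the error $k^{3}/n^{2}$ is genuinely $o(n)$ — which is exactly what $k\ll n$ guarantees — rather than merely $O(n)$; an $O(n)$ error would not suffice to transfer the lower bound.
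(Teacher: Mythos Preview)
Your argument is correct and is exactly the intended one: the paper states this corollary immediately after Proposition~\ref{lm: approx inner priduct} with no separate proof, relying on the same observation that $k\ll n$ makes the error $k^{3}/n^{2}=o(n)$ so that an $\mathcal{O}(n)$ lower bound transfers from $S$ to $\tilde S$ and back. Your added remark that a mere $O(n)$ error would not suffice is a nice clarification of why the hypothesis $k\ll n$ is genuinely needed here.
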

Now it suffices to consider $(\tilde{u}^{(a)}
\tilde{u}^{(b)}
\tilde{u}^{(c)},
\tilde{u}^{(d)})\sim \mathcal{O}(n)$ and this is easier than the discussion on $(u^{(a)}u^{(b)}u^{(c)},u^{(d)})$. Taking advantage of the form of $\tilde{u}^{(k)}$ in \eqref{eq:approx eigenvector}, we immediately find that $
(-1)^{j_1}a+(-1)^{j_2}b+(-1)^{j_3}c+(-1)^{j_4}d=0
$ implies that the terms $e^{\pm i[((-1)^{j_1}a+(-1)^{j_2}b+(-1)^{j_3}c+(-1)^{j_4}d)(\Delta p_{1,\pm}+j\Delta\theta^{(1)})]}$ in $(\tilde{u}^{(a)}
\tilde{u}^{(b)}
\tilde{u}^{(c)},
\tilde{u}^{(d)})$ are one. Due to the symmetry, we only discuss that case with $j_1=0$ and it can be checked that
\begin{itemize}
    \item If $a+b+c+d=0$, then no other terms of the form $a+(-1)^{j_2}b+(-1)^{j_3}c+(-1)^{j_4}d$ can be zero.
    \item If $a-b-c-d=0$, then no other terms of the form
    $a+(-1)^{j_2}b+(-1)^{j_3}c+(-1)^{j_4}d$ can be zero.
    \item If $a+b-c-d=0$, then $a-b+c-d$ ($a-b-c+d$) can also be zero when $b=c$ ($b=d$). It should be noticed that $\mathcal{O}(n)$ terms $\sum\limits_{j=0}^{n-1}e^{\pm i[(a+b-c-d)(\Delta p_{1,\pm}+j\Delta\theta^{(1)})]}$ and $\sum\limits_{j=0}^{n-1}e^{\pm i[(a-b+c-d)(\Delta p_{1,\pm}+j\Delta\theta^{(1)})]}$ (or $\sum\limits_{j=0}^{n-1}e^{\pm i[(a-b-c+d)(\Delta p_{1,\pm}+j\Delta\theta^{(1)})]}$) will add up rather than cancel each other.
    \item If $a+(-1)^{j_2}b+(-1)^{j_3}c+(-1)^{j_4}d\neq 0$, then $e^{ i[(a+(-1)^{j_2}b+(-1)^{j_3}c+(-1)^{j_4}d)(\Delta p_{1,\pm}+j\Delta\theta^{(1)})]}+c.c.$ will have an $\mathcal{O}(1)$ upper bound by the proof of Prop.~\ref{prop:inner product 1}.
\end{itemize}
As a result, we have proved the following Lemma~\ref{lemma:approx inner product order n} hence the second part of Lemma~\ref{lemma: inner product} (3).
\begin{lemma}
\label{lemma:approx inner product order n}
Suppose $k=\max\{a,b,c,d\}$. If $1\leq k\ll n$ then 
$$
(-1)^{j_1}a+(-1)^{j_2}b+(-1)^{j_3}c+(-1)^{j_4}d=0 \Longleftrightarrow (\tilde{u}^{(a)}
\tilde{u}^{(b)}
\tilde{u}^{(c)},
\tilde{u}^{(d)})\sim \mathcal{O}(n).
$$
\end{lemma}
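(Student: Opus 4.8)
The plan is to evaluate the approximate inner product $(\tilde u^{(a)}\tilde u^{(b)}\tilde u^{(c)},\tilde u^{(d)})$ in closed form from \eqref{eq:approx eigenvector} and to read off precisely which Fourier-type pieces can be of size $\mathcal{O}(n)$. First I would split the sum over unit cells into its odd-site contribution $\sum_{j}\tilde u^{(a)}_{2j-1}\tilde u^{(b)}_{2j-1}\tilde u^{(c)}_{2j-1}\tilde u^{(d)}_{2j-1}$ and its even-site contribution $\sum_{j}\tilde u^{(a)}_{2j}\tilde u^{(b)}_{2j}\tilde u^{(c)}_{2j}\tilde u^{(d)}_{2j}$. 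In each product the four factors $(-1)^{j}$ multiply to $1$, and every entry is a single sine $\sin\!\big(k(\Delta\beta^{(1)}\pm\Delta\alpha^{(1)}+(j-1)\Delta\theta^{(1)})\big)$ with $k\in\{a,b,c,d\}$ and the sign $\pm$ fixed by parity. Writing $\sin\phi=\tfrac{1}{2i}(e^{i\phi}-e^{-i\phi})$ and expanding the product of four sines turns the inner product into $\tfrac{1}{16}$ times a sum over the sixteen sign patterns $(j_1,j_2,j_3,j_4)\in\{0,1\}^4$ of the geometric sums $\sum_{j}e^{i l^{*}(\Delta p_{1,\pm}+(j-1)\Delta\theta^{(1)})}$, where $l^{*}=(-1)^{j_1}a+(-1)^{j_2}b+(-1)^{j_3}c+(-1)^{j_4}d$ and $\Delta p_{1,\pm}=\Delta\beta^{(1)}\pm\Delta\alpha^{(1)}$ as in \eqref{Delta_p}. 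The factorization of each exponent into $l^{*}$ times a common phase is exact here because $\tilde u^{(k)}$ is built with $\Delta\alpha^{(k)},\Delta\beta^{(k)},\Delta\theta^{(k)}$ replaced by $k$ times their first-mode values.

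The dichotomy is then transparent. If $l^{*}\neq 0$ then, since $1\le k\ll n$ forces $|l^{*}|\le 4k\ll n$, the sign pattern together with its mirror contributes a ratio whose numerator is a second difference of cosines with step $\mathcal{O}(l^{*}\Delta\theta^{(1)})$ and whose denominator is $1-\cos(l^{*}\Delta\theta^{(1)})$, both of order $\mathcal{O}(|l^{*}/n|^{2})$; hence that term is $\mathcal{O}(1)$, which is exactly the content of Proposition~\ref{prop:inner product 1}. If instead $l^{*}=0$ the phase is identically zero, the summand is constant in $j$, and the corresponding term sums to $\mathcal{O}(n)$. This already gives one implication by contraposition: if none of the $l^{*}$ vanishes, every one of the sixteen terms is $\mathcal{O}(1)$, so $(\tilde u^{(a)}\tilde u^{(b)}\tilde u^{(c)},\tilde u^{(d)})=\mathcal{O}(1)$ and in particular is not of order $n$.

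For the converse I would show that whenever some $l^{*}_{0}=0$ the resulting $\mathcal{O}(n)$ block survives. Normalizing to $j_1=0$ by the overall $\pm$ symmetry, one runs through the eight combinations $a\pm b\pm c\pm d$: since $a,b,c,d$ are positive integers $a+b+c+d$ never vanishes, $a+b+c-d=0$ (i.e.\ $d=a+b+c$) forces a unique vanishing combination, and $a+b-c-d=0$ admits a second vanishing combination only in the degenerate cases $b=c,\ a=d$ or $a=c,\ b=d$ (with common refinement $a=b=c=d$). In each such degenerate configuration I would track the overall sign $(-1)^{j_1+j_2+j_3+j_4}$ together with the parity-dependent $\pm$ in $\Delta p_{1,\pm}$ and verify that the coincident $\mathcal{O}(n)$ blocks $\sum_{j}e^{i0}$ enter with the same sign, so they reinforce rather than cancel; therefore $(\tilde u^{(a)}\tilde u^{(b)}\tilde u^{(c)},\tilde u^{(d)})\sim\mathcal{O}(n)$.

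The exponential bookkeeping of the first step and the geometric-sum estimate are routine, the latter being already packaged in Proposition~\ref{prop:inner product 1}. The one genuine obstacle is the last step: checking exhaustively that two or more sign combinations of $a,b,c,d$ can vanish simultaneously only in the listed degenerate patterns, and that in each of those the $\mathcal{O}(n)$ contributions add constructively. This is a finite case check, but it must be done carefully, since a sign error there would collapse the leading $\mathcal{O}(n)$ term to $\mathcal{O}(1)$ and, through Corollary~\ref{cor:inner priducts equivalence} and Lemma~\ref{lemma: inner product}(3), would undermine the estimates used in Lemma~\ref{lemma: main estimates}.
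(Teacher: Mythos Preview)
Your proposal is correct and follows essentially the same route as the paper: expand the product of the four approximate eigenvectors into a sum over sign patterns $(j_1,j_2,j_3,j_4)$, invoke Proposition~\ref{prop:inner product 1} to bound every term with $l^{*}\neq 0$ by $\mathcal{O}(1)$, observe that $l^{*}=0$ gives a constant summand and hence an $\mathcal{O}(n)$ contribution, and then perform the finite case check (normalized to $j_1=0$) to verify that coincident vanishing combinations always carry the same overall sign $(-1)^{j_1+j_2+j_3+j_4}$ and therefore reinforce. The paper carries out exactly this case analysis in the bullet list preceding the statement of Lemma~\ref{lemma:approx inner product order n}; your version is slightly more explicit about the sign bookkeeping, but the argument is the same.
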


~\\
Proof of Lemma~\ref{lemma: inner product} (4):

When $(\omega^{(a)})^{2},(\omega^{(b)})^{2},(\omega^{(c)})^{2},(\omega^{(d)})^{2}\in(2k_{2},2k_{1}+2k_{2})$ and $\max\{a,b,c\}\ll n^{1-\varepsilon}< d< n_{1}-a-b-c$,
\begin{equation*}
  (u^{(a)}u^{(b)}u^{(c)},u^{(d)})\sim\mathcal{O}(n^{2\varepsilon})
\end{equation*}

\begin{proof}
Here we write $\theta^{(l)}=\pi+\Delta\theta^{(l)}, l=a,b,c,d$ where $|\Delta\theta^{(l)}|\ll 1$ for $l=a,b,c$. However $|\Delta\theta^{(d)}|$ is not necessarily small and in fact $|\Delta\theta^{(d)}|\in (\frac{2n^{1-\varepsilon}}{n-1}\pi,\pi-\frac{a+b+c+\frac{3}{4}}{n-1}\pi)$. Then in $(u^{(a)}u^{(b)}u^{(c)},u^{(d)})$ we have
\begin{equation}
\begin{split}
&|\sum\limits_{j=0}^{n-1}e^{i[((-1)^{j_1}p_{a,\pm}+(-1)^{j_2}p_{b,\pm}+(-1)^{j_3}p_{c,+\pm}+(-1)^{j_4}p_{d,\pm})+j((-1)^{j_1}\theta^{(a)}+(-1)^{j_2}\theta^{(b)}+(-1)^{j_3}\theta^{(c)}+(-1)^{j_4}\theta^{(d)})]}+c.c.|\\
=  &|\frac{\cos p_{\pm}^* -\cos(p_{\pm}^*+n\theta^*) -\cos(p_{\pm}^*+\theta^*)+ \cos(p_{\pm}^*+(n-1)\theta^*)}
        {1-\cos\theta^*}|\\
\leq & \frac{4}{2\sin^2(\frac{(-1)^{j_1}\Delta\theta^{(a)}+(-1)^{j_2}\Delta\theta^{(b)}+(-1)^{j_3}\Delta\theta^{(c)}+(-1)^{j_4}\Delta\theta^{(d)}}{2})}.
\end{split}
\end{equation}
where 
\begin{equation*}
\begin{split}
&|(-1)^{j_1}\Delta\theta^{(a)}+(-1)^{j_2}\Delta\theta^{(b)}+(-1)^{j_3}\Delta\theta^{(c)}+(-1)^{j_4}\Delta\theta^{(d)}| \\
\leq&|\Delta\theta^{(a)}+\Delta\theta^{(b)}+\Delta\theta^{(c)}+\Delta\theta^{(d)}|<\pi-\frac{n^{1-\varepsilon}/2}{n-1} \pi   
\end{split}
\end{equation*}
and 
\begin{equation*}
\begin{split}
&|(-1)^{j_1}\Delta\theta^{(a)}+(-1)^{j_2}\Delta\theta^{(b)}+(-1)^{j_3}\Delta\theta^{(c)}+(-1)^{j_4}\Delta\theta^{(d)}|\\
\geq&|-\Delta\theta^{(a)}-\Delta\theta^{(b)}-\Delta\theta^{(c)}+\Delta\theta^{(d)}|\gtrsim\frac{n^{1-\varepsilon}}{n}.
\end{split}
\end{equation*}
That is to say
$$
\sin^2(\frac{(-1)^{j_1}\Delta\theta^{(a)}+(-1)^{j_2}\Delta\theta^{(b)}+(-1)^{j_3}\Delta\theta^{(c)}+(-1)^{j_4}\Delta\theta^{(d)}}{2})\gtrsim n^{-2\varepsilon}
$$
hence 
\begin{equation}
 |(u^{(a)}u^{(b)}u^{(c)},u^{(d)})|\lesssim n^{2\varepsilon}.
\end{equation}

\end{proof}

\section{Appendix: Some inequalities}

\begin{proposition}
\label{prop:inequalities_ab_abc}
The following inequalities hold:
\begin{eqnarray}
\label{eq:inequality_E31*}
  &\sum\limits_{(a,b)\in \mathbb{E}_{3}^{1^*}(k)}
    \frac{1}{(a^{2}+2)(b^{2}+2)}
    <\frac{2\pi^{2}-8}{3}\cdot\frac{1}{(k+1)^{2}+2}<\frac{4}{(k+1)^{2}+2};\\
\label{eq:inequality_E21*}    
  &\sum\limits_{(a,b,c)\in\mathbb{E}_{2}^{1^*}(k)}
   \frac{1}{(a^{2}+2)(b^{2}+2)(c^{2}+2)}<(\frac{2\pi^2-8}{3})(\frac{4\pi^2-4}{3})\frac{1}{(k+1)^{2}+2}
   <\frac{48}{(k+1)^{2}+2};\\
\label{eq:inequality_E11}   
  &\sum\limits_{(a,b,c)\in
   \mathbb{E}_{1}^{1}(k)}
   \frac{1}{(a^{2}+2)(b^{2}+2)(c^{2}+2)}
   <(\frac{2\pi^2-2}{3})^2\frac{1}{k^2+2}<\frac{36}{k^{2}+2}.
\end{eqnarray}
\end{proposition}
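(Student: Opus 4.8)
The plan is to derive all three bounds from one elementary ``peel‑off'' step, applied once or twice. The step is this: if $S=\mu+\nu$ with $\mu\le\nu$ then $\nu\ge S/2$ and hence
\[
\frac{S^2+2}{\nu^2+2}<4,\qquad\text{since } S^2+2\le 4\nu^2+2<4(\nu^2+2).
\]
Splitting each tuple into a pair, pulling out the factor $(\nu^2+2)^{-1}$ of the larger member, and bounding the residual one‑variable sum by a tail of $\sum 1/(m^2+2)$ is the whole idea. The two tail estimates I will use are
\[
\sum_{m\ge 1}\frac1{m^2+2}<\frac13+\sum_{m\ge 2}\frac1{m^2}=\frac{\pi^2-4}{6},\qquad
\sum_{m\ge 0}\frac1{m^2+2}<\frac12+\frac{\pi^2-4}{6}=\frac{\pi^2-1}{6},
\]
and the identities $4\cdot\frac{\pi^2-4}{6}=\frac{2\pi^2-8}{3}$, $4\cdot\frac{\pi^2-1}{6}=\frac{2\pi^2-2}{3}$ explain the ``per‑convolution'' constants that appear; the final numerical comparisons $\frac{2\pi^2-8}{3}<4$, $(2\pi^2-8)(4\pi^2-4)<432$ and $\big(\tfrac{2\pi^2-2}{3}\big)^2<36$ all reduce to $\pi^2<10$.

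First I would prove \eqref{eq:inequality_E31*}: for $(a,b)\in\mathbb{E}_3^{1^*}(k)$ write $b=k+1-a\ge a$, so $b\ge (k+1)/2$ and $\frac{(k+1)^2+2}{b^2+2}<4$, whence
\[
\sum_{(a,b)\in\mathbb{E}_3^{1^*}(k)}\frac1{(a^2+2)(b^2+2)}
<\frac{4}{(k+1)^2+2}\sum_{a\ge 1}\frac1{a^2+2}
<\frac{2\pi^2-8}{3}\cdot\frac1{(k+1)^2+2}.
\]
Running the same argument with the entries allowed to be $0$ (the nonnegative analogue of $\mathbb{E}_3^{1^*}$) replaces $\sum_{m\ge1}$ by $\sum_{m\ge0}$ and produces instead the constant $\frac{2\pi^2-2}{3}$; I would record this as a separate lemma, since it is exactly what \eqref{eq:inequality_E11} needs.

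Next, for \eqref{eq:inequality_E21*} I would fix the unordered variable $z\in\{1,\ldots,k-1\}$ and apply \eqref{eq:inequality_E31*} with $k$ replaced by $k-z$ to the inner sum over $(x,y)\in\mathbb{E}_3^{1^*}(k-z)$, getting $\le\frac{2\pi^2-8}{3}\cdot\frac1{(k-z+1)^2+2}$; then I would peel off once more, noting that for each $z$ the larger of $z$ and $k+1-z$ exceeds $(k+1)/2$ (factor $<4$) while $\min(z,k+1-z)$ takes each value at most twice, so
\[
\sum_{z=1}^{k-1}\frac1{(z^2+2)\big((k+1-z)^2+2\big)}<\frac{4}{(k+1)^2+2}\cdot2\sum_{m\ge 1}\frac1{m^2+2}<\frac{4\pi^2-16}{3}\cdot\frac1{(k+1)^2+2}\le\frac{4\pi^2-4}{3}\cdot\frac1{(k+1)^2+2};
\]
combining gives \eqref{eq:inequality_E21*}. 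For \eqref{eq:inequality_E11} I would fix the smallest entry $x$ (so $x\le k/3$), apply the nonnegative variant to $\sum_{y+z=k-x,\;y\le z}\frac1{(y^2+2)(z^2+2)}\le\frac{2\pi^2-2}{3}\cdot\frac1{(k-x)^2+2}$, and then peel off using $k-x\ge 2k/3\ge k/2$ together with $\sum_{x\ge0}\frac1{x^2+2}<\frac{\pi^2-1}{6}$; since $4\cdot\frac{\pi^2-1}{6}=\frac{2\pi^2-2}{3}$ this yields $\big(\frac{2\pi^2-2}{3}\big)^2\frac1{k^2+2}$.

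The only genuinely fiddly part, and where I would be careful, is the combinatorial bookkeeping of the multiplicity factors (how many admissible tuples share a prescribed value of the ``small'' variable after each peel‑off) and the degenerate and small‑$k$ cases (equal entries, a pair collapsing to $(1,1)$, ranges shrinking to a point), so that the stated constants — which are deliberately generous relative to what the argument actually gives — are met uniformly in $k$. Everything else is just the two tail‑sum estimates and $\pi^2<10$.
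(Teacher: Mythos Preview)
Your proposal is correct and follows essentially the same route as the paper: bound the factor corresponding to the largest entry via $\nu\ge S/2\Rightarrow (\nu^2+2)^{-1}<4(S^2+2)^{-1}$, then control the remaining variable(s) by the tails $\sum_{m\ge1}(m^2+2)^{-1}<\tfrac{\pi^2-4}{6}$ and $\sum_{m\ge0}(m^2+2)^{-1}<\tfrac{\pi^2-1}{6}$, iterating once for the triple sums. The only cosmetic difference is that the paper reaches the constant $\tfrac{4\pi^2-4}{3}$ in \eqref{eq:inequality_E21*} by passing through the unordered set $\mathbb{E}_3^{0}$, whereas you obtain the slightly sharper $\tfrac{4\pi^2-16}{3}$ directly and then relax it.
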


\begin{proof}
For the first inequality \eqref{eq:inequality_E31*}, we have
\begin{equation}
\begin{split}
 &\sum_{(a,b)\in \mathbb{E}_{3}^{1^*}(k)}\frac{1}{(a^{2}+1)(b^{2}+1)}=\sum_{a=1}^{[\frac{k+1}{2}]}\frac{1}{(a^{2}+2)((k+1-a)^{2}+2)}\\
 &<\frac{1}{(\frac{k+1}{2})^{2}+2}[\frac{1}{1^{2}+2}+\frac{1}{2^{2}+2}+\cdots+\frac{1}{(\frac{k+1}{2})^{2}+2}]\\
 &<\frac{4}{(k+1)^{2}+2}(\frac{\pi^{2}}{6}-1+\frac{1}{3})<\frac{4}{(k+1)^{2}+2}.
\end{split}
\end{equation}
Immediately we also know 
\begin{equation}
\label{eq:inequalitiy_E31_E32}
\begin{split}
&\sum_{(a,b)\in \mathbb{E}_{3}^{0^*}(k)}\frac{1}{(a^{2}+2)(b^{2}+2)}
    \leq\sum_{(a,b)\in \mathbb{E}_{3}^{0}(k)}\frac{1}{(a^{2}+2)(b^{2}+2)} \\
    &<2\times 4\times(\frac{\pi^{2}}{6}-1+\frac{1}{3}+\frac{1}{2})\frac{1}{(k+1)^{2}+2}<\frac{12}{(k+1)^{2}+2}.
\end{split}
\end{equation}
The second inequality \eqref{eq:inequality_E21*} can be derived as: 
\begin{equation*}
\begin{split}
  &\sum_{(a,b,c)\in \mathbb{E}_{2}^{1^*}(k)}\frac{1}{(a^{2}+2)(b^{2}+2)(c^{2}+2)}=\sum_{c=1}^{k+1-2}\frac{1}{c^2+2}\sum_{(a,b)\in\mathbb{E}_3^{1^*}(k+1-c)}\frac{1}{(a^{2}+2)(b^{2}+2)}\\
  &<\frac{2\pi^2-8}{3}\sum_{c=1}^{k+1-2}\frac{1}{(c^2+2)((k+1-c)^2+2)}
<\frac{2\pi^2-8}{3}\sum_{(c,d)\in\mathbb{E}_3^{0^*}(k)}\frac{1}{(c^2+2)(d^2+2)}\\
  &<(\frac{2\pi^2-8}{3})(\frac{4\pi^2-4}{3})\frac{1}{(k+1)^2+2}<\frac{48}{(k+1)^2+2}.
\end{split}
\end{equation*}
And we prove the third inequality \eqref{eq:inequality_E11} as
\begin{equation*}
\begin{split}
  &\quad\sum_{(a,b,c)\in \mathbb{E}_{1}^{1}(k)}\frac{1}{(a^{2}+2)(b^{2}+2)(c^{2}+2)}=\sum_{a=0}^{[\frac{k}{3}]}\frac{1}{a^2+2}\sum_{b=a}^{[\frac{k-a}{2}]}\frac{1}{(b^{2}+2)((k-a-b)^{2}+2)}\\
  &<\frac{2\pi^2-2}{3}\sum_{a=0}^{[\frac{k}{3}]}\frac{1}{(a^2+2)((k-a)^2+2)}
  <(\frac{2\pi^2-2}{3})^2\frac{1}{k^2+2}<\frac{36}{k^2+2}.
\end{split}
\end{equation*}
\end{proof}

\end{appendices}

\end{document}